\documentclass[
aps,pra,twocolumn,10pt,
superscriptaddress,longbibliography,floatfix
]{revtex4-2}
\usepackage{amsmath,amssymb,amsthm,amsfonts}
\usepackage{microtype}
\usepackage{orcidlink}
\usepackage[dvipsnames]{xcolor}
\usepackage{graphicx}
\usepackage{url}
\usepackage{CJKutf8}
\usepackage{tikz}
\usepackage{tikz-3dplot}
\usetikzlibrary{arrows.meta}
\usetikzlibrary{positioning,arrows.meta,shapes.geometric}

\allowdisplaybreaks
\DeclareMathOperator{\Tr}{Tr}
\DeclareMathOperator{\diag}{diag}
\DeclareMathOperator{\rank}{rank}
\DeclareMathOperator{\Herm}{Herm}
\DeclareMathOperator{\Span}{Span}

\DeclareMathOperator{\im}{Im}
\DeclareMathOperator{\id}{id}

\newtheorem{theorem}{Theorem}[section]
\newtheorem{corollary}[theorem]{Corollary}
\newtheorem{lemma}[theorem]{Lemma}
\newtheorem{proposition}[theorem]{Proposition}
\newtheorem{definition}[theorem]{Definition}
\newtheorem{remark}[theorem]{Remark}
\newtheorem{example}[theorem]{Example}

\usepackage{hyperref}
\hypersetup{breaklinks=true, pdfborder={0 0 0}, colorlinks=true,
citecolor=magenta, linkcolor=blue, urlcolor=violet}

\begin{document}

\title{Operational Concealment of Measurement Incompatibility: Rank Loss versus Contraction}

\author{Mohd Asad Siddiqui\;\orcidlink{0000-0001-5003-7571}}
\email{asad@ctp-jamia.res.in}
\affiliation{Jawaharlal Nehru Rajkeeya Mahavidyalaya, Sri Vijaya Puram,
744104, Andaman and Nicobar Islands, India}

\author{Zizhu Wang \begin{CJK}{UTF8}{gbsn}(王子竹)\end{CJK}\;\orcidlink{0000-0003-1669-4548}}
\email{zizhu@uestc.edu.cn}
\affiliation{Institute of Fundamental and Frontier Sciences, University of Electronic Science and Technology of China, 611731 Chengdu, China.}
\affiliation{Key Laboratory of Quantum Physics and Photonic Quantum Information (University of Electronic Science and Technology of China), Ministry of Education, 611731, Chengdu, China}

\begin{abstract}
Quantum measurements can remain incompatible yet become impossible to certify when all probe states first pass through a quantum channel. We formulate this loss of certifiability as operational concealment and identify its exact finite-dimensional threshold. Under tomographically complete input access, a channel conceals some incompatible binary POVM pair if and only if its Heisenberg adjoint has nontrivial kernel. Thus, in this channel-restricted single-use setting, rank loss rather than contraction is the universal transition for exact concealment: injective dissipative dynamics may render the effective measurements compatible while leaving exact concealment impossible.
We quantify this distinction through a minimum-gain lower bound and show that regular time-local dynamics cannot produce exact concealment at finite time. We then classify all qubit channels whose adjoint fibres admit a positive unital retraction, reducing concealment to ordinary joint measurability of canonical representatives. For orthogonal projected qubit measurements we obtain an exact robustness formula and a strict separation from ordinary incompatibility robustness. Finally, quotient-space duality expresses this robustness as an experimentally accessible additive advantage in positive-payoff channel-output games. We also prove invariance under adjoining a passive finite-dimensional reference system for lifted target measurements, even when arbitrary joint compatible simulators are allowed on the enlarged output space. These results establish rank-sensitive criteria for certifying measurement incompatibility under restricted channel access.
\end{abstract}

\maketitle

\section{Introduction}
\label{sec:intro}

Measurement incompatibility---the impossibility of realizing multiple quantum measurements as marginals of a single joint measurement---is a fundamental nonclassical feature underlying steering and nonlocality \cite{Wolf2009,Heinosaari2016,Guhne2023}, and a resource for operational quantum-information advantages \cite{Uola2019,Buscemi2020}. Standard compatibility tests assume rich control over probe states, enabling complete tomography of the measurement statistics. In practice, however, preparation noise or intervening dynamics restricts the available test states to the image of a quantum channel. Even when the measurements remain mathematically incompatible, their incompatibility may then become operationally uncertifiable.

Imagine preparing a tomographically complete family of states, sending them through a noisy channel, and measuring the outputs. The central question is whether an incompatible measurement pair can become statistically indistinguishable from a compatible pair on every channel-accessible output state and, if so, what channel property enables such concealment.

This restriction leads to two operationally distinct problems. The first is whether the statistics of an incompatible POVM family on every channel output can be reproduced by a compatible family; we call this \emph{exact operational concealment}. The second is whether the Heisenberg-picture measurements $\mathcal E^\dagger(M_{a|x})$ themselves become compatible; we call this \emph{effective compatibility}. The distinction is structural: concealment concerns compatible representatives in affine fibres of $\mathcal E^\dagger$, whereas effective compatibility concerns the positive-unital action of $\mathcal E^\dagger$ on effects. The natural observable space for channel-restricted tests is the quotient $\Herm(\mathcal H_{\rm out})/\ker(\mathcal E^\dagger)$, whose elements collect effects indistinguishable on all channel outputs. This quotient formulation makes rank loss and contraction sharply distinguishable: the former changes affine fibres, while the latter can change joint measurability without creating hidden observable directions.

For static channels we use adjoint noninjectivity or rank deficiency; in dynamics, where a nontrivial kernel develops from an initially invertible map, we call this rank loss.

\subsection{Main results and contributions}

Our results establish a channel-specific structural and quantitative theory of operational concealment:

\begin{enumerate}
\item \textbf{Exact certifiability transition.}
Under tomographically complete input access, a finite-dimensional channel can conceal some incompatible binary measurement pair if and only if its adjoint is noninjective. Consequently, regular time-local dynamics cannot generate exact concealment at finite time, although contraction can make the effective measurements compatible. A minimum-gain bound quantifies this separation before rank loss. These conclusions are invariant under arbitrary passive finite-dimensional ancillas for lifted target measurements, including arbitrary joint compatible representatives on the enlarged output space.

\item \textbf{Canonical representatives and exact geometry.}
We characterize all qubit channels whose adjoint fibres admit a positive unital retraction. Whenever such a retraction exists, concealment reduces exactly to ordinary compatibility of canonical representatives. This gives an exact robustness formula for orthogonal projected qubit measurements and a strict separation from ordinary incompatibility robustness.

\item \textbf{Operational quotient theory.}
We derive the exact quotient dual, prove strong duality and attainment, and show that its witnesses are evaluable using only channel-output statistics. Concealment robustness equals an additive advantage in shift-invariant positive-payoff games.
\end{enumerate}

\paragraph*{Advance beyond prior work.}
Previous work established that compatibility relative to a restricted state set depends on the real linear span of the accessible states. The present contribution is not merely to specialize that observation to channel images. We identify adjoint noninjectivity as the universal structural condition enabling exact loss of incompatibility certifiability, prove that contraction while the adjoint remains injective cannot produce this effect, derive finite-time quantitative obstructions for regular dynamics, classify the qubit channels for which adjoint fibres admit positive canonical representatives, and develop an exact robustness and operational duality on the resulting quotient space. Incompatibility-breaking and preservability frameworks instead concern compatibility of the transformed measurements $\mathcal E^\dagger(M_{a|x})$ and therefore address a distinct operational layer.

\paragraph*{Scope and assumptions.}
We work in finite dimensions with finite-outcome measurements, exact channel knowledge, tomographically complete single-use input access, and exact statistical equivalence unless an approximation parameter is introduced explicitly. The universal noninjectivity criterion and the minimum-gain obstruction hold under these general assumptions. The positive-retraction reduction requires additional structure, and the closed robustness formula applies to orthogonal projected binary qubit measurements. Ancilla-assisted and finite-sample consequences are discussed separately.

Compatibility relative to restricted state sets was developed in Ref.~\cite{Heinosaari2021FewStates}. Subspace-restricted compatibility and compatibility dimension were studied in Refs.~\cite{Loulidi2021,Uola2021Subspaces}, while incompatibility in restricted operational theories was analysed in Ref.~\cite{Plavala2022}. The present channel setting adds the linear structure that the annihilator of the accessible state span is exactly $\ker(\mathcal E^\dagger)$.

Recent work on compatibility regions under decoherence \cite{McNulty2026}, incompatibility-breaking channels \cite{Heinosaari2015IB}, and the dynamical resource theory of incompatibility preservability \cite{Hsieh2025} concerns channel-level destruction or preservation of effective incompatibility. Exact concealment instead asks whether a specified original measurement pair possesses compatible representatives in its adjoint fibres. A recent classification under classical pre- and post-processing \cite{Das2026} provides a complementary non-channel comparison.

A single input state cannot certify incompatibility \cite{Heinosaari2021FewStates}: for any fixed output state, compatible proportional-to-identity POVMs can reproduce the statistics of any finite-outcome pair (Appendix~\ref{app:single_state}). We therefore assume tomographically complete input access unless explicitly stated otherwise.

Complete dephasing exemplifies the core mechanism: rank loss in the adjoint creates operational equivalence classes containing compatible representatives. The Pauli $X$ effects are operationally equivalent to the trivial effects $I/2$, while the Pauli $Z$ effects remain unchanged. Consequently, the incompatible $(X,Z)$ pair is indistinguishable from a compatible pair on every dephased output; see Fig.~\ref{fig:concealment_geometry}.

\section{Operational Framework}
\label{sec:framework}

In this section we present the mathematical background and standing assumptions, establish the fundamental adjoint characterization of concealment, and introduce the distinction between affine fibres and order-theoretic structures that underlies the entire analysis.

\subsection{Definitions and standing assumptions}

All Hilbert spaces in this work are finite-dimensional. Since $\mathcal{E}^\dagger$ is complex-linear and preserves Hermiticity, injectivity on the full operator space is equivalent to injectivity on the Hermitian subspace. We write $\ker(\mathcal E^\dagger)$ to denote the restriction of the adjoint kernel to the Hermitian operator space:
\[
\ker(\mathcal E^\dagger) = \{X \in \Herm(\mathcal H_{\mathrm{out}}) : \mathcal E^\dagger(X)=0\}.
\]

Positive-operator-valued measures (POVMs) $\{M_a\}_{a\in\Omega}$ satisfy $M_a \ge 0$, $\sum_a M_a = I$ \cite{Busch2016,Heinosaari2016,Guhne2023}. Two POVMs $\{M_a\}_{a\in\Omega}$ and $\{N_b\}_{b\in\Lambda}$ are \emph{compatible} if there exists a joint POVM $\{J_{ab}\}$ with $\sum_b J_{ab} = M_a$, $\sum_a J_{ab} = N_b$ \cite{Busch2016,Heinosaari2016,Guhne2023}.

For a convex set $C$, we denote by $\operatorname{ri}(C)$ its \emph{relative interior}, i.e., the interior of $C$ when $C$ is regarded as a subset of its affine hull.

\paragraph*{Notation.}
Throughout the paper, $(M,N)$ denotes the original POVM pair, $(F,G)$ compatible representatives of the operational equivalence classes, and $\{J_{ab}\}$ a joint POVM with marginals $\sum_bJ_{ab}=F_a$ and $\sum_aJ_{ab}=G_b$. The adjoint channel is denoted by $\mathcal E^\dagger$, and for any POVM $\{M_a\}$, the POVM $\{\mathcal E^\dagger(M_a)\}$ is called the corresponding effective measurement.

A quantum channel $\mathcal{E}: \mathcal{L}(\mathcal{H}_{\mathrm{in}}) \to \mathcal{L}(\mathcal{H}_{\mathrm{out}})$ is a completely positive and trace preserving (CPTP) map. Its adjoint $\mathcal{E}^\dagger: \mathcal{L}(\mathcal{H}_{\mathrm{out}}) \to \mathcal{L}(\mathcal{H}_{\mathrm{in}})$ is completely positive and unital, and is defined by
\[
\Tr[\mathcal{E}(\rho) A] = \Tr[\rho \,\mathcal{E}^\dagger(A)]
\]
for all states $\rho$ and observables $A$. This identity expresses the equivalence between the Schr\"odinger and Heisenberg pictures: evolving states forward under $\mathcal E$ is equivalent to evolving observables backward under $\mathcal E^\dagger$.

Consequently, every POVM $\{M_a\}$ on the output system induces the POVM $\{\mathcal{E}^\dagger(M_a)\}$ on the input system through the Heisenberg action of the channel. We shall refer to $\{\mathcal E^\dagger(M_a)\}$ as the \emph{effective measurement induced by} $\mathcal E$.

The kernel of the adjoint channel plays a central role in our analysis. It consists precisely of those hidden observable directions that are inaccessible through measurements performed after the channel. We say that a channel has an \emph{injective adjoint} if
\[
\mathcal{E}^\dagger(X) = 0 \Rightarrow X = 0
\]
for $X \in \mathcal{L}(\mathcal{H}_{\mathrm{out}})$. Equivalently, $\ker(\mathcal{E}^\dagger) = \{0\}$.

A set $\mathcal{T} \subseteq \mathcal{S}(\mathcal{H}_{\mathrm{in}})$ is \emph{tomographically complete} if its real linear span (over $\mathbb{R}$) equals the Hermitian operators on $\mathcal{H}_{\mathrm{in}}$ \cite{ParisRehacek2004}. Tomographic completeness allows equality of measurement statistics to be promoted to corresponding operator equalities involving the adjoint channel.

\begin{remark}[Standing assumptions]
\label{rem:assumptions}
Unless otherwise stated, we assume:
(i) finite-dimensional Hilbert spaces,
(ii) finite-outcome POVMs,
(iii) exact tomography, and
(iv) tomographic completeness of the input family $\mathcal T$.
All operator equalities are understood on the corresponding finite-dimensional Hermitian spaces.
\end{remark}

\subsection{Adjoint characterization}

\begin{definition}[Operational concealment]
\label{def:family}
Let $\mathcal{T} \subseteq \mathcal{S}(\mathcal{H}_{\mathrm{in}})$ be a set of input states. POVMs $\{M_a\},\{N_b\}$ are \emph{concealed} by a quantum channel $\mathcal{E}$ relative to $\mathcal{T}$ if there exist compatible POVMs $\{F_a\},\{G_b\}$ such that for all $\rho \in \mathcal{T}$,
$\Tr[F_a \mathcal{E}(\rho)] = \Tr[M_a \mathcal{E}(\rho)]$ and $\Tr[G_b \mathcal{E}(\rho)] = \Tr[N_b \mathcal{E}(\rho)]$.
\end{definition}

In this work, our primary focus is the case where $\mathcal{T}$ is tomographically complete. Under this assumption, operational concealment admits an exact characterization in terms of the adjoint channel.

\begin{theorem}[Adjoint characterization of concealment]
\label{thm:necessary_sufficient}
Under the assumptions of Remark~\ref{rem:assumptions}, $\{M_a\},\{N_b\}$ are concealed by $\mathcal{E}$ if and only if there exist compatible POVMs $\{F_a\},\{G_b\}$ with $\mathcal{E}^\dagger(F_a) = \mathcal{E}^\dagger(M_a)$ and $\mathcal{E}^\dagger(G_b) = \mathcal{E}^\dagger(N_b)$ for every outcome $a$ and $b$.
\end{theorem}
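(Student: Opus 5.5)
The plan is to translate the statistical equalities of Definition~\ref{def:family} into operator equalities using the duality relation $\Tr[\mathcal E(\rho)A]=\Tr[\rho\,\mathcal E^\dagger(A)]$, and then let tomographic completeness of $\mathcal T$ carry the argument. The same compatible pair $(F,G)$ will serve as the witness in both directions, so no construction is needed. All that must be checked is that the two sets of conditions on $(F,G)$ are equivalent.

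For the direction ``$\Leftarrow$'', suppose compatible POVMs $\{F_a\},\{G_b\}$ satisfy $\mathcal E^\dagger(F_a)=\mathcal E^\dagger(M_a)$ and $\mathcal E^\dagger(G_b)=\mathcal E^\dagger(N_b)$. For any $\rho\in\mathcal T$ (indeed any state) we have
\[
\Tr[F_a\mathcal E(\rho)]=\Tr[\rho\,\mathcal E^\dagger(F_a)]=\Tr[\rho\,\mathcal E^\dagger(M_a)]=\Tr[M_a\mathcal E(\rho)],
\]
and the same holds for $G_b$ and $N_b$. Hence $(M,N)$ is concealed. This direction does not use tomographic completeness.

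For the direction ``$\Rightarrow$'', take the compatible $(F,G)$ supplied by Definition~\ref{def:family} and set $X_a=\mathcal E^\dagger(F_a-M_a)$. Since $F_a-M_a$ is Hermitian and $\mathcal E^\dagger$ preserves Hermiticity, $X_a\in\Herm(\mathcal H_{\rm in})$. The hypothesis together with the duality relation gives $\Tr[\rho X_a]=0$ for all $\rho\in\mathcal T$. By linearity, $\Tr[YX_a]=0$ for every $Y$ in the real span of $\mathcal T$, which by tomographic completeness is all of $\Herm(\mathcal H_{\rm in})$. Choosing $Y=X_a$ gives $\Tr[X_a^2]=0$, so $X_a=0$, because the Hilbert--Schmidt inner product is positive definite on Hermitian operators. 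The identical argument applied to $G_b-N_b$ gives the second set of equalities.

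The only point needing care is this last step, passing from vanishing expectations on $\mathcal T$ to vanishing of the operator. It rests on the real span being all Hermitian operators and on $X_a$ itself being Hermitian, so that it can be used as a test operator. Both follow directly from the standing assumptions in Remark~\ref{rem:assumptions}, so I expect no real obstacle. Finite dimensionality is used only implicitly, to identify the span with the full Hermitian space without taking closures.
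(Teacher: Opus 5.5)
Your proposal is correct and follows the paper's proof essentially step for step: sufficiency comes directly from the Schr\"odinger--Heisenberg duality. Necessity comes from showing that $\mathcal E^\dagger(F_a-M_a)$ is a Hermitian operator annihilated by the real span of $\mathcal T$, hence zero by tomographic completeness; your explicit choice $Y=X_a$ just spells out the nondegeneracy of the Hilbert--Schmidt pairing that the paper invokes.
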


\begin{proof}
\emph{Necessity.} Concealment gives
\begin{equation}
\Tr[\rho\,\mathcal E^\dagger(F_a-M_a)]=0,
\qquad \rho\in\mathcal T.
\end{equation}
Since $F_a-M_a$ is Hermitian and $\mathcal{E}^\dagger$ preserves Hermiticity, $\mathcal{E}^\dagger(F_a-M_a)$ is Hermitian. By tomographic completeness, $\operatorname{span}_{\mathbb{R}}(\mathcal{T}) = \Herm(\mathcal{H}_{\mathrm{in}})$. Since the Hilbert-Schmidt inner product is nondegenerate on $\Herm(\mathcal{H}_{\mathrm{in}})$, it follows that $\mathcal{E}^\dagger(F_a-M_a) = 0$.

\emph{Sufficiency.} If $\mathcal{E}^\dagger(F_a) = \mathcal{E}^\dagger(M_a)$, then
\begin{align*}
\Tr[F_a\mathcal E(\rho)]
&=\Tr[\rho\,\mathcal E^\dagger(F_a)]\\
&=\Tr[\rho\,\mathcal E^\dagger(M_a)]
=\Tr[M_a\mathcal E(\rho)].
\end{align*}
The same argument gives $\Tr[G_b\mathcal{E}(\rho)] = \Tr[N_b\mathcal{E}(\rho)]$ for all $\rho$. Hence the compatible POVMs $\{F_a\},\{G_b\}$ reproduce the statistics of $\{M_a\},\{N_b\}$, establishing concealment.
\end{proof}

Since the equality holds on the tomographically complete set $\mathcal T$, linearity extends it to the real span of $\mathcal T$, namely $\Herm(\mathcal H_{\mathrm{in}})$, and hence to all density operators.

\subsection{Affine fibres and effective measurements}

A central distinction in this work is between two different mathematical structures that appear in the analysis:

\begin{enumerate}
\item \textbf{Affine geometry:} The vector space $\Herm(\mathcal H_{\mathrm{out}})$ and its quotient by $\ker(\mathcal E^\dagger)$. This governs exact concealment---whether compatible representatives exist in affine fibres. Only the linear structure matters.

\item \textbf{Order theory:} The positive cone $\mathcal P(\mathcal H) = \{X \in \Herm(\mathcal H) : X \ge 0\}$ and its image under $\mathcal E^\dagger$. This governs effective compatibility---whether $\mathcal E^\dagger(M)$ and $\mathcal E^\dagger(N)$ are jointly measurable on the input space. The full positive unital structure of $\mathcal E^\dagger$ matters.
\end{enumerate}

Exact concealment asks: does the affine fibre $M + \ker(\mathcal E^\dagger)$ intersect the set of effects that are marginals of a joint POVM? Effective compatibility asks: are $\mathcal E^\dagger(M)$ and $\mathcal E^\dagger(N)$ jointly measurable on $\mathcal H_{\mathrm{in}}$?

This distinction explains why the two layers are inequivalent. We will return to this point in the Discussion.

Theorem~\ref{thm:necessary_sufficient} induces the equivalence relation
\[
A\sim B
\quad\Longleftrightarrow\quad
A-B\in\ker(\mathcal E^\dagger).
\]
Because the adjoint kernel is a linear subspace, $\sim$ is an equivalence relation. Under tomographically complete input access, $A\sim B$ holds exactly when $A$ and $B$ produce identical statistics on every channel output. The operationally accessible observable space is therefore the quotient
\[
\mathsf Q_{\mathcal E}
:=
\Herm(\mathcal H_{\rm out})/\ker(\mathcal E^\dagger).
\]
The First Isomorphism Theorem gives the natural vector-space isomorphism
\begin{equation}
\mathsf Q_{\mathcal E}
\cong
\im(\mathcal E^\dagger),
\qquad
[M]\longmapsto\mathcal E^\dagger(M).
\label{eq:isomorphism}
\end{equation}
For a fixed outcome set, an operational POVM is a tuple of quotient classes admitting normalized positive representatives. These tuples form a convex set. We denote the canonical quotient map, extended componentwise to POVMs, by
\[
\pi:\Herm(\mathcal H_{\rm out})\to\mathsf Q_{\mathcal E},
\qquad M\mapsto[M].
\]

\begin{figure}[tbp]
\centering
\begin{tikzpicture}[
every node/.style={font=\small,align=center},
box/.style={
draw,
rounded corners,
minimum width=2.8cm,
minimum height=1.0cm,
inner sep=4pt
}
]

\node[box] (M) at (0,0)
{Original\\POVM pair};

\node[box] (Eff) at (4.2,0)
{Effective\\measurements};

\node[box] (Fib) at (0,-2.4)
{Operational\\equivalence\\classes};

\draw[->,thick]
(M) -- node[above] {$\mathcal E^\dagger$} (Eff);

\draw[->,thick]
(M) -- node[left] {$\pi$} (Fib);

\draw[->,thick]
(Fib) -- node[pos=0.55,right,font=\scriptsize] {$\overline{\mathcal E^\dagger}$} (Eff);

\end{tikzpicture}

\caption{
Conceptual structure of operational concealment.
The quotient map $\pi$ sends a POVM pair to its operational equivalence classes $[M_a]=M_a+\ker(\mathcal E^\dagger)$ and $[N_b]=N_b+\ker(\mathcal E^\dagger)$. The induced map $\overline{\mathcal E^\dagger}$ identifies the quotient with $\im(\mathcal E^\dagger)$. Concealment occurs precisely when the two classes admit compatible representatives; it implies compatibility of the effective measurements, but not conversely.
}
\label{fig:concealment_diagram}
\end{figure}

Fig.~\ref{fig:concealment_diagram} summarizes the logical relation between the original POVMs, their operational equivalence classes, and the effective measurements. The lower arrow is the induced quotient map $\overline{\mathcal E^\dagger}$ from Eq.~\eqref{eq:isomorphism}; the concealment condition is the existence of compatible representatives within the two operational fibres.

\subsection{Relation to restricted-state compatibility}

\begin{proposition}[Concealment as restricted-state compatibility]
\label{prop:restricted_state}
Under the standing assumptions of Remark~\ref{rem:assumptions}, let
\[
\mathcal S_{\mathcal E}:=\{\mathcal E(\rho):\rho\in\mathcal S(\mathcal H_{\rm in})\}
\]
be the channel-accessible state set. Then $\{M_a\},\{N_b\}$ are concealed by $\mathcal E$ if and only if they are $\mathcal S_{\mathcal E}$-compatible, i.e., there exist compatible POVMs $\{F_a\},\{G_b\}$ such that
\begin{align*}
\Tr[F_a\sigma]&=\Tr[M_a\sigma],\\
\Tr[G_b\sigma]&=\Tr[N_b\sigma],
\qquad \sigma\in\mathcal S_{\mathcal E}.
\end{align*}
\end{proposition}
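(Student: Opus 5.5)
The plan is to pass through Theorem~\ref{thm:necessary_sufficient} and show that $\mathcal S_{\mathcal E}$-compatibility is equivalent to the same adjoint-fibre condition. Concretely, I will prove that both sides of the proposition are equivalent to the existence of compatible POVMs $\{F_a\},\{G_b\}$ with $\mathcal E^\dagger(F_a)=\mathcal E^\dagger(M_a)$ and $\mathcal E^\dagger(G_b)=\mathcal E^\dagger(N_b)$ for all outcomes $a$ and $b$.

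For the left-hand side, the equivalence is exactly Theorem~\ref{thm:necessary_sufficient}, since $\mathcal T$ is tomographically complete by Remark~\ref{rem:assumptions}.

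For the right-hand side, suppose first that $\Tr[F_a\sigma]=\Tr[M_a\sigma]$ for every $\sigma\in\mathcal S_{\mathcal E}$. Writing $\sigma=\mathcal E(\rho)$ gives $\Tr[\rho\,\mathcal E^\dagger(F_a-M_a)]=0$ for every density operator $\rho$ on $\mathcal H_{\rm in}$. The set $\mathcal S(\mathcal H_{\rm in})$ is itself tomographically complete, since it spans $\Herm(\mathcal H_{\rm in})$ over $\mathbb R$. Nondegeneracy of the Hilbert--Schmidt pairing on Hermitian operators then forces $\mathcal E^\dagger(F_a-M_a)=0$, and the same argument applies to each $G_b-N_b$.

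Conversely, if the adjoint equalities hold, then $\Tr[F_a\mathcal E(\rho)]=\Tr[\rho\,\mathcal E^\dagger(F_a)]=\Tr[\rho\,\mathcal E^\dagger(M_a)]=\Tr[M_a\mathcal E(\rho)]$ for all $\rho$. Since every $\sigma\in\mathcal S_{\mathcal E}$ has the form $\mathcal E(\rho)$, the statistics agree on all of $\mathcal S_{\mathcal E}$. Chaining the two equivalences proves the proposition.

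An alternative route is more geometric and shows directly that the two state sets have the same real span. The span of $\{\mathcal E(\rho):\rho\in\mathcal T\}$ is $\mathcal E(\operatorname{span}_{\mathbb R}\mathcal T)=\mathcal E(\Herm(\mathcal H_{\rm in}))$, which equals $\operatorname{span}_{\mathbb R}\mathcal S_{\mathcal E}$. Agreement of affine functionals on a set is equivalent to agreement on its span, so the two restricted-compatibility conditions coincide. Either route works, and I would present the first because it reuses Theorem~\ref{thm:necessary_sufficient}.

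There is no real obstacle here. The only point needing care is that $\mathcal S_{\mathcal E}$ consists of the images of \emph{all} input states, so the tomographic-completeness step applies to $\mathcal S(\mathcal H_{\rm in})$ regardless of the particular $\mathcal T$. The compatible representatives $F,G$ live on $\mathcal H_{\rm out}$ in both formulations, so no change of the POVMs is needed between the two sides.
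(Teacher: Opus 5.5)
Your proof is correct and essentially follows the paper's approach: the forward direction goes through Theorem~\ref{thm:necessary_sufficient} exactly as the paper does. For the converse, the paper is slightly more direct than your chain of equivalences: it observes that $\mathcal T\subseteq\mathcal S(\mathcal H_{\rm in})$ gives $\{\mathcal E(\rho):\rho\in\mathcal T\}\subseteq\mathcal S_{\mathcal E}$, so the defining equalities of concealment hold immediately, without first re-deriving the adjoint equalities from tomographic completeness of $\mathcal S(\mathcal H_{\rm in})$.
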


\begin{proof}
If the pair is concealed relative to $\mathcal T$, Theorem~\ref{thm:necessary_sufficient} gives $\mathcal E^\dagger(F_a-M_a)=0$ and $\mathcal E^\dagger(G_b-N_b)=0$. Hence the statistical equalities hold for every input state and therefore for every $\sigma\in\mathcal S_{\mathcal E}$. Conversely, $\mathcal S_{\mathcal E}$-compatibility implies the defining equalities on $\{\mathcal E(\rho):\rho\in\mathcal T\}$.
\end{proof}

\section{Rank-Loss Classification, Approximate Bounds, and Dynamical Separation}
\label{sec:classification}

We now ask when channel noise can exactly hide at least one incompatible measurement pair. The answer is a nontrivial adjoint kernel, corresponding to an observable direction invisible on all channel outputs. We then derive approximate bounds and channel ordering, contrast concealment with effective compatibility, and establish a finite-time dynamical obstruction.

\subsection{Universal noninjectivity criterion}

\begin{corollary}[Injective adjoint forbids concealment]
\label{cor:invertible}
If $\mathcal{E}^\dagger$ is injective (equivalently, $\ker(\mathcal{E}^\dagger)=\{0\}$), then a POVM pair can be concealed by $\mathcal E$ only if it is already compatible.
\end{corollary}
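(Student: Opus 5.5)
The plan is a direct application of Theorem~\ref{thm:necessary_sufficient}. Assume $\ker(\mathcal E^\dagger)=\{0\}$ and suppose the pair $\{M_a\},\{N_b\}$ is concealed by $\mathcal E$ relative to a tomographically complete $\mathcal T$. Theorem~\ref{thm:necessary_sufficient} then provides compatible POVMs $\{F_a\},\{G_b\}$ with
\[
\mathcal E^\dagger(F_a)=\mathcal E^\dagger(M_a)
\quad\text{and}\quad
\mathcal E^\dagger(G_b)=\mathcal E^\dagger(N_b)
\]
for every outcome.

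The core step is to rewrite these equalities as $\mathcal E^\dagger(F_a-M_a)=0$ and $\mathcal E^\dagger(G_b-N_b)=0$. The differences $F_a-M_a$ and $G_b-N_b$ are Hermitian operators on $\mathcal H_{\rm out}$, so they lie in the domain on which $\ker(\mathcal E^\dagger)$ is defined. Injectivity then forces $F_a=M_a$ and $G_b=N_b$ for all $a,b$. In the language of the quotient $\mathsf Q_{\mathcal E}$, each fibre $M_a+\ker(\mathcal E^\dagger)$ is the singleton $\{M_a\}$.

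It remains to transfer compatibility. Take the joint POVM $\{J_{ab}\}$ of $(F,G)$. Its marginals are $\sum_b J_{ab}=F_a=M_a$ and $\sum_a J_{ab}=G_b=N_b$, so $\{J_{ab}\}$ is a joint POVM for $(M,N)$ itself. Hence the original pair is compatible. Equivalently, by contraposition, an injective adjoint conceals no incompatible pair.

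I do not expect a real obstacle here. The one point to state carefully is that injectivity on the full operator space $\mathcal L(\mathcal H_{\rm out})$ and on $\Herm(\mathcal H_{\rm out})$ coincide, because $\mathcal E^\dagger$ is Hermiticity-preserving and complex-linear, as noted in the standing definitions. This justifies applying the kernel condition to the Hermitian differences. The tomographic-completeness hypothesis is used only through Theorem~\ref{thm:necessary_sufficient}, which promotes statistical equality to operator equality under $\mathcal E^\dagger$.
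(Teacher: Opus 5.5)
Your proposal is correct and follows the paper's proof: apply Theorem~\ref{thm:necessary_sufficient} to obtain compatible representatives, then use injectivity to conclude $F_a=M_a$ and $G_b=N_b$. Your explicit check that the joint POVM of $(F,G)$ is a joint POVM for $(M,N)$, and that restricting injectivity to Hermitian operators is harmless, only spell out what the paper leaves implicit.
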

\begin{proof}
By Theorem~\ref{thm:necessary_sufficient}, concealment would imply the existence of compatible POVMs $\{F_a\},\{G_b\}$ satisfying $\mathcal{E}^\dagger(F_a) = \mathcal{E}^\dagger(M_a)$ and $\mathcal{E}^\dagger(G_b) = \mathcal{E}^\dagger(N_b)$. Injectivity implies $F_a = M_a$ and $G_b = N_b$ for all outcomes. Hence the compatible representatives coincide with the original POVMs. Therefore concealment is possible only if $\{M_a\}$ and $\{N_b\}$ are already compatible.
\end{proof}

The adjoint characterization immediately gives a necessary condition: if the adjoint is injective, every operational equivalence class is a singleton, so concealment can only reproduce compatibility already present in the original pair. The converse is less obvious. It requires showing that every nonzero hidden Hermitian direction can be converted into an incompatible binary POVM that is operationally equivalent to a trivial one. Together, the two directions yield the central structural result of the paper.

\begin{theorem}[Universal concealment criterion]
\label{thm:universal_concealment}
Under the standing tomographic-completeness assumption (Remark~\ref{rem:assumptions}), let $\mathcal E$ be a finite-dimensional quantum channel. Then $\mathcal E$ conceals, relative to any tomographically complete input family, at least one incompatible pair of binary POVMs if and only if
\[
\ker(\mathcal E^\dagger)\neq\{0\}.
\]
\end{theorem}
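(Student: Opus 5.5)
The plan is to prove the two directions separately. Necessity is already contained in Corollary~\ref{cor:invertible}; sufficiency will come from an explicit binary pair built out of a nonzero element of $\ker(\mathcal E^\dagger)$. Throughout, Theorem~\ref{thm:necessary_sufficient} lets me work purely with adjoint equalities. It also makes the statement independent of which tomographically complete family $\mathcal T$ is used.

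\emph{Necessity.} Suppose $\ker(\mathcal E^\dagger)=\{0\}$ and some pair is concealed. Corollary~\ref{cor:invertible} says that pair is already compatible. Hence no incompatible pair can be concealed, which is the contrapositive of the ``only if'' direction.

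\emph{Sufficiency.} Take a nonzero $K\in\ker(\mathcal E^\dagger)$, Hermitian by our convention. The first observation is that $K$ cannot be a scalar multiple of the identity. Indeed, if $K=cI$ with $c\neq0$, unitality would give $\mathcal E^\dagger(K)=cI\neq0$. Choose $\varepsilon>0$ with $\varepsilon\|K\|\le 1/2$, and define the binary POVM
\[
M_\pm=\tfrac12 I\pm\varepsilon K .
\]
Both effects are positive and they sum to $I$. Since $\mathcal E^\dagger(M_\pm)=\tfrac12 I=\mathcal E^\dagger(\tfrac12 I)$, the class $[M]$ contains the trivial POVM $F_\pm=\tfrac12 I$. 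For the second measurement, take a sharp binary POVM $N=\{P,I-P\}$, where $P$ is a rank-one projector with $[P,K]\neq0$. Such a $P$ exists because the only operators commuting with every rank-one projector are scalars, and $K$ is not a scalar. Set $G=N$. The trivial POVM $F$ is compatible with any POVM, for instance via $J_{\pm,b}=\tfrac12 G_b$. Theorem~\ref{thm:necessary_sufficient} then shows that $(M,N)$ is concealed relative to every tomographically complete $\mathcal T$.

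The step I expect to need the most care is proving that $(M,N)$ is genuinely incompatible. I would use the standard fact that an effect is jointly measurable with a sharp (projective) POVM if and only if it commutes with its projections. This holds because any joint POVM $J_{ab}$ with marginal $P$ satisfies $J_{ab}\le P$ or $J_{ab}\le I-P$ in each column. Positivity then forces $J_{ab}=PJ_{ab}P$ or $J_{ab}=(I-P)J_{ab}(I-P)$, so each marginal $M_a$ is block diagonal with respect to $P$. In our case $[M_+,P]=\varepsilon[K,P]\neq0$, so $M$ and $N$ are incompatible. This completes the construction, and it also shows that the concealed pair can be chosen with one member sharp.
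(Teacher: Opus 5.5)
Your proposal is correct and follows the paper's proof essentially step for step. Necessity comes from Corollary~\ref{cor:invertible}. Sufficiency uses the binary POVM $M_\pm=\tfrac12 I\pm\varepsilon K$, which is operationally equivalent to the trivial POVM $\tfrac12 I$, paired with a sharp rank-one projective measurement that does not commute with $K$. Incompatibility is shown by the same $J_{ab}=P_bJ_{ab}P_b$ block-diagonality argument. The only cosmetic difference is how you get the non-commuting projector: you use the fact that only scalars commute with all rank-one projectors, whereas the paper writes it down explicitly as $|\psi\rangle\langle\psi|$, with $|\psi\rangle$ an equal superposition of two eigenvectors of $K$ with distinct eigenvalues.
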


\begin{proof}
The forward implication is Corollary~\ref{cor:invertible}: if $\mathcal E^\dagger$ is injective, no incompatible pair can be concealed.

For the converse, suppose $\ker(\mathcal E^\dagger)$ contains a nonzero Hermitian operator $H$. Since $\mathcal E^\dagger$ is unital, $\mathcal E^\dagger(I)=I$, so $H$ cannot be a nonzero scalar multiple of $I$. Thus $H$ is not proportional to the identity.

Choose $\epsilon>0$ such that $\epsilon \le (2\|H\|_\infty)^{-1}$ and define
\[
M_\pm = \frac{I}{2} \pm \epsilon H.
\]
Then $M_\pm\ge0$ and $M_++M_-=I$, so $\{M_+,M_-\}$ is a valid binary POVM. Since $H\in\ker(\mathcal E^\dagger)$,
\[
\mathcal E^\dagger(M_\pm) = \mathcal E^\dagger\left(\frac{I}{2}\right),
\]
so $M$ is operationally equivalent to the trivial POVM $F_\pm=I/2$.

Because $H$ is not proportional to $I$, there exists a rank-one projection $P$ such that $[P,H]\neq0$. Indeed, choose two eigenvectors \(|h_i\rangle, |h_j\rangle\) of \(H\) with distinct eigenvalues, set
\[
|\psi\rangle = \frac{|h_i\rangle + |h_j\rangle}{\sqrt2}, \qquad P = |\psi\rangle\langle\psi|.
\]
Then
\[
\langle h_i | [P,H] | h_j \rangle = \frac{\lambda_j - \lambda_i}{2} \neq 0,
\]
so \([P,H] \neq 0\). Let \(N=\{P,I-P\}\) be the corresponding sharp binary POVM.

We claim that $M$ and $N$ are incompatible. Suppose, for contradiction, that they are jointly measurable. Then there exists a joint POVM $\{J_{\alpha b}\}_{\alpha=\pm,b=1,2}$ with marginals
\[
\sum_{b=1}^2 J_{\alpha b}=M_\alpha,\qquad
\sum_{\alpha=\pm} J_{\alpha b}=P_b,
\]
where $P_1=P$ and $P_2=I-P$. Since $P_b$ are projections, the marginal constraints imply
\[
0\le J_{\alpha b}\le P_b,
\]
so $J_{\alpha b}=P_b J_{\alpha b}P_b$. Hence each $J_{\alpha b}$ commutes with $P$, and therefore
\[
M_\alpha=\sum_b J_{\alpha b}
\]
also commutes with $P$. But $M_\alpha=\frac I2\pm\epsilon H$, so $[M_\alpha,P]=\pm\epsilon[H,P]\neq0$, a contradiction. Thus $M$ and $N$ are incompatible.

Finally, the pair $(F,N)$ is compatible because $F$ is trivial: the joint POVM \(J_{\alpha b} = \frac12 P_b\) has marginals \(F_\alpha\) and \(P_b\). By Theorem~\ref{thm:necessary_sufficient}, $(M,N)$ is concealed by $\mathcal E$.
\end{proof}

For unital qubit channels, the canonical projection of Section~\ref{sec:retractions} gives an explicit Bloch-space realization of the universal criterion.

\begin{remark}[Role of tomographic completeness]
Without tomographic completeness, the theorem is false. For example, let $\mathcal E=\id$ be injective, but allow testing only on a single state. Then incompatible measurements can be indistinguishable on that state, so an injective channel can conceal incompatibility relative to a non-tomographically-complete test family.
\end{remark}

\begin{figure*}[tbp]
\centering
\includegraphics[width=0.9\linewidth]{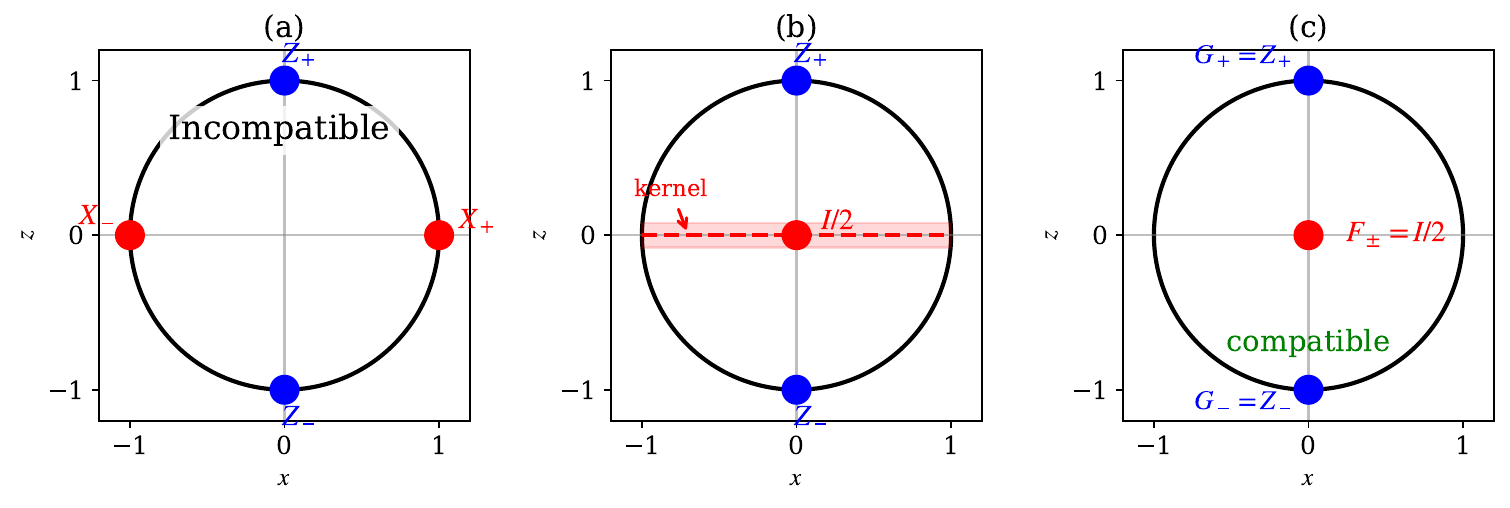}
\caption{
(Color online) Operational concealment under complete dephasing. Although projective POVMs $\{X_+,X_-\}$ and $\{Z_+,Z_-\}$ are incompatible, the Pauli $X$ effects become operationally equivalent to $I/2$ after dephasing, while Pauli $Z$ remains unchanged. The pair $(\{I/2,I/2\},\{Z_+,Z_-\})$ is compatible, so the incompatibility is operationally concealed. Physically, complete dephasing erases the $X$-observable direction from the channel-output statistics, so the $X$-$Z$ incompatibility cannot be certified within this access model.
}
\label{fig:concealment_geometry}
\end{figure*}

Theorem~\ref{thm:universal_concealment} identifies the adjoint kernel as the exact invariant governing the \emph{possibility} of concealment under tomographically complete access. In particular, the existence question is insensitive to the nonzero singular values or contraction strengths of the channel: only the presence of a hidden observable direction matters. For a prescribed measurement pair, however, noninjectivity alone is not sufficient; the corresponding affine fibres must still intersect the compatible set. This distinction between an existence criterion and a pair-dependent feasibility problem motivates the structural and quantitative analysis below.

\subsection{Quantitative obstruction to approximate concealment}

Let $\mathrm{JM}$ denote the set of jointly measurable POVM pairs with the same outcome sets. Define the distance of a POVM pair from compatibility:
\begin{equation}
\begin{aligned}
d_{\mathrm{JM}}(M,N)
:=
\inf_{(F,G)\in\mathrm{JM}}
\max\Bigl\{&
\max_a\|M_a-F_a\|_\infty,\\
&
\max_b\|N_b-G_b\|_\infty
\Bigr\}.
\end{aligned}
\end{equation}

For an injective adjoint, define its minimum gain
\[
\mu_\infty(\mathcal E^\dagger)
:=
\inf_{\substack{
X\in\Herm(\mathcal H_{\rm out})\\
\|X\|_\infty=1}}
\|\mathcal E^\dagger(X)\|_\infty.
\]
Equivalently,
\[
\mu_\infty(\mathcal E^\dagger)
=
\frac{1}{
\left\|
(\mathcal E^\dagger)^{-1}_{\im}
\right\|_{\infty\to\infty}
},
\]
where
\[
(\mathcal E^\dagger)^{-1}_{\im}:
\im(\mathcal E^\dagger)
\longrightarrow
\Herm(\mathcal H_{\rm out})
\]
denotes the inverse of $\mathcal E^\dagger$ on its image. Here \(\|\cdot\|_{\infty\to\infty}\) denotes the induced operator norm associated with the spectral norm on both domain and codomain.

\begin{theorem}[Quantitative obstruction to approximate concealment]
\label{thm:quantitative_obstruction}
Let $\mathcal E^\dagger$ be injective and let $\epsilon_c(M,N\mid\mathcal E)$ denote the operator-norm approximate-concealment error defined in Appendix~\ref{app:approx}. Then
\[
\mu_\infty(\mathcal E^\dagger)\, d_{\rm JM}(M,N)
\le
\epsilon_c(M,N\mid\mathcal E)
\le
d_{\rm JM}(M,N).
\]
\end{theorem}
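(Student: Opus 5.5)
We first have to fix the definition of $\epsilon_c$ from Appendix~\ref{app:approx}, which we have not seen. The natural choice, consistent with the two bounds, is
\[
\epsilon_c(M,N\mid\mathcal E)=\inf_{(F,G)\in\mathrm{JM}}\max\Bigl\{\max_a\|\mathcal E^\dagger(M_a-F_a)\|_\infty,\ \max_b\|\mathcal E^\dagger(N_b-G_b)\|_\infty\Bigr\}.
\]
This is the worst-case statistical deviation on channel outputs. By duality of the trace and operator norms, $\sup_\rho|\Tr[\rho\,\mathcal E^\dagger(X)]|=\|\mathcal E^\dagger(X)\|_\infty$ for Hermitian $X$. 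We then compare this expression term by term with the definition of $d_{\mathrm{JM}}$. The infimum ranges over the same set $\mathrm{JM}$ in both quantities, so it suffices to prove pointwise inequalities for each fixed compatible pair $(F,G)$ and then take infima.

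\emph{Upper bound.} $\mathcal E^\dagger$ is unital and positive, so by the Russo–Dye theorem (or directly for Hermitian inputs, via $-\|X\|I\le X\le\|X\|I$) we have $\|\mathcal E^\dagger(X)\|_\infty\le\|X\|_\infty$ for every Hermitian $X$. Applying this to $X=M_a-F_a$ and $X=N_b-G_b$ bounds each term in $\epsilon_c$ by the corresponding term in $d_{\mathrm{JM}}$. Taking the maximum over outcomes and then the infimum over $\mathrm{JM}$ gives $\epsilon_c\le d_{\mathrm{JM}}$.

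\emph{Lower bound.} Injectivity together with the definition of $\mu_\infty$ gives, by homogeneity, $\|\mathcal E^\dagger(X)\|_\infty\ge\mu_\infty(\mathcal E^\dagger)\|X\|_\infty$ for every Hermitian $X$. Here we need $\mu_\infty>0$: the unit sphere of the finite-dimensional space $\Herm(\mathcal H_{\rm out})$ is compact, $X\mapsto\|\mathcal E^\dagger(X)\|_\infty$ is continuous, and injectivity means the function never vanishes on that sphere. Applying the inequality with $X=M_a-F_a$ and $X=N_b-G_b$, the maximum is monotone and scales with the positive constant, so for every compatible pair
\[
\max\{\cdots\}_{\mathcal E^\dagger}\ \ge\ \mu_\infty\max\{\cdots\}.
\]
The right-hand side is at least $\mu_\infty d_{\mathrm{JM}}$, and taking the infimum of the left-hand side yields the claim.

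The main obstacle is matching the appendix definition of $\epsilon_c$. If it is phrased statistically, as a supremum over output states $\mathcal E(\rho)$ of outcome-probability deviations, we first convert it to the operator-norm form. For a single effect this is the duality identity above. If the appendix instead uses a total-variation-style sum over outcomes, the bounds must be adjusted to per-outcome norms. We assume the per-outcome maximum, which matches $d_{\mathrm{JM}}$ exactly.
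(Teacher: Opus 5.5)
Your proposal is correct and follows the paper's proof essentially step for step. Your guessed definition of $\epsilon_c$ matches the appendix exactly: an infimum over compatible pairs of $\max\{d_{\mathcal E}(F,M),d_{\mathcal E}(G,N)\}$, with $d_{\mathcal E}(F,M)=\max_a\|\mathcal E^\dagger(F_a-M_a)\|_\infty$. As in the paper, the upper bound comes from operator-norm contractivity of the positive unital map $\mathcal E^\dagger$, and the lower bound from the pointwise minimum-gain inequality, followed by a maximum over outcomes and an infimum over compatible pairs. Your compactness argument for $\mu_\infty>0$ is a harmless addition: the inequality holds by homogeneity regardless, and injectivity only makes the lower bound nontrivial.
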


\begin{proof}
Let $(F,G)$ be compatible. Since $\mathcal E^\dagger$ is unital and positive, it is contractive in operator norm, so
\[
\epsilon_c(M,N\mid\mathcal E)\le d_{\rm JM}(M,N).
\]

For the lower bound, for any compatible $(F,G)$,
\[
\|\mathcal E^\dagger(M_a-F_a)\|_\infty
\ge
\mu_\infty(\mathcal E^\dagger)\|M_a-F_a\|_\infty,
\]
and similarly for $N_b-G_b$. Taking the maximum over outcomes and then the infimum over compatible pairs gives the stated inequality.
\end{proof}

If $\mathcal E_t$ satisfies the time-local equation $\dot{\mathcal E}_t=\mathcal L_t\mathcal E_t$, with $\mathcal E_0=\id$ and locally integrable $\mathcal L_t$, then
\[
\|(\mathcal E_t^\dagger)^{-1}\|_{\infty\to\infty}
\le
\exp\left(\int_0^t \|\mathcal L_s^\dagger\|_{\infty\to\infty}\,ds\right),
\]
so
\[
\epsilon_c(M,N\mid\mathcal E_t)
\ge
\exp\left(-\int_0^t \|\mathcal L_s^\dagger\|_{\infty\to\infty}\,ds\right)
d_{\rm JM}(M,N).
\]

This upgrades the finite-time no-go theorem: for every finite time while the channel remains injective, there is a positive lower bound on the approximate-concealment error controlled by the minimum gain of the channel. In particular, for the depolarizing semigroup, this bound remains positive for all finite $t$, even when $t\ge t_{\rm eff}$ and the effective measurements have become compatible.

\subsection{Kernel invariance and channel ordering}

Having established when concealment is possible, we now study the geometry of the concealed set and its dependence on the channel. For fixed finite outcome sets $\Omega$ and $\Lambda$, let $\mathcal C_{\mathcal E}^{\Omega,\Lambda}$ denote the set of POVM pairs concealed by $\mathcal E$. When the outcome sets are understood, we write simply $\mathcal C_{\mathcal E}$.

\begin{proposition}[Convexity and compactness of $\mathcal C_{\mathcal E}$]
\label{prop:convexity_compactness}
For fixed finite outcome sets $\Omega$ and $\Lambda$, the set $\mathcal C_{\mathcal E}^{\Omega,\Lambda}$ is convex and compact.
\end{proposition}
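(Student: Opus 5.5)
By Theorem~\ref{thm:necessary_sufficient}, I will describe $\mathcal C_{\mathcal E}^{\Omega,\Lambda}$ as a linear projection of a compact convex set in a finite-dimensional space. Let $V:=\Herm(\mathcal H_{\rm out})^{\Omega}\times\Herm(\mathcal H_{\rm out})^{\Lambda}$. Let $\mathcal J\subset\Herm(\mathcal H_{\rm out})^{\Omega\times\Lambda}$ be the set of joint POVMs $\{J_{ab}\}$. Consider the set
\[
\mathcal K:=\Bigl\{\bigl((M,N),J\bigr)\in V\times\mathcal J:\ M,N \text{ are POVMs},\ \mathcal E^\dagger\Bigl(M_a-\textstyle\sum_b J_{ab}\Bigr)=0,\ \mathcal E^\dagger\Bigl(N_b-\textstyle\sum_a J_{ab}\Bigr)=0\ \forall a,b\Bigr\}.
\]
Then $\mathcal C_{\mathcal E}^{\Omega,\Lambda}$ is exactly the image of $\mathcal K$ under the coordinate projection $((M,N),J)\mapsto(M,N)$. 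Indeed, the compatible representatives in Theorem~\ref{thm:necessary_sufficient} are precisely the marginals $F_a=\sum_bJ_{ab}$ and $G_b=\sum_aJ_{ab}$ of some joint POVM $J$.

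\emph{Convexity.} Every constraint defining $\mathcal K$ is either a linear equality ($\sum_a M_a=I$, $\sum_b N_b=I$, $\sum_{ab}J_{ab}=I$, and the kernel conditions, which are linear because $\mathcal E^\dagger$ is linear) or a positivity constraint $X\ge0$, and the positive cone is convex. Hence $\mathcal K$ is convex, and its linear image is convex too. Concretely, given concealed pairs $(M,N)$ and $(M',N')$ with witnesses $J$ and $J'$, the mixture $\lambda J+(1-\lambda)J'$ is a joint POVM whose marginals lie in the fibres of the mixed pair, because $\mathcal E^\dagger$ is linear.

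\emph{Compactness.} The set $\mathcal K$ is closed: positive cones are closed, and the linear constraints are continuous (here finite dimensionality is used). It is also bounded, since every component of a POVM satisfies $0\le X\le I$, so $\|X\|_\infty\le1$. By Heine--Borel, $\mathcal K$ is compact, and the continuous image of a compact set is compact. This projection step is the one place where care is needed. A direct sequential argument on $\mathcal C_{\mathcal E}$ would require extracting convergent witnesses $(F,G)$, and that works only because the witnesses range over the compact set $\mathcal J$ rather than over an unbounded fibre $M+\ker(\mathcal E^\dagger)$. Including $J$ in the lifted set $\mathcal K$ handles this automatically. No other step should pose a difficulty.
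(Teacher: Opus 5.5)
Your proposal is correct and follows essentially the same argument as the paper. Convexity comes from mixing the compatible representatives (equivalently their joint POVMs). Compactness comes from projecting a closed, bounded lifted constraint set onto the $(M,N)$ coordinates. The only difference is that the paper carries $(F,G)$ as explicit variables in $(M,N,F,G,J)$, while you eliminate them as the marginals of $J$.
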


\begin{proof}
The proof is provided in Appendix~\ref{app:convexity_compactness}.
\end{proof}

We next examine how operational concealment depends on the underlying channel. The following results show that the adjoint kernel provides a natural invariant for operational concealment.

\begin{theorem}[Kernel invariance of concealment]
\label{thm:kernel_universality}
Under the assumptions of Theorem~\ref{thm:necessary_sufficient}, let $\mathcal{E}_1$ and $\mathcal{E}_2$ be quantum channels with common output Hilbert space $\mathcal{H}_{\mathrm{out}}$ satisfying $\ker(\mathcal{E}_1^\dagger) = \ker(\mathcal{E}_2^\dagger)$. Then $\mathcal{C}_{\mathcal{E}_1} = \mathcal{C}_{\mathcal{E}_2}$; i.e., a POVM pair is concealed by $\mathcal{E}_1$ if and only if it is concealed by $\mathcal{E}_2$.
\end{theorem}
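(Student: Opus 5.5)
The argument is a direct reduction to Theorem~\ref{thm:necessary_sufficient}. It only needs the observation that, for a linear map, the condition $\mathcal E^\dagger(F_a)=\mathcal E^\dagger(M_a)$ is the same as $F_a-M_a\in\ker(\mathcal E^\dagger)$. Under the standing assumptions of Remark~\ref{rem:assumptions}, Theorem~\ref{thm:necessary_sufficient} says that a pair $(M,N)$ is concealed by a channel $\mathcal E$ iff there is a compatible pair $(F,G)$ with $\mathcal E^\dagger(F_a)=\mathcal E^\dagger(M_a)$ and $\mathcal E^\dagger(G_b)=\mathcal E^\dagger(N_b)$ for all outcomes. First I will rewrite this, using linearity of $\mathcal E^\dagger$, as
\[
F_a-M_a\in\ker(\mathcal E^\dagger),\qquad G_b-N_b\in\ker(\mathcal E^\dagger)\quad\text{for all }a,b.
\]
Here $F_a-M_a$ and $G_b-N_b$ are Hermitian, so the Hermitian kernel used in the paper is the relevant one. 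The concealment condition is thus a statement purely about membership of the Hermitian differences in the kernel. That is, $\mathcal C_{\mathcal E}$ is the set of pairs $(M,N)$ whose affine fibres $M_a+\ker(\mathcal E^\dagger)$ and $N_b+\ker(\mathcal E^\dagger)$ meet a compatible pair componentwise.

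Next, I will take $(M,N)\in\mathcal C_{\mathcal E_1}$ and a compatible witness $(F,G)$ for it. Since $\ker(\mathcal E_1^\dagger)=\ker(\mathcal E_2^\dagger)$, the same differences lie in $\ker(\mathcal E_2^\dagger)$. Applying Theorem~\ref{thm:necessary_sufficient} to $\mathcal E_2$, the same $(F,G)$ witnesses concealment by $\mathcal E_2$. The argument is symmetric in the two channels, which gives $\mathcal C_{\mathcal E_1}=\mathcal C_{\mathcal E_2}$ for every fixed choice of outcome sets $\Omega,\Lambda$.

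The only point needing care is that Theorem~\ref{thm:necessary_sufficient} must be applicable to both channels simultaneously. The two channels share $\mathcal H_{\rm out}$ but may have different input spaces. For each channel I will therefore pick a tomographically complete input family on its own input space, as Remark~\ref{rem:assumptions} permits. The characterization then makes concealment independent of the particular complete family chosen, so $\mathcal C_{\mathcal E}$ is well defined. Beyond this bookkeeping I expect no substantive obstacle, since the theorem is essentially the statement that $\mathcal C_{\mathcal E}$ factors through the quotient $\mathsf Q_{\mathcal E}$, which depends only on $\ker(\mathcal E^\dagger)$.
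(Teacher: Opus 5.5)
Your proposal is correct and matches the paper's proof. Both reduce, via Theorem~\ref{thm:necessary_sufficient}, to the conditions $F_a-M_a,\ G_b-N_b\in\ker(\mathcal E_i^\dagger)$, and both note that equal kernels let the same compatible representatives serve for both channels. Your remark about choosing a separate tomographically complete family on each channel's input space is harmless bookkeeping that the paper leaves implicit.
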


\begin{proof}
By Theorem~\ref{thm:necessary_sufficient}, concealment under $\mathcal{E}_i$ is equivalent to the existence of compatible POVMs $\{F_a\},\{G_b\}$ satisfying
\[
\mathcal{E}_i^\dagger(F_a) = \mathcal{E}_i^\dagger(M_a), \qquad \mathcal{E}_i^\dagger(G_b) = \mathcal{E}_i^\dagger(N_b).
\]
Equivalently,
\[
F_a - M_a \in \ker(\mathcal{E}_i^\dagger), \qquad G_b - N_b \in \ker(\mathcal{E}_i^\dagger).
\]
If $\ker(\mathcal{E}_1^\dagger) = \ker(\mathcal{E}_2^\dagger)$, then the admissible operational equivalence classes coincide for both channels, and therefore the same concealment constructions exist. Hence $\mathcal{C}_{\mathcal{E}_1} = \mathcal{C}_{\mathcal{E}_2}$.
\end{proof}

Equality of adjoint kernels is sufficient for concealment-equivalence: channels with different dynamical realizations but identical adjoint kernels are indistinguishable with respect to concealment of POVM pairs. This is essentially the dual-space version of the known result that restricted-state compatibility depends only on the real linear span of the accessible state set~\cite{Heinosaari2021FewStates}. Whether equality of adjoint kernels is also necessary for concealment-equivalence remains open. Consequently, the adjoint kernel provides a sufficient invariant for the concealment framework developed here.

\begin{theorem}[Kernel monotonicity of concealment sets]
\label{thm:kernel_monotonicity}
Let $\mathcal{E}_1$ and $\mathcal{E}_2$ be quantum channels with the same output Hilbert space $\mathcal{H}_{\mathrm{out}}$. If $\ker(\mathcal{E}_1^\dagger)\subseteq\ker(\mathcal{E}_2^\dagger)$, then $\mathcal{C}_{\mathcal{E}_1}\subseteq\mathcal{C}_{\mathcal{E}_2}$.
\end{theorem}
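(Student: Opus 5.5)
The plan is to argue directly from the adjoint characterization in Theorem~\ref{thm:necessary_sufficient}, as in the proof of Theorem~\ref{thm:kernel_universality}, but using only the one-sided inclusion of kernels. Fix finite outcome sets $\Omega,\Lambda$ and take a pair $(M,N)\in\mathcal C_{\mathcal E_1}$. Concealment under $\mathcal E_1$ supplies compatible POVMs $\{F_a\},\{G_b\}$, with a joint POVM $\{J_{ab}\}$, such that $\mathcal E_1^\dagger(F_a)=\mathcal E_1^\dagger(M_a)$ and $\mathcal E_1^\dagger(G_b)=\mathcal E_1^\dagger(N_b)$ for all $a,b$. Equivalently,
\[
F_a-M_a\in\ker(\mathcal E_1^\dagger),\qquad G_b-N_b\in\ker(\mathcal E_1^\dagger).
\]
The differences are Hermitian, so the Hermitian-restricted kernels used in the paper are the relevant ones.

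Next apply the hypothesis $\ker(\mathcal E_1^\dagger)\subseteq\ker(\mathcal E_2^\dagger)$. It gives $F_a-M_a\in\ker(\mathcal E_2^\dagger)$ and $G_b-N_b\in\ker(\mathcal E_2^\dagger)$, that is, $\mathcal E_2^\dagger(F_a)=\mathcal E_2^\dagger(M_a)$ and $\mathcal E_2^\dagger(G_b)=\mathcal E_2^\dagger(N_b)$. Compatibility of $\{F_a\},\{G_b\}$ is a property of these POVMs on $\mathcal H_{\rm out}$ alone and does not involve the channel. The same witnesses $(F,G)$ and the same joint POVM $\{J_{ab}\}$ therefore satisfy the sufficiency direction of Theorem~\ref{thm:necessary_sufficient} for $\mathcal E_2$, so $(M,N)\in\mathcal C_{\mathcal E_2}$.

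There is no real obstacle. The only point to check is that both channels are tested under tomographically complete input access, so that Theorem~\ref{thm:necessary_sufficient} applies to each. The input spaces of $\mathcal E_1$ and $\mathcal E_2$ may differ, but this does not matter because only the common output space $\mathcal H_{\rm out}$ and the kernels enter the criterion. In quotient language the inclusion of kernels induces a surjection $\mathsf Q_{\mathcal E_1}\to\mathsf Q_{\mathcal E_2}$. Fibres of $\mathcal E_1^\dagger$ are contained in fibres of $\mathcal E_2^\dagger$, so any compatible representative found in the former also lies in the latter. As a final remark, applying the result in both directions recovers Theorem~\ref{thm:kernel_universality}.
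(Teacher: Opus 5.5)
Your proof is correct and is essentially the paper's argument: obtain compatible representatives $(F,G)$ for $(M,N)$ under $\mathcal E_1$ from Theorem~\ref{thm:necessary_sufficient}, note that $F_a-M_a$ and $G_b-N_b$ lie in $\ker(\mathcal E_1^\dagger)\subseteq\ker(\mathcal E_2^\dagger)$, and conclude that $(M,N)$ is concealed by $\mathcal E_2$. Your remarks about differing input spaces and the induced surjection of quotients are accurate but not needed for the argument.
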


\begin{proof}
Let $(M,N)\in\mathcal{C}_{\mathcal{E}_1}$. Then there exist compatible representatives $(F,G)$ with $F_a-M_a\in\ker(\mathcal{E}_1^\dagger)$ and $G_b-N_b\in\ker(\mathcal{E}_1^\dagger)$. Since $\ker(\mathcal{E}_1^\dagger)\subseteq\ker(\mathcal{E}_2^\dagger)$, the same differences lie in $\ker(\mathcal{E}_2^\dagger)$, so $(M,N)$ is concealed by $\mathcal{E}_2$.
\end{proof}

The kernel inclusion relation defines a preorder on quantum channels:
\[
\mathcal E_1\preceq_c\mathcal E_2 \quad\Longleftrightarrow\quad \ker(\mathcal E_1^\dagger)\subseteq \ker(\mathcal E_2^\dagger).
\]
Since distinct quantum channels may possess identical adjoint kernels, the relation is generally not antisymmetric and therefore defines only a preorder. Passing to equivalence classes of channels with identical adjoint kernels yields a partial order.

\subsection{Effective compatibility versus concealment}

We now turn to effective compatibility. Unlike the previous section, which characterizes operational concealment, this section studies the compatibility of the channel images \(\{\mathcal{E}^\dagger(M_a)\}\) and \(\{\mathcal{E}^\dagger(N_b)\}\). This distinction is essential because effective compatibility can occur even when operational concealment is impossible.

\begin{definition}[Effective compatibility]
Two POVMs $\{M_a\},\{N_b\}$ are \emph{effectively compatible} under $\mathcal E$ if $\{\mathcal E^\dagger(M_a)\}$ and $\{\mathcal E^\dagger(N_b)\}$ are jointly measurable on $\mathcal H_{\mathrm{in}}$.
\end{definition}

For unbiased binary qubit POVMs, we write measurement effects as
\begin{equation}
M_{\pm|0} = \frac{I \pm \mathbf{a}\cdot\boldsymbol{\sigma}}{2}, \qquad
M_{\pm|1} = \frac{I \pm \mathbf{b}\cdot\boldsymbol{\sigma}}{2},
\label{eq:binary_qubit_povm}
\end{equation}
with Bloch vectors $\mathbf{a},\mathbf{b} \in \mathbb{R}^3$ satisfying $|\mathbf{a}|,|\mathbf{b}| \leq 1$.
A necessary and sufficient criterion for the joint measurability of unbiased binary qubit POVMs is~\cite{Busch1986,Yu2010}:
\begin{equation}
|\mathbf{a}+\mathbf{b}| + |\mathbf{a}-\mathbf{b}| \leq 2.
\label{eq:jm_condition}
\end{equation}
This criterion will be used extensively in our qubit analysis.

\begin{proposition}[Concealment implies effective compatibility]
\label{prop:distinction}
If $\{M_a\},\{N_b\}$ are concealed by $\mathcal{E}$, then the effective POVMs $\{\mathcal{E}^\dagger(M_a)\}$ and $\{\mathcal{E}^\dagger(N_b)\}$ are compatible.
The converse fails in general: effective compatibility can occur for channels with injective adjoint, where concealment is impossible by Theorem~\ref{thm:universal_concealment}.
Thus concealment is strictly more restrictive than effective compatibility.
\end{proposition}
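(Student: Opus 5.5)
The forward direction follows from Theorem~\ref{thm:necessary_sufficient} together with the fact that $\mathcal E^\dagger$ is a positive unital map. Assume $(M,N)$ is concealed. Theorem~\ref{thm:necessary_sufficient} gives compatible POVMs $\{F_a\},\{G_b\}$ with $\mathcal E^\dagger(F_a)=\mathcal E^\dagger(M_a)$ and $\mathcal E^\dagger(G_b)=\mathcal E^\dagger(N_b)$. Let $\{J_{ab}\}$ be a joint POVM for $(F,G)$. The candidate joint POVM for the effective pair is $\{\mathcal E^\dagger(J_{ab})\}$, and three checks suffice:
\begin{itemize}
\item[(i)] $\mathcal E^\dagger(J_{ab})\ge0$, because $\mathcal E^\dagger$ is (completely) positive;
\item[(ii)] $\sum_{a,b}\mathcal E^\dagger(J_{ab})=\mathcal E^\dagger(I)=I$, by unitality;
\item[(iii)] by linearity, $\sum_b\mathcal E^\dagger(J_{ab})=\mathcal E^\dagger(F_a)=\mathcal E^\dagger(M_a)$, and likewise $\sum_a\mathcal E^\dagger(J_{ab})=\mathcal E^\dagger(N_b)$.
\end{itemize}
Hence the effective measurements are compatible. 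This direction is routine.

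For the failure of the converse, I would give an explicit qubit example in which the adjoint is injective. The natural choice is the depolarizing channel $\mathcal E_p(\rho)=p\rho+(1-p)I/2$ with $0<p<1$. It is self-adjoint, and $\mathcal E_p^\dagger$ acts on Bloch vectors as multiplication by $p$ while fixing $I$. The adjoint is injective: if $\mathcal E_p^\dagger(X)=0$ with $X=x_0I+\mathbf x\cdot\boldsymbol\sigma$, then $x_0=0$ and $p\,\mathbf x=0$, so $X=0$.

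For the measurements, take the sharp incompatible pair $\mathbf a=\hat z$ and $\mathbf b=\hat x$ in Eq.~\eqref{eq:binary_qubit_povm}. These violate Eq.~\eqref{eq:jm_condition}, since $|\mathbf a+\mathbf b|+|\mathbf a-\mathbf b|=2\sqrt2>2$. The effective POVMs have Bloch vectors $p\hat z$ and $p\hat x$, and the criterion gives $2\sqrt2\,p\le2$. So for any $p\le1/\sqrt2$ (e.g.\ $p=1/2$) the effective pair is jointly measurable.

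By Corollary~\ref{cor:invertible} (equivalently, the forward direction of Theorem~\ref{thm:universal_concealment}), an injective adjoint means the pair cannot be concealed, because it is incompatible. This exhibits effective compatibility without concealment, so concealment is strictly more restrictive.

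The only step needing care is the bookkeeping in the example. I need the depolarizing adjoint acting as a Bloch-vector contraction and the joint-measurability criterion~\eqref{eq:jm_condition}. The sharp orthogonal pair and the threshold $p\le1/\sqrt2$ make both verifications one-line computations.
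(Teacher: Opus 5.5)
Your proposal is correct and matches the paper's proof. The forward direction pushes a joint POVM for the compatible representatives through $\mathcal E^\dagger$ and checks positivity, unitality and the marginals; the converse uses a depolarizing channel with injective adjoint together with Corollary~\ref{cor:invertible}. The differences are cosmetic: your parametrization $p\rho+(1-p)I/2$ is the paper's with $p\mapsto 1-p$, and you fix the concrete pair $\hat z,\hat x$ with $0<p\le 1/\sqrt2$ instead of appealing to the general threshold $p_c$ of Theorem~\ref{thm:depol_threshold}. This makes your counterexample self-contained, and it avoids the paper's forward reference.
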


\begin{proof}
Since $\mathcal{E}^\dagger$ is completely positive and unital, $\{\mathcal{E}^\dagger(M_a)\}$ and $\{\mathcal{E}^\dagger(N_b)\}$ are POVMs. If $\{M_a\},\{N_b\}$ are concealed, Theorem~\ref{thm:necessary_sufficient} gives compatible $\{F_a\},\{G_b\}$ with $\mathcal{E}^\dagger(F_a)=\mathcal{E}^\dagger(M_a)$, $\mathcal{E}^\dagger(G_b)=\mathcal{E}^\dagger(N_b)$. Let $\{J_{ab}\}$ be a joint POVM for $\{F_a\},\{G_b\}$. Define $K_{ab} := \mathcal{E}^\dagger(J_{ab})$. Complete positivity of $\mathcal{E}^\dagger$ implies $K_{ab} \succeq 0$, while unitality gives
\[
\sum_{ab} K_{ab} = \mathcal{E}^\dagger\!\left(\sum_{ab} J_{ab}\right) = \mathcal{E}^\dagger(I_{\rm out}) = I_{\rm in}.
\]
Moreover, $\sum_b K_{ab} = \sum_b \mathcal{E}^\dagger(J_{ab}) = \mathcal{E}^\dagger(F_a) = \mathcal{E}^\dagger(M_a)$, and similarly $\sum_a K_{ab} = \mathcal{E}^\dagger(N_b)$. Hence $\{K_{ab}\}$ is a joint POVM for $\{\mathcal{E}^\dagger(M_a)\}$ and $\{\mathcal{E}^\dagger(N_b)\}$, establishing compatibility.

For the converse, consider binary qubit POVMs with $|\mathbf{a}+\mathbf{b}| + |\mathbf{a}-\mathbf{b}| > 2$, so the pair is incompatible. The depolarizing channel example below shows that the effective measurements can become compatible while concealment remains impossible.

For any $p_c\le p<1$, the adjoint depolarizing map is injective. Indeed, if
\[
\mathcal E_p^\dagger(A)
=
(1-p)A+\frac p2\Tr(A)I
=
0,
\]
taking the trace gives $\Tr(A)=0$, and hence
$(1-p)A=0$. Since $p<1$, it follows that $A=0$.
Therefore the effective pair is compatible while the original
incompatible pair cannot be concealed, by
Corollary~\ref{cor:invertible}.
\end{proof}

This implication links the two structures introduced in Section~\ref{sec:framework}: concealment concerns compatible representatives in adjoint fibres, whereas effective compatibility concerns joint measurability of their channel images. The converse fails when the channel contracts observable directions without annihilating them.

The following theorem applies the joint-measurability criterion to depolarizing noise. Related channel-level incompatibility-breaking conditions and bounds for depolarizing channels were studied in Ref.~\cite{Heinosaari2015IB}.

\begin{theorem}[Effective compatibility threshold under depolarizing noise]
\label{thm:depol_threshold}
Let $\mathcal M=\{M_{\pm|0},M_{\pm|1}\}$ be two unbiased binary qubit POVMs with Bloch vectors $\mathbf a,\mathbf b$. Under the depolarizing channel
\[
\mathcal E_p(\rho)=(1-p)\rho+p\frac{I}{2},
\qquad p\in[0,1],
\]
if $\mathbf a=\mathbf b=\mathbf 0$, the effective pair is compatible for every $p$. Otherwise, let
\[
L=|\mathbf a+\mathbf b|+|\mathbf a-\mathbf b|>0.
\]
Then the effective measurements are jointly measurable if and only if
\[
p\ge p_c
=
\max\!\left(0,1-\frac{2}{L}\right).
\]
\end{theorem}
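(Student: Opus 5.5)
We first compute the Heisenberg action of the depolarizing channel on the effects, then apply the Busch criterion \eqref{eq:jm_condition}. The adjoint is
\[
\mathcal E_p^\dagger(A)=(1-p)A+\tfrac p2\Tr(A)\,I .
\]
Applied to $M_{\pm|0}=\tfrac12(I\pm\mathbf a\cdot\boldsymbol\sigma)$, and using $\Tr(\mathbf a\cdot\boldsymbol\sigma)=0$ and $\Tr I=2$, this gives
\[
\mathcal E_p^\dagger(M_{\pm|0})=\tfrac12\bigl(I\pm(1-p)\,\mathbf a\cdot\boldsymbol\sigma\bigr),
\]
and similarly for $M_{\pm|1}$. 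So the effective pair is again an unbiased binary qubit pair, with Bloch vectors $(1-p)\mathbf a$ and $(1-p)\mathbf b$. Their lengths are at most $1$, so \eqref{eq:jm_condition} applies directly.

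By \eqref{eq:jm_condition}, the effective pair is jointly measurable if and only if
\[
|(1-p)(\mathbf a+\mathbf b)|+|(1-p)(\mathbf a-\mathbf b)|\le 2 .
\]
Since $1-p\ge0$, this reads $(1-p)L\le 2$.

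We then split into cases.

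\textbf{Case $\mathbf a=\mathbf b=\mathbf 0$.} Both effective POVMs equal $\{I/2,I/2\}$ and are trivially compatible, for example via the joint POVM $J_{\alpha\beta}=I/4$. This also agrees with the criterion, since $L=0$.

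\textbf{Case $L>0$.} The condition $(1-p)L\le2$ is equivalent to $p\ge 1-2/L$. Intersecting with $p\in[0,1]$ gives $p\ge\max(0,1-2/L)=p_c$.

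We also note that $L\le |\mathbf a|+|\mathbf b|+\bigl||\mathbf a|-|\mathbf b|\bigr|$-type bounds give $L\le 2\sqrt2$, so $p_c<1$. This is not needed for the equivalence, but it shows that the threshold lies strictly inside $[0,1)$. That matters for the injectivity argument used in Proposition~\ref{prop:distinction}.

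We expect no real obstacle; the argument is a direct computation. The only points needing care are the following.

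\begin{itemize}
\item The sign of $1-p$, which is nonnegative because $p\in[0,1]$. This is what lets the factor $(1-p)$ be pulled out of the norms.
\item The degenerate case $L=0$. Here $L=0$ forces $\mathbf a+\mathbf b=\mathbf a-\mathbf b=\mathbf 0$, hence $\mathbf a=\mathbf b=\mathbf 0$, so the two cases above are exhaustive.
\item The boundary $p=1$. The effective pair is then trivial and compatible, which is consistent with $p\ge p_c$.
\end{itemize}
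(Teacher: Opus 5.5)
Your proof is correct and follows the paper's argument exactly: the adjoint sends the Bloch vectors to $(1-p)\mathbf a$ and $(1-p)\mathbf b$, and the joint-measurability criterion then reduces to $(1-p)L\le 2$. The only blemish is in your side remark, which the theorem does not need: the hinted bound points the wrong way, since $|\mathbf a|+|\mathbf b|+\bigl||\mathbf a|-|\mathbf b|\bigr|=2\max(|\mathbf a|,|\mathbf b|)\le L$; the claim $L\le 2\sqrt2$ instead follows from $L\le\sqrt{2(|\mathbf a+\mathbf b|^2+|\mathbf a-\mathbf b|^2)}=2\sqrt{|\mathbf a|^2+|\mathbf b|^2}$.
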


\begin{proof}
The adjoint depolarizing channel acts as $\mathcal E_p^\dagger(A) = (1-p)A+\frac{p}{2}\Tr(A)I$. The effective Bloch vectors become $(1-p)\mathbf{a}$ and $(1-p)\mathbf{b}$. The joint-measurability criterion Eq.~\eqref{eq:jm_condition} gives $(1-p)L \le 2$, yielding the stated threshold.
\end{proof}

For sharp orthogonal measurements, $p_c = 1 - 1/\sqrt{2} \approx 0.292893$.

The following singular-value theorem captures the contraction effect for unital qubit channels in general. The condition characterizes when every sharp orthogonal unbiased binary pair is effectively compatible. For non-orthogonal or non-sharp pairs, the threshold depends on the specific Bloch vectors.

\begin{theorem}[Singular-value characterization for unital qubit channels]
\label{thm:singular_value}
Let $\mathcal E$ be a unital qubit channel with Bloch matrix $T$ and singular values $\sigma_1 \ge \sigma_2 \ge \sigma_3$. For a fixed pair of unbiased binary qubit POVMs with Bloch vectors $\mathbf{m},\mathbf{n}$, effective compatibility holds exactly when
\[
|T^{\mathsf T}\mathbf{m} + T^{\mathsf T}\mathbf{n}| + |T^{\mathsf T}\mathbf{m} - T^{\mathsf T}\mathbf{n}| \le 2.
\]
Moreover, every sharp orthogonal pair is effectively compatible iff
\[
\sigma_1^2 + \sigma_2^2 \le 1.
\]
\end{theorem}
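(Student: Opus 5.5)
The plan is to work entirely in the Bloch representation and reduce both claims to the joint-measurability criterion Eq.~\eqref{eq:jm_condition}. A unital qubit channel acts on states as $\mathcal E(\tfrac12(I+\mathbf r\cdot\boldsymbol\sigma))=\tfrac12(I+T\mathbf r\cdot\boldsymbol\sigma)$. From the defining identity $\Tr[\mathcal E(\rho)A]=\Tr[\rho\,\mathcal E^\dagger(A)]$, together with $\Tr[\sigma_i\sigma_j]=2\delta_{ij}$, the adjoint acts on effects as
\[
\mathcal E^\dagger\!\left(\tfrac12(I\pm\mathbf m\cdot\boldsymbol\sigma)\right)=\tfrac12(I\pm T^{\mathsf T}\mathbf m\cdot\boldsymbol\sigma).
\]
Unitality ensures there is no translation term, and $\mathcal E^\dagger(I)=I$. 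The effective pair is therefore again an unbiased binary qubit pair, with Bloch vectors $T^{\mathsf T}\mathbf m$ and $T^{\mathsf T}\mathbf n$, both of norm at most $1$ because $\|T\|\le1$. Applying Eq.~\eqref{eq:jm_condition} to these vectors gives the first statement directly; it is the same computation as in Theorem~\ref{thm:depol_threshold}, with $(1-p)I_3$ replaced by $T^{\mathsf T}$.

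For the second statement, a sharp orthogonal pair means unit vectors $\mathbf m\perp\mathbf n$. Then $\mathbf u=(\mathbf m+\mathbf n)/\sqrt2$ and $\mathbf v=(\mathbf m-\mathbf n)/\sqrt2$ are again orthonormal, and the criterion becomes
\[
\sqrt2\,\bigl(|T^{\mathsf T}\mathbf u|+|T^{\mathsf T}\mathbf v|\bigr)\le2 .
\]
As $(\mathbf m,\mathbf n)$ ranges over all orthonormal pairs, $(\mathbf u,\mathbf v)$ does too. So the claim is equivalent to
\[
\max_{\mathbf u\perp\mathbf v,\ |\mathbf u|=|\mathbf v|=1}\bigl(|T^{\mathsf T}\mathbf u|+|T^{\mathsf T}\mathbf v|\bigr)=\sqrt{2(\sigma_1^2+\sigma_2^2)}.
\]
Once this is established, the condition $\sqrt{2}\sqrt{2(\sigma_1^2+\sigma_2^2)}\le2$ is exactly $\sigma_1^2+\sigma_2^2\le1$. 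Here $T^{\mathsf T}$ has the same singular values as $T$.

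The maximization is the main step. For the upper bound I use Cauchy--Schwarz:
\[
|T^{\mathsf T}\mathbf u|+|T^{\mathsf T}\mathbf v|\le\sqrt2\,\sqrt{|T^{\mathsf T}\mathbf u|^2+|T^{\mathsf T}\mathbf v|^2}.
\]
The quantity under the root is $\Tr[P\,TT^{\mathsf T}]$, where $P$ is the rank-two projector onto $\Span\{\mathbf u,\mathbf v\}$. By Ky Fan's maximum principle it is at most $\sigma_1^2+\sigma_2^2$.

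For attainment, first take the top two left singular vectors $\mathbf e_1,\mathbf e_2$ of $T^{\mathsf T}$, so that $|T^{\mathsf T}\mathbf e_i|=\sigma_i$. Then rotate within their span to equalize the two norms: choose
\[
\mathbf u=\cos\theta\,\mathbf e_1+\sin\theta\,\mathbf e_2,\qquad \mathbf v=-\sin\theta\,\mathbf e_1+\cos\theta\,\mathbf e_2 .
\]
The images $T^{\mathsf T}\mathbf e_1$ and $T^{\mathsf T}\mathbf e_2$ are orthogonal, so
\[
|T^{\mathsf T}\mathbf u|^2=\sigma_1^2\cos^2\theta+\sigma_2^2\sin^2\theta,\qquad |T^{\mathsf T}\mathbf v|^2=\sigma_1^2\sin^2\theta+\sigma_2^2\cos^2\theta .
\]
At $\theta=\pi/4$ these are equal, so Cauchy--Schwarz is tight, and the sum of squares equals $\sigma_1^2+\sigma_2^2$.

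This yields both directions. If $\sigma_1^2+\sigma_2^2\le1$, every sharp orthogonal pair satisfies the criterion. If $\sigma_1^2+\sigma_2^2>1$, the pair $\mathbf m=(\mathbf u+\mathbf v)/\sqrt2=\mathbf e_1$, $\mathbf n=(\mathbf u-\mathbf v)/\sqrt2$, built from the maximizing $\mathbf u,\mathbf v$ above, violates it. The only delicate point I foresee is checking the equalization step and the invariance of the orthonormal-pair parametrization; the rest is routine.
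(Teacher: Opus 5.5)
Your proposal is correct and follows the paper's proof: you apply the unbiased joint-measurability criterion to the effective Bloch vectors $T^{\mathsf T}\mathbf m,T^{\mathsf T}\mathbf n$, pass to the orthonormal pair $\mathbf u,\mathbf v$, bound by Cauchy--Schwarz and the top two squared singular values (your explicit Ky Fan step fills in what the paper leaves implicit), and attain the bound with $\{\mathbf m,\mathbf n\}$ the two leading left singular vectors of $T$, which is exactly what your $\theta=\pi/4$ rotation produces. There are two harmless slips: the vectors with $|T^{\mathsf T}\mathbf e_i|=\sigma_i$ are the right singular vectors of $T^{\mathsf T}$ (equivalently, left singular vectors of $T$), and with your sign convention $(\mathbf u+\mathbf v)/\sqrt2=\mathbf e_2$ while $(\mathbf u-\mathbf v)/\sqrt2=\mathbf e_1$; neither affects the argument.
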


The proof is provided in Appendix~\ref{app:technical}.

For comparison, observable--channel compatibility---whether an observable and a channel can arise from a common instrument---has been characterized for unbiased qubit observables and Pauli channels in Ref.~\cite{Heinosaari2018}. This information--disturbance notion is different from both effective compatibility and operational concealment considered here.

\subsection{Finite-time dynamical obstruction}

We now regard the channel as a time-dependent dynamical map $\mathcal E_t$ with $\mathcal E_0=\id$. The two-layer distinction has a sharp dynamical consequence: ordinary finite-time dissipation may make the effective measurements compatible, whereas exact concealment of an incompatible pair requires loss of linear rank.

\begin{theorem}[Finite-time no-go for regular dynamics]
\label{thm:regular_no_go}
Let $\mathcal E_t$ solve the finite-dimensional time-local equation
\begin{equation}
\frac{d}{dt}\mathcal E_t=\mathcal L_t\mathcal E_t,
\qquad \mathcal E_0=\id,
\label{eq:time_local}
\end{equation}
where $t\mapsto\mathcal L_t$ is a locally integrable finite-dimensional linear superoperator. Then $\mathcal E_t$ and $\mathcal E_t^\dagger$ are invertible linear maps for every finite $t$. Consequently, no incompatible finite-outcome POVM pair can be exactly concealed at finite time.
\end{theorem}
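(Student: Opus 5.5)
The plan is to show that the solution of the linear time-local equation \eqref{eq:time_local} has a nonvanishing determinant at every finite time. Then we combine invertibility of the adjoint with Corollary~\ref{cor:invertible}. We regard $\mathcal E_t$ as a matrix on the finite-dimensional real vector space $\Herm(\mathcal H_{\rm in})$; the complex operator space works equally well, since $\mathcal E_t$ is complex-linear. In this picture \eqref{eq:time_local} is a linear matrix ODE $\dot{E}(t)=L(t)E(t)$ with $E(0)=I$ and $L\in L^1_{\rm loc}$. We interpret the solution in the Carath\'eodory sense: $E$ is absolutely continuous and satisfies the equation almost everywhere. Existence and uniqueness of such a solution are standard for locally integrable coefficients.

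\textbf{Step 1 (Liouville/Jacobi formula).} Apply Jacobi's formula to the absolutely continuous function $t\mapsto\det E(t)$. Almost everywhere,
\[
\frac{d}{dt}\det E(t)=\Tr\!\bigl(\operatorname{adj}(E(t))\,L(t)E(t)\bigr)=\Tr L(t)\,\det E(t).
\]
Since $\det E$ is absolutely continuous, this linear scalar ODE integrates to
\[
\det \mathcal E_t=\exp\!\Bigl(\int_0^t \Tr\mathcal L_s\,ds\Bigr).
\]
The integral is finite by local integrability, so $\det\mathcal E_t\neq 0$ for every finite $t$. An alternative route, which also yields the norm bound quoted earlier, is to construct the inverse directly. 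Let $Y(t)$ solve $\dot Y=-Y L$ with $Y(0)=I$. Then $\frac{d}{dt}(YE)=0$, so $Y(t)E(t)=I$, and Gr\"onwall's inequality gives $\|Y(t)\|\le\exp\int_0^t\|L_s\|\,ds$. I would present the Jacobi argument and mention this one. The main technical point is the regularity: we must justify the Jacobi formula, or the product rule for $YE$, for absolutely continuous matrix functions. This holds because both are polynomial or bilinear expressions in absolutely continuous entries.

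\textbf{Step 2 (adjoint).} The Hilbert--Schmidt adjoint satisfies $\det\mathcal E_t^\dagger=\overline{\det\mathcal E_t}\neq0$. Equivalently, $(\mathcal E_t^{-1})^\dagger$ is a two-sided inverse of $\mathcal E_t^\dagger$. Hence $\ker(\mathcal E_t^\dagger)=\{0\}$, and this also holds when the kernel is restricted to Hermitian operators.

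\textbf{Step 3 (conclusion).} By Corollary~\ref{cor:invertible}, or equivalently the forward direction of Theorem~\ref{thm:universal_concealment}, an injective adjoint conceals only pairs that are already compatible. That argument uses only Theorem~\ref{thm:necessary_sufficient} and places no restriction to binary outcomes, so it covers every finite-outcome incompatible pair. Therefore no incompatible pair is exactly concealed at any finite $t$. The only real obstacle is the low-regularity justification in Step 1; everything after it is immediate from earlier results.
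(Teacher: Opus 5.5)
Your proposal is correct and follows the paper's own proof: Liouville's determinant formula gives $\det\mathcal E_t=\exp\bigl(\int_0^t\Tr\mathcal L_s\,ds\bigr)\neq0$, invertibility passes to the adjoint via $(\mathcal E_t^\dagger)^{-1}=(\mathcal E_t^{-1})^\dagger$, and Corollary~\ref{cor:invertible} finishes the argument. Your extra care about Carath\'eodory solutions and the Jacobi formula for absolutely continuous entries is a sound refinement that the paper leaves implicit, and so is the alternative inverse construction via $\dot Y=-YL$ with a Gr\"onwall bound, which essentially reproduces the norm bound stated after Theorem~\ref{thm:quantitative_obstruction}.
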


\begin{proof}
In finite dimension, Liouville's determinant formula gives
\[
\det\mathcal E_t
=
\exp\left[
\int_0^t
\operatorname{Tr}_{\rm sup}(\mathcal L_s)\,ds
\right],
\]
where $\operatorname{Tr}_{\rm sup}$ denotes the trace of the superoperator, with $\det\mathcal E_0=1$. Hence $\det\mathcal E_t\neq0$ at every finite time whenever the generator is locally integrable. Therefore $\mathcal E_t$ is invertible at every finite time. The adjoint of an invertible linear map is invertible, with
\[
(\mathcal E_t^\dagger)^{-1}
=
(\mathcal E_t^{-1})^\dagger.
\]
Hence $\ker(\mathcal E_t^\dagger)=\{0\}$. Corollary~\ref{cor:invertible} then excludes exact concealment of any incompatible pair.
\end{proof}

The theorem does not assert that $\mathcal E_t^{-1}$ is positive or completely positive; only linear invertibility is required. It applies, in particular, to finite-dimensional time-homogeneous quantum dynamical semigroups $\mathcal E_t=e^{t\mathcal L}$ at every finite time \cite{Gorini1976,Lindblad1976,Breuer2007}. Finite-time noninvertibility and singularities of dynamical maps have been studied in Refs.~\cite{Hou2012,Chruscinski2018}, while the associated singular behaviour and limitations of first-order time-local master equations were analysed in Ref.~\cite{Hegde2021}. The present theorem identifies a new operational consequence of such rank loss: it is necessary for finite-time exact concealment of an incompatible pair.

\subsubsection*{Depolarizing semigroup}

Consider, for $\gamma > 0$,
\begin{equation}
\mathcal D_t(\rho)=\eta_t\rho+(1-\eta_t)\frac{I}{2},
\qquad \eta_t=e^{-\gamma t}.
\label{eq:depolarizing_semigroup}
\end{equation}
For orthogonal Pauli measurements $X$ and $Z$, the effective Bloch vectors are $\eta_t\mathbf e_x$ and $\eta_t\mathbf e_z$. Equation~\eqref{eq:jm_condition} gives effective compatibility exactly when
\[
2\sqrt2\,\eta_t\le2,
\]
so that
\begin{equation}
t_{\rm eff}=\frac{\ln 2}{2\gamma}.
\label{eq:effective_time}
\end{equation}
At every finite time, however, $\eta_t>0$ and $\mathcal D_t^\dagger$ is injective. Therefore the original Pauli pair is not exactly concealed at any finite time by Corollary~\ref{cor:invertible}. In the limit $t\to\infty$, the channel approaches the completely depolarizing map $\Delta(\rho)=I/2$, whose adjoint $\Delta^\dagger(A)=\Tr(A)I/2$ has the traceless Hermitian subspace as its kernel. Thus
\begin{equation}
t_{\rm eff}=\frac{\ln 2}{2\gamma},
\qquad t_{\rm con}=\infty.
\label{eq:time_separation}
\end{equation}
This explicitly demonstrates finite-time loss of effective incompatibility without finite-time exact operational erasure. This identifies the kernel transition, rather than merely the magnitude of decoherence, as the mechanism responsible for exact concealment of an incompatible pair. Indeed, rank loss is necessary: if an incompatible POVM pair is exactly concealed by $\mathcal E_t$, then $\mathcal E_t^\dagger$ must be noninjective.

The quantitative obstruction theorem further strengthens this: for every finite time, the approximate-concealment error has a positive lower bound controlled by the minimum gain of $\mathcal E_t^\dagger$. This remains true even for $t\ge t_{\rm eff}$, where the effective measurements are compatible.

\section{Positive Retractions and Qubit Classification}
\label{sec:retractions}

Once a channel hides observable directions, can each operational equivalence class be represented by a physically valid POVM? If a positive, unital, idempotent retraction exists, concealment reduces exactly to joint measurability of canonical representatives. We then classify the qubit channels admitting such a reduction.

\subsection{Positive-retraction principle}

\begin{theorem}[Positive-retraction principle]
\label{thm:positive_retraction}
Let $\mathcal E$ be a finite-dimensional quantum channel. Suppose there exists a real-linear map
\[
\Pi:\Herm(\mathcal H_{\rm out})
\longrightarrow
\Herm(\mathcal H_{\rm out})
\]
that is positive, unital, and idempotent, and satisfies
\[
\ker(\Pi)=\ker(\mathcal E^\dagger).
\]
Then, for every pair of finite-outcome POVMs $(M,N)$,
\begin{equation}
\begin{aligned}
&(M,N)\text{ is concealed by }\mathcal E\\
&\quad\Longleftrightarrow\quad
\bigl(\Pi(M),\Pi(N)\bigr)
\text{ is jointly measurable}.
\end{aligned}
\label{eq:positive_retraction_concealment}
\end{equation}
Its exact robustness consequence is established in Corollary~\ref{cor:positive_retraction_robustness}.
\end{theorem}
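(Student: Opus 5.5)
The plan is to reduce everything to Theorem~\ref{thm:necessary_sufficient}, which says that $(M,N)$ is concealed if and only if there are compatible POVMs $(F,G)$ with $F_a-M_a\in\ker(\mathcal E^\dagger)$ and $G_b-N_b\in\ker(\mathcal E^\dagger)$. I would first record two consequences of the hypotheses on $\Pi$.

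First, idempotence gives $\Pi(X-\Pi(X))=\Pi(X)-\Pi^2(X)=0$ for every Hermitian $X$. Hence
\[
X-\Pi(X)\in\ker(\Pi)=\ker(\mathcal E^\dagger),
\]
so $\Pi(X)\sim X$ and $\Pi(X)$ is a representative of the operational class $[X]$. Second, by linearity, $\Pi(X)=\Pi(Y)$ holds if and only if $X-Y\in\ker(\Pi)=\ker(\mathcal E^\dagger)$. So $\Pi$ is constant on each affine fibre and separates distinct fibres.

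I also need that $\Pi$ maps POVMs to POVMs. Positivity gives $\Pi(M_a)\ge0$, and unitality together with linearity gives $\sum_a\Pi(M_a)=\Pi(I)=I$. Thus $\Pi(M)$ and $\Pi(N)$ are genuine POVMs, and the right-hand side of Eq.~\eqref{eq:positive_retraction_concealment} makes sense.

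\emph{($\Leftarrow$)} Suppose $(\Pi(M),\Pi(N))$ is jointly measurable. Take $F_a:=\Pi(M_a)$ and $G_b:=\Pi(N_b)$; these are compatible by assumption. By the first consequence, $F_a-M_a$ and $G_b-N_b$ lie in $\ker(\mathcal E^\dagger)$, so $\mathcal E^\dagger(F_a)=\mathcal E^\dagger(M_a)$ and $\mathcal E^\dagger(G_b)=\mathcal E^\dagger(N_b)$. Theorem~\ref{thm:necessary_sufficient} then gives concealment.

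\emph{($\Rightarrow$)} Suppose $(M,N)$ is concealed, with compatible representatives $(F,G)$ and a joint POVM $\{J_{ab}\}$ for them. By the second consequence, $\Pi(F_a)=\Pi(M_a)$ and $\Pi(G_b)=\Pi(N_b)$. Define $K_{ab}:=\Pi(J_{ab})$. Positivity of $\Pi$ gives $K_{ab}\ge0$. Linearity gives $\sum_bK_{ab}=\Pi(F_a)=\Pi(M_a)$ and $\sum_aK_{ab}=\Pi(G_b)=\Pi(N_b)$. Unitality gives $\sum_{ab}K_{ab}=\Pi(I)=I$. Hence $\{K_{ab}\}$ is a joint POVM for $(\Pi(M),\Pi(N))$. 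This mirrors the argument of Proposition~\ref{prop:distinction}, with $\Pi$ in place of $\mathcal E^\dagger$.

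The only delicate point, which I expect to be the main obstacle, is in this forward direction. The construction must use plain positivity of $\Pi$ rather than complete positivity, and a joint POVM only requires positive effects, so positivity is enough. I also need that $\Pi$, although only real-linear on $\Herm(\mathcal H_{\rm out})$, is applied only to Hermitian effects, which is the case throughout. Finally, the argument must not require $\Pi(J_{ab})$ to lie in the fibre of $J_{ab}$ in any stronger sense than the marginal identities above.
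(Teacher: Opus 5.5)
Your proposal is correct and follows essentially the same route as the paper's proof. Both use idempotence to get $A-\Pi(A)\in\ker(\Pi)=\ker(\mathcal E^\dagger)$. The forward direction then pushes a joint POVM for the compatible representatives through the positive unital map $\Pi$, and the converse takes $(\Pi(M),\Pi(N))$ itself as the compatible representative.
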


\begin{proof}
Since $\Pi^2=\Pi$,
\[
\Pi\!\left(A-\Pi(A)\right)=0
\]
for every Hermitian $A$. Hence
\[
A-\Pi(A)\in\ker(\Pi)=\ker(\mathcal E^\dagger).
\]
Positivity and unitality imply that $\Pi$ maps POVMs to POVMs and maps a joint POVM componentwise to a joint POVM.

Suppose first that $(M,N)$ is concealed. Then there exist compatible POVMs $(F,G)$ satisfying
\[
F_a-M_a\in\ker(\mathcal E^\dagger),
\qquad
G_b-N_b\in\ker(\mathcal E^\dagger).
\]
Therefore $\Pi(F_a)=\Pi(M_a)$ and $\Pi(G_b)=\Pi(N_b)$. Applying $\Pi$ componentwise to a joint POVM for $(F,G)$ proves that $(\Pi(M),\Pi(N))$ is compatible. Conversely, if $(\Pi(M),\Pi(N))$ is compatible, then it is itself a compatible representative of $(M,N)$ because $M_a-\Pi(M_a),N_b-\Pi(N_b)\in\ker(\mathcal E^\dagger)$. This proves the concealment equivalence.
\end{proof}

\begin{remark}
The theorem separates two questions. The adjoint kernel determines whether concealment is possible at all, while the existence of a positive retraction determines whether the pair-dependent feasibility problem can be reduced exactly to ordinary joint measurability. The same structure yields the robustness reduction in Corollary~\ref{cor:positive_retraction_robustness}. Characterizing the existence of positive retractions in dimensions $d\ge3$, beyond the block-decomposable class and the real-symmetric construction below, remains an open structural problem.
\end{remark}

\subsection{Classification for arbitrary qubit channels}

We now characterize exactly which qubit channels admit positive retractions, extending the unital-qubit result to arbitrary qubit channels.

A general qubit channel has the affine Bloch form
\begin{equation}
\mathcal E\!\left(\frac{I+\mathbf r\cdot\boldsymbol\sigma}{2}\right)
=
\frac{I+(T\mathbf r+\mathbf t)\cdot\boldsymbol\sigma}{2},
\label{eq:qubit_bloch}
\end{equation}
where $T$ is a real $3\times3$ matrix and $\mathbf t\in\mathbb R^3$ \cite{FujiwaraAlgoet1999,Ruskai2002}. Complete positivity imposes constraints on $T$ and $\mathbf t$, but for our classification we only need the linear structure. The adjoint acts on Hermitian operators as
\[
\mathcal E^\dagger\!\left[\frac12(\alpha I+\mathbf a\cdot\boldsymbol\sigma)\right]
=
\frac12\left[(\alpha+\mathbf t\cdot\mathbf a)I+(T^{\mathsf T}\mathbf a)\cdot\boldsymbol\sigma\right].
\]

\begin{theorem}[Positive-retraction classification for qubit channels]
\label{thm:all_qubit_retraction}
Let $\mathcal E$ be a qubit channel with affine Bloch representation $\mathbf r\mapsto T\mathbf r+\mathbf t$. There exists a positive, unital, idempotent, real-linear map
\[
\Pi:\Herm(\mathbb C^2)\longrightarrow\Herm(\mathbb C^2)
\]
satisfying
\[
\ker(\Pi)=\ker(\mathcal E^\dagger)
\]
if and only if either
\[
\rank(T)=0
\]
or
\[
\mathbf t\in\im(T).
\]

If $\rank(T)>0$ and $\mathbf t\in\im(T)$, one may choose
\[
\Pi\!\left[
\frac12(\alpha I+\mathbf a\cdot\boldsymbol\sigma)
\right]
=
\frac12
\left[
\alpha I+
(P_{\ker(T^{\mathsf T})^\perp}\mathbf a)
\cdot\boldsymbol\sigma
\right].
\]

If $\rank(T)=0$, the channel is constant: $\mathcal E(\rho)=\tau$. Hence
\[
\mathcal E^\dagger(A)=\Tr(\tau A)I
\]
and
\[
\ker(\mathcal E^\dagger)
=
\{A\in\Herm(\mathbb C^2):\Tr(\tau A)=0\}.
\]
The map
\[
\Pi(A)=\Tr(\tau A)I
\]
is positive, unital, and idempotent, and has exactly this kernel.
\end{theorem}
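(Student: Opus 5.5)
The plan is to parametrize every candidate $\Pi$ in Bloch coordinates, translate the three conditions (unital, idempotent, kernel equal to $\ker(\mathcal E^\dagger)$) into linear-algebraic constraints, and then check positivity on the boundary of the Bloch ball. Throughout write $V:=\ker(T^{\mathsf T})\subseteq\mathbb R^3$.

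\emph{Step 1 (the target kernel).} From the adjoint formula, $\frac12(\alpha I+\mathbf a\cdot\boldsymbol\sigma)\in\ker(\mathcal E^\dagger)$ iff $T^{\mathsf T}\mathbf a=0$ and $\alpha=-\mathbf t\cdot\mathbf a$. Hence
\[
\ker(\mathcal E^\dagger)=\{(-\mathbf t\cdot\mathbf v,\mathbf v):\mathbf v\in V\}.
\]
Recall that $\im(T)=V^\perp$. So $\mathbf t\in\im(T)$ is equivalent to $\mathbf t\cdot\mathbf v=0$ for all $\mathbf v\in V$.

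\emph{Step 2 (general form of $\Pi$).} Any real-linear unital $\Pi$ has the form
\[
\Pi\bigl[\tfrac12(\alpha I+\mathbf a\cdot\boldsymbol\sigma)\bigr]=\tfrac12\bigl[(\alpha+\boldsymbol\ell\cdot\mathbf a)I+(S\mathbf a)\cdot\boldsymbol\sigma\bigr],
\]
for some $\boldsymbol\ell\in\mathbb R^3$ and real $3\times3$ matrix $S$. Expanding $\Pi^2=\Pi$ gives two conditions: $S^2=S$ and $\boldsymbol\ell^{\mathsf T}S=0$, so $\boldsymbol\ell\perp\im S$. The kernel of $\Pi$ is $\{(-\boldsymbol\ell\cdot\mathbf v,\mathbf v):\mathbf v\in\ker S\}$. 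Matching with Step 1 therefore forces $\ker S=V$ and $\boldsymbol\ell\cdot\mathbf v=\mathbf t\cdot\mathbf v$ on $V$. Positivity means $\alpha+\boldsymbol\ell\cdot\mathbf a\ge|S\mathbf a|$ whenever $|\mathbf a|\le\alpha$, and it suffices to check this at $\alpha=1$, $|\mathbf a|=1$.

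\emph{Step 3 (sufficiency).}
If $\rank T=0$, the channel is constant. Then $\Pi(A)=\Tr(\tau A)I$ is positive because $\tau\ge0$, is unital, and is idempotent because $\Tr\tau=1$. Its kernel is visibly $\ker(\mathcal E^\dagger)$.
If $\rank T>0$ and $\mathbf t\perp V$, take $\boldsymbol\ell=0$ and $S=P_{V^\perp}$. This $S$ is idempotent with $\ker S=V$. The kernel is $\{(0,\mathbf v):\mathbf v\in V\}$, which equals $\ker(\mathcal E^\dagger)$ because $\mathbf t\cdot\mathbf v=0$ on $V$. Positivity follows from $|P_{V^\perp}\mathbf a|\le|\mathbf a|\le\alpha$.

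\emph{Step 4 (necessity; the main step).} Suppose $\rank T>0$ and $\mathbf t\notin\im(T)$, and that $\Pi$ exists. Then $V\neq\{0\}$, and $\im S$ is a complement of $V$ of dimension $\rank T\ge1$. Choose a unit vector $\mathbf u\in\im S$, so that $S\mathbf u=\mathbf u$ and $\boldsymbol\ell\cdot\mathbf u=0$ by Step 2. Since $P_V\mathbf t\neq0$, choose a unit vector $\mathbf v\in V$ with $\boldsymbol\ell\cdot\mathbf v=\mathbf t\cdot\mathbf v=-c<0$; for instance $\mathbf v=-P_V\mathbf t/|P_V\mathbf t|$. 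We expect that $\mathbf u,\mathbf v$ need not be orthogonal, because $S$ is an oblique projection; this is the delicate point. To handle it, first orthogonalize $\mathbf u$ against $\mathbf v$ while staying in $\Span\{\mathbf u,\mathbf v\}$: set $\mathbf w=(\mathbf u-(\mathbf u\cdot\mathbf v)\mathbf v)/\sqrt{1-(\mathbf u\cdot\mathbf v)^2}$. This vector still satisfies $S\mathbf w\propto\mathbf u$, with $|S\mathbf w|=1/\sqrt{1-(\mathbf u\cdot\mathbf v)^2}\ge1$, and $\boldsymbol\ell\cdot\mathbf w=\kappa$ for some finite $\kappa$.

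Now test positivity at the unit vector $\mathbf a_\theta=\cos\theta\,\mathbf w+\sin\theta\,\mathbf v$. This gives $|S\mathbf a_\theta|=|\cos\theta|\,|S\mathbf w|$ and $\boldsymbol\ell\cdot\mathbf a_\theta=\kappa\cos\theta-c\sin\theta$. Positivity at $\theta=0$ requires $|S\mathbf w|\le1+\kappa$. Expanding to first order in small $\theta>0$, the left side is $|S\mathbf w|(1-O(\theta^2))$, while the right side decreases linearly as $1+\kappa-c\theta+O(\theta^2)$. This contradicts positivity unless $|S\mathbf w|<1+\kappa$ strictly.

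To close that gap, use the freedom in $\mathbf u$: replace $\mathbf u$ by $-\mathbf u$, which flips the sign of $\kappa$. Positivity at $\theta=0$ for both $\pm\mathbf w$ then forces $|S\mathbf w|\le1-|\kappa|$. Combined with $|S\mathbf w|\ge1$, this gives $\kappa=0$ and $|S\mathbf w|=1$, so the inequality at $\theta=0$ is an equality. The linear decrease in $\theta$ then violates positivity, which is the desired contradiction.
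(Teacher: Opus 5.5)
Your proof is correct. Steps 1--3 coincide with the paper's: its $\mathbf u,R$ are your $\boldsymbol\ell,S$, with the same constraints $S^2=S$, $S^{\mathsf T}\boldsymbol\ell=0$ and the same sufficiency constructions. Your necessity argument, however, takes a genuinely different route.

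The paper dualizes. Since the positive cone is self-dual, $\Pi$ is positive iff its Hilbert--Schmidt adjoint $(\beta,\mathbf b)\mapsto(\beta,\beta\boldsymbol\ell+S^{\mathsf T}\mathbf b)$ is positive. That adjoint is trace preserving, so it must map the Bloch ball into itself. Taking a unit $\mathbf y$ with $S^{\mathsf T}\mathbf y=\mathbf y$ gives $|\boldsymbol\ell\pm\mathbf y|\le1$, and the parallelogram law then forces $\boldsymbol\ell=0$. Kernel matching finally reads $\mathbf t\cdot\mathbf v=\boldsymbol\ell\cdot\mathbf v=0$ on $V$.

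You instead stay with effects in the Heisenberg picture, so you must handle the obliqueness of $\im S$ relative to $\ker S$ by hand. That is the role of three steps: the orthogonalized $\mathbf w$ with $|S\mathbf w|\ge1$; the two-sided test at $\pm\mathbf w$, which forces $\kappa=0$ and $|S\mathbf w|=1$; and the tangential perturbation toward $\mathbf v$, where the linear loss $c\sin\theta$ beats the quadratic slack $1-\cos\theta$. All of this checks out. You implicitly use $|\mathbf u\cdot\mathbf v|<1$ when normalizing $\mathbf w$; this holds because $\im S\cap\ker S=\{0\}$. Also, flipping $\mathbf u\mapsto-\mathbf u$ is the same as simply testing at $\mathbf a=-\mathbf w$.

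The paper's route is shorter and directly gives the channel-independent fact that every positive, unital, idempotent qubit map has $\boldsymbol\ell=0$. Yours avoids the duality step and uses only the elementary criterion $\alpha\ge|\mathbf a|$, at the price of a more delicate boundary argument. Applied to an arbitrary unit vector of $\ker S$, it yields the same conclusion $\boldsymbol\ell=0$.
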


\begin{proof}
Every unital Hermiticity-preserving qubit map can be written in the form
\[
\Pi:(\alpha,\mathbf a)\longmapsto(\alpha+\mathbf u\cdot\mathbf a,R\mathbf a),
\]
where $R$ is a real $3\times3$ matrix and $\mathbf u\in\mathbb R^3$. Unitality fixes the identity component. Idempotence implies
\[
R^2=R,\qquad R^{\mathsf T}\mathbf u=0.
\]

If $\rank(T)>0$, then
\[
\dim\ker(\mathcal E^\dagger)=3-\rank(T)<3.
\]
Kernel equality therefore rules out $R=0$.

Choose a unit vector $\mathbf y\in\im(R^{\mathsf T})$. Since $(R^{\mathsf T})^2=R^{\mathsf T}$, we have $R^{\mathsf T}\mathbf y=\mathbf y$. Because the positive cone is self-dual under the Hilbert-Schmidt pairing, $\Pi$ is positive iff $\Pi^\dagger$ is positive. Unitality of $\Pi$ makes $\Pi^\dagger$ trace preserving. Positivity of $\Pi$ is therefore equivalent to preservation of the Bloch ball by its adjoint:
\[
|\mathbf u+R^{\mathsf T}\mathbf r|\le1
\qquad(|\mathbf r|\le1).
\]
Taking $\mathbf r=\pm\mathbf y$ gives
\[
|\mathbf u+\mathbf y|\le1,
\qquad
|\mathbf u-\mathbf y|\le1.
\]
Adding the squared inequalities yields
\[
2|\mathbf u|^2+2\le2,
\]
and hence $\mathbf u=0$.

It follows that
\[
\ker(\Pi)
=
\{(0,\mathbf a):\mathbf a\in\ker R\},
\]
whereas
\[
\ker(\mathcal E^\dagger)
=
\{(-\mathbf t\cdot\mathbf a,\mathbf a):
\mathbf a\in\ker(T^{\mathsf T})\}.
\]
Equality of these kernels implies both
\[
\ker R=\ker(T^{\mathsf T})
\]
and
\[
\mathbf t\cdot\mathbf a=0
\quad
\forall\mathbf a\in\ker(T^{\mathsf T}).
\]
The latter condition is equivalent to
\[
\mathbf t\in
\ker(T^{\mathsf T})^\perp
=
\im(T).
\]

Conversely, suppose $\rank(T)>0$ and $\mathbf t\in\im(T)$. Define $P=P_{\ker(T^{\mathsf T})^\perp}$ and
\[
\Pi(\alpha,\mathbf a)=(\alpha,P\mathbf a).
\]
Orthogonal projection is contractive, so $\Pi$ is positive; it is also unital and idempotent. Moreover,
\[
\ker(\Pi)
=
\{(0,\mathbf a):\mathbf a\in\ker(T^{\mathsf T})\}.
\]
Since $\mathbf t\in\im(T)=\ker(T^{\mathsf T})^\perp$, one has $\mathbf t\cdot\mathbf a=0$ for every $\mathbf a\in\ker(T^{\mathsf T})$. Therefore
\[
\ker(\mathcal E^\dagger)
=
\{(0,\mathbf a):\mathbf a\in\ker(T^{\mathsf T})\}
=
\ker(\Pi).
\]

The rank-zero case is handled by the constant-channel construction.
\end{proof}

\begin{corollary}[Unital qubit retraction]
Every unital qubit channel ($\mathbf t=0$) admits a positive retraction, recovering the canonical projection of Section~\ref{sec:canonical}.
\end{corollary}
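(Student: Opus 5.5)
This is a direct specialization of Theorem~\ref{thm:all_qubit_retraction} to $\mathbf t=0$. The plan is first to check that the hypothesis of that theorem holds, and then to identify the resulting map with the canonical projection.

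For a unital qubit channel, $\mathcal E(I/2)=I/2$. In the affine Bloch form~\eqref{eq:qubit_bloch} this means $\mathbf t=0$. The zero vector lies in $\im(T)$ for every real $3\times3$ matrix $T$, so the condition ``$\rank(T)=0$ or $\mathbf t\in\im(T)$'' is satisfied in all cases. Theorem~\ref{thm:all_qubit_retraction} therefore guarantees a positive, unital, idempotent, real-linear $\Pi$ with $\ker(\Pi)=\ker(\mathcal E^\dagger)$.

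To make the retraction explicit, I would split on the rank of $T$.

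\emph{Case $\rank(T)>0$.} Take
\[
\Pi(\alpha,\mathbf a)=(\alpha,P_{\ker(T^{\mathsf T})^\perp}\mathbf a)=(\alpha,P_{\im(T)}\mathbf a).
\]
This is the orthogonal projection of the Bloch component onto $\im(T)$, with the identity component left untouched. It is the canonical projection described in Section~\ref{sec:canonical}.

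\emph{Case $\rank(T)=0$.} A unital channel with $T=0$ is the completely depolarizing map, so $\tau=I/2$. The constant-channel formula from the theorem then gives $\Pi(A)=\Tr(A)I/2$. This agrees with the projection formula at $P=0$, so both cases are covered by the single expression $\Pi(\alpha,\mathbf a)=(\alpha,P_{\im T}\mathbf a)$.

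The only possible obstacle is this identification: I have to confirm that the canonical projection of Section~\ref{sec:canonical} uses the same formula, namely projecting the Bloch vector onto $\ker(T^{\mathsf T})^\perp$. Everything else follows immediately from the classification theorem. As a self-contained check, I would verify directly that $\ker(\mathcal E^\dagger)=\{(0,\mathbf a):T^{\mathsf T}\mathbf a=0\}$ when $\mathbf t=0$, and that positivity follows because an orthogonal projection does not increase the norm of a Bloch vector.
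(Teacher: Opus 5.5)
Your proposal is correct and follows the paper's own route. The corollary is the specialization of Theorem~\ref{thm:all_qubit_retraction} to $\mathbf t=0\in\im(T)$, and the resulting map $(\alpha,\mathbf a)\mapsto(\alpha,P_{\ker(T^{\mathsf T})^\perp}\mathbf a)$ is exactly $\Pi_{\mathcal E}$ of Eq.~\eqref{eq:canonical_qubit_projection}, whose properties Lemma~\ref{lem:canonical_projection} verifies directly; your extra check that the rank-zero retraction $\Tr(A)I/2$ coincides with that formula at $P_{K^\perp}=0$ is a consistency detail the paper leaves implicit.
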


\subsection{Retraction obstruction}

The classification theorem reveals that noninjectivity of the adjoint does not guarantee the existence of a positive retraction. The following example demonstrates this distinction.

\begin{example}[Obstruction example]
Consider the qubit channel with
\[
T=\diag\!\left(\frac14,\frac14,0\right),\qquad \mathbf t=\left(0,0,\frac12\right).
\]
Since $\mathbf t\notin\im(T)=\Span\{\mathbf e_1,\mathbf e_2\}$, no positive retraction exists by Theorem~\ref{thm:all_qubit_retraction}. Nevertheless, $\ker(\mathcal E^\dagger)\neq\{0\}$, so by Theorem~\ref{thm:universal_concealment}, this channel conceals some incompatible pair.

The channel is completely positive; its unnormalized Choi matrix is
\[
J_{\mathcal E}
=
\begin{pmatrix}
\frac34&0&0&\frac14\\
0&\frac14&0&0\\
0&0&\frac34&0\\
\frac14&0&0&\frac14
\end{pmatrix},
\]
with eigenvalues
\[
\frac14,\qquad
\frac34,\qquad
\frac{2-\sqrt2}{4},\qquad
\frac{2+\sqrt2}{4},
\]
all nonnegative.
\end{example}

\begin{example}[Rank-one measure-and-prepare channel]
A stronger example demonstrating nontrivial rank-deficient content is
\[
T=\diag\left(0,0,\frac12\right),\qquad \mathbf t=\left(0,0,\frac14\right).
\]
Indeed, define
\[
\tau_0=\frac12\left(I+\frac34\sigma_z\right),
\qquad
\tau_1=\frac12\left(I-\frac14\sigma_z\right).
\]
Then
\[
\mathcal E(\rho)
=
\langle0|\rho|0\rangle\,\tau_0
+
\langle1|\rho|1\rangle\,\tau_1,
\]
which is manifestly a measure-and-prepare, and hence entanglement-breaking, channel \cite{Horodecki2003}. Its affine Bloch representation is
\[
(r_x,r_y,r_z)
\longmapsto
\left(0,0,\frac12r_z+\frac14\right).
\]
This is a nonunital measure-and-prepare channel with $\mathbf t\in\im(T)=\Span\{\mathbf e_z\}$. Its adjoint is noninjective, and the positive retraction is genuine pinching onto $\Span\{I,\sigma_z\}$. This example shows that nonunital channels can admit nontrivial positive retractions, in contrast to the obstruction example above.
\end{example}

\subsection{Canonical projection for unital qubit channels}
\label{sec:canonical}

We now show that every unital qubit channel satisfies the hypotheses of Theorem~\ref{thm:positive_retraction}. The required retraction is the Euclidean orthogonal projection of Bloch vectors onto the accessible observable subspace.

Let $\mathcal E$ be a unital qubit channel with Bloch matrix $T$, and let
\[
K:=\ker(T^{\mathsf T}).
\]
Denote by $P_{K^\perp}$ the Euclidean orthogonal projection of $\mathbb R^3$ onto $K^\perp$. Define the real-linear map $\Pi_{\mathcal E}$ on Hermitian qubit operators by
\begin{equation}
\Pi_{\mathcal E}\!\left[
\frac12
\left(
\alpha I+\mathbf a\cdot\boldsymbol\sigma
\right)
\right]
:=
\frac12
\left(
\alpha I+
(P_{K^\perp}\mathbf a)\cdot\boldsymbol\sigma
\right).
\label{eq:canonical_qubit_projection}
\end{equation}
The map is extended componentwise to POVMs and POVM pairs.

\begin{lemma}[Properties of the canonical projection]
\label{lem:canonical_projection}
The map $\Pi_{\mathcal E}$ is real-linear, positive, unital, trace preserving, and idempotent. Moreover,
\begin{equation}
A-\Pi_{\mathcal E}(A)
\in\ker(\mathcal E^\dagger)
\label{eq:projection_kernel_difference}
\end{equation}
for every Hermitian qubit operator $A$, and
\begin{equation}
\ker(\Pi_{\mathcal E})
=
\ker(\mathcal E^\dagger)
=
\left\{
\frac12\,\mathbf k\cdot\boldsymbol\sigma:
\mathbf k\in K
\right\}.
\label{eq:projection_kernel_equality}
\end{equation}
Consequently,
\[
\mathcal E^\dagger(A)
=
\mathcal E^\dagger(\Pi_{\mathcal E}(A)).
\]
\end{lemma}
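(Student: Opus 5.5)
The plan is to work entirely in the Bloch parametrization $A=\frac12(\alpha I+\mathbf a\cdot\boldsymbol\sigma)$, with $\alpha=\Tr A$. For a unital channel ($\mathbf t=0$) the adjoint formula from the classification subsection then reads
\[
\mathcal E^\dagger(A)=\tfrac12\bigl(\alpha I+(T^{\mathsf T}\mathbf a)\cdot\boldsymbol\sigma\bigr).
\]
With this form every claim becomes a statement about the linear map $(\alpha,\mathbf a)\mapsto(\alpha,P_{K^\perp}\mathbf a)$.

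\textbf{Elementary properties.} Real-linearity is immediate because $P_{K^\perp}$ is linear. Unitality holds since $I$ corresponds to $(\alpha,\mathbf a)=(2,\mathbf 0)$, which is fixed. Trace preservation holds because the identity component $\alpha$ is untouched. Idempotence follows from $P_{K^\perp}^2=P_{K^\perp}$.

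\textbf{Positivity.} A qubit Hermitian operator satisfies $A\ge0$ iff $\alpha\ge|\mathbf a|$. Since an orthogonal projection is a contraction, $|P_{K^\perp}\mathbf a|\le|\mathbf a|\le\alpha$, so positive operators map to positive operators. This mirrors the converse direction of Theorem~\ref{thm:all_qubit_retraction}; alternatively, I could simply cite that theorem with $\mathbf t=0\in\im(T)$.

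\textbf{Kernel identities.} First compute $\ker(\mathcal E^\dagger)$. The condition $\mathcal E^\dagger(A)=0$ forces $\alpha=0$ and $T^{\mathsf T}\mathbf a=0$, i.e.\ $\mathbf a\in K$. This gives the right-hand description in Eq.~\eqref{eq:projection_kernel_equality}.

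Next, $\Pi_{\mathcal E}(A)=0$ iff $\alpha=0$ and $P_{K^\perp}\mathbf a=0$, i.e.\ $\mathbf a\in K$, so the two kernels coincide.

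For Eq.~\eqref{eq:projection_kernel_difference}, observe that
\[
A-\Pi_{\mathcal E}(A)=\tfrac12\bigl((I-P_{K^\perp})\mathbf a\bigr)\cdot\boldsymbol\sigma=\tfrac12(P_K\mathbf a)\cdot\boldsymbol\sigma,
\]
which lies in the kernel just computed. Equivalently, this follows from idempotence, as in the proof of Theorem~\ref{thm:positive_retraction}.

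The final identity $\mathcal E^\dagger(A)=\mathcal E^\dagger(\Pi_{\mathcal E}(A))$ then follows by applying $\mathcal E^\dagger$ to Eq.~\eqref{eq:projection_kernel_difference}.

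\textbf{Main obstacle.} There is no deep step; the only point requiring care is the kernel computation. It is exactly unitality, $\mathbf t=0$, that removes the coupling term $\mathbf t\cdot\mathbf a$ in the identity component. This is what makes $\ker(\mathcal E^\dagger)$ traceless and aligned with $K$, so I would state that use of $\mathbf t=0$ explicitly.
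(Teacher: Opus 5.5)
Your proposal is correct and follows essentially the same route as the paper: Bloch parametrization, positivity via contractivity of $P_{K^\perp}$ against $\alpha\ge|\mathbf a|$, and the kernel computation $\alpha=0$, $T^{\mathsf T}\mathbf a=0$ from the unital adjoint formula. Your explicit remark that $\mathbf t=0$ removes the $\mathbf t\cdot\mathbf a$ coupling in the identity component is a useful clarification of a step the paper leaves implicit.
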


\begin{proof}
Write
\[
A=
\frac12
\left(
\alpha I+\mathbf a\cdot\boldsymbol\sigma
\right).
\]
If $A\succeq0$, then $\alpha\ge|\mathbf a|$. Since orthogonal projection is contractive,
\[
|P_{K^\perp}\mathbf a|
\le|\mathbf a|
\le\alpha,
\]
and therefore $\Pi_{\mathcal E}(A)\succeq0$. Thus $\Pi_{\mathcal E}$ is positive. Linearity, unitality, trace preservation, and idempotence follow directly from its definition.

Furthermore,
\[
\mathbf a-P_{K^\perp}\mathbf a\in K
=
\ker(T^{\mathsf T}).
\]
For a unital qubit channel, the adjoint acts on traceless observables according to $T^{\mathsf T}$. Hence
\[
A-\Pi_{\mathcal E}(A)
=
\frac12
\left[
(\mathbf a-P_{K^\perp}\mathbf a)
\cdot\boldsymbol\sigma
\right]
\in\ker(\mathcal E^\dagger),
\]
which proves Eq.~\eqref{eq:projection_kernel_difference}.

For Eq.~\eqref{eq:projection_kernel_equality}, if
\[
X=\frac12(\alpha I+\mathbf x\cdot\boldsymbol\sigma)
\in\ker(\mathcal E^\dagger),
\]
then
\[
0=\mathcal E^\dagger(X)
=
\frac12\left(
\alpha I+
T^{\mathsf T}\mathbf x\cdot\boldsymbol\sigma
\right).
\]
Hence $\alpha=0$ and $\mathbf x\in K$. Therefore $\Pi_{\mathcal E}(X)=0$. Conversely, if $\Pi_{\mathcal E}(X)=0$, then $\alpha=0$ and $\mathbf x\in K$, so $\mathcal E^\dagger(X)=0$. Thus $\ker(\Pi_{\mathcal E})=\ker(\mathcal E^\dagger)$.
\end{proof}

\begin{corollary}[Canonical reduction for unital qubit channels]
\label{cor:qubit_retraction}
Let $\mathcal E$ be a unital qubit channel and let $\Pi_{\mathcal E}$ be defined by Eq.~\eqref{eq:canonical_qubit_projection}. Then $\Pi_{\mathcal E}$ satisfies the hypotheses of Theorem~\ref{thm:positive_retraction}. Consequently, for every pair of finite-outcome qubit POVMs $(M,N)$,
\begin{equation}
\begin{aligned}
&(M,N)\text{ is concealed by }\mathcal E\\
&\quad\Longleftrightarrow\quad
\bigl(\Pi_{\mathcal E}(M),\Pi_{\mathcal E}(N)\bigr)
\text{ is jointly measurable}.
\end{aligned}
\label{eq:qubit_retraction_concealment}
\end{equation}
\end{corollary}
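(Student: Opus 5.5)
The statement follows by checking the hypotheses of Theorem~\ref{thm:positive_retraction} for $\Pi=\Pi_{\mathcal E}$ and then invoking that theorem. Lemma~\ref{lem:canonical_projection} already supplies almost everything. It shows that $\Pi_{\mathcal E}$ is real-linear on $\Herm(\mathbb C^2)$, positive (orthogonal projection contracts Bloch vectors, so $\alpha\ge|\mathbf a|$ gives $\alpha\ge|P_{K^\perp}\mathbf a|$), unital, and idempotent (since $P_{K^\perp}^2=P_{K^\perp}$). It also gives the kernel identity $\ker(\Pi_{\mathcal E})=\ker(\mathcal E^\dagger)$ in Eq.~\eqref{eq:projection_kernel_equality}. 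So the first step is simply to list these four properties and the kernel equality, citing the lemma.

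The one point needing care is the kernel equality. It uses unitality of $\mathcal E$ ($\mathbf t=0$), so that $\mathcal E^\dagger$ acts as $(\alpha,\mathbf a)\mapsto(\alpha,T^{\mathsf T}\mathbf a)$ with no mixing of $\mathbf a$ into the identity component. Hence $\mathcal E^\dagger(X)=0$ forces both $\alpha=0$ and $\mathbf x\in K$. I would restate this explicitly. Alternatively, one can note that it is the special case $\mathbf t=0\in\im(T)$ of Theorem~\ref{thm:all_qubit_retraction}, whose explicit $\Pi$ coincides with Eq.~\eqref{eq:canonical_qubit_projection}.

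Two degenerate cases should be handled separately. If $T=0$ (the completely depolarizing channel), then $K=\mathbb R^3$ and $\Pi_{\mathcal E}(A)=\tfrac12\Tr(A)I$, which is still positive, unital, idempotent, and has the traceless subspace as its kernel. If $T$ is invertible, then $\Pi_{\mathcal E}=\id$, both kernels are trivial, and the equivalence reduces to Corollary~\ref{cor:invertible}.

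With the hypotheses verified, Theorem~\ref{thm:positive_retraction} applied to arbitrary finite-outcome POVMs $(M,N)$ gives Eq.~\eqref{eq:qubit_retraction_concealment} directly. For this I need $\Pi_{\mathcal E}$ to act componentwise, sending POVMs to POVMs and joint POVMs to joint POVMs, which follows from positivity, unitality, and linearity. The main obstacle is only the bookkeeping around unitality in the kernel equality; the rest is immediate.
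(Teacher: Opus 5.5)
Your proposal is correct and matches the paper's proof, which cites Lemma~\ref{lem:canonical_projection} for positivity, unitality, idempotence and $\ker(\Pi_{\mathcal E})=\ker(\mathcal E^\dagger)$, then applies Theorem~\ref{thm:positive_retraction}. Your separate treatment of $T=0$ and invertible $T$ is harmless but unnecessary, since the lemma's argument already covers every $T$, including $K=\mathbb R^3$ and $K=\{0\}$.
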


\begin{proof}
Lemma~\ref{lem:canonical_projection} establishes positivity, unitality, idempotence, and $\ker(\Pi_{\mathcal E})=\ker(\mathcal E^\dagger)$. The claim therefore follows directly from Theorem~\ref{thm:positive_retraction}.
\end{proof}

This corollary gives an explicit Bloch-space realization of the universal criterion for unital qubit channels: concealment is possible for some incompatible pair exactly when $\Pi_{\mathcal E}\neq\id$, which is equivalent to $\ker(\mathcal E^\dagger)\neq\{0\}$.

\begin{remark}
The map $\Pi_{\mathcal E}$ need not be completely positive. Complete positivity is not required here: positivity and unitality are sufficient to map POVMs and joint POVMs to POVMs and joint POVMs.
\end{remark}

\section{Higher-Dimensional Reductions}
\label{sec:higher_dim}

We now extend the positive-retraction method to higher dimensions through two structurally distinct classes: block-diagonal pinching reductions, where inaccessible coherences between subspaces are removed, and a genuinely non-block real-symmetric construction that applies in every finite dimension.

\subsection{Block-decomposition reduction}

This class includes decohering channels whose adjoints annihilate off-block coherences while remaining injective on the block-diagonal operator subspace.

Let $\{P_k\}_{k=1}^m$ be mutually orthogonal projections summing to the identity on $\mathcal H_{\rm out}$, and define the pinching map
\[
\Pi_{\mathsf P}(A)=\sum_{k=1}^m P_k A P_k.
\]

\begin{lemma}[Properties of the pinching map]
The pinching map $\Pi_{\mathsf P}$ is real-linear, positive, unital, trace preserving, and idempotent. Its kernel is
\[
\ker(\Pi_{\mathsf P}) = \{A\in\Herm(\mathcal H_{\rm out}): P_k A P_k=0\ \forall k\},
\]
i.e., the set of operators with vanishing diagonal blocks.
\end{lemma}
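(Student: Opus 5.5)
The plan is to check each listed property directly from the definition $\Pi_{\mathsf P}(A)=\sum_k P_kAP_k$, using only $P_k=P_k^\dagger=P_k^2$, $P_kP_l=\delta_{kl}P_k$ and $\sum_kP_k=I$. I will handle the algebraic properties first and the kernel last.

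\emph{Linearity and Hermiticity.} Each map $A\mapsto P_kAP_k$ is complex-linear, so $\Pi_{\mathsf P}$ is real-linear on $\Herm(\mathcal H_{\rm out})$. Since $(P_kAP_k)^\dagger=P_kA^\dagger P_k$, Hermitian operators are sent to Hermitian operators.

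\emph{Positivity.} If $A\ge0$, then for any vector $v$ we have $\langle v|P_kAP_k|v\rangle=\langle P_kv|A|P_kv\rangle\ge0$. Summing over $k$ gives $\Pi_{\mathsf P}(A)\ge0$.

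\emph{Unitality.} Because $P_k^2=P_k$, we get $\Pi_{\mathsf P}(I)=\sum_kP_k=I$.

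\emph{Trace preservation.} Cyclicity of the trace gives $\Tr(P_kAP_k)=\Tr(P_kA)$. Hence $\Tr\Pi_{\mathsf P}(A)=\Tr\bigl((\sum_kP_k)A\bigr)=\Tr A$.

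\emph{Idempotence.} We compute $\Pi_{\mathsf P}^2(A)=\sum_{k,l}P_lP_kAP_kP_l$. Orthogonality $P_lP_k=\delta_{kl}P_k$ removes every term with $k\neq l$, leaving $\sum_kP_kAP_k=\Pi_{\mathsf P}(A)$.

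\emph{Kernel.} This is the only step needing any thought. The inclusion $\supseteq$ is immediate: if $P_kAP_k=0$ for every $k$, the sum vanishes. For $\subseteq$, suppose $\sum_kP_kAP_k=0$ and multiply on both sides by $P_l$. Orthogonality gives $P_l\bigl(\sum_kP_kAP_k\bigr)P_l=P_lAP_l$, so $P_lAP_l=0$ for every $l$. Equivalently, the summands live in mutually orthogonal diagonal blocks $P_k\mathcal L(\mathcal H_{\rm out})P_k$, so their sum can vanish only if each summand does. The kernel therefore consists exactly of the Hermitian operators whose diagonal blocks all vanish, that is, operators supported purely on off-block coherences $P_kAP_l$ with $k\neq l$.

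I do not expect any genuine obstacle. The only point needing care is the kernel argument, specifically using the direct-sum block structure to separate the summands.
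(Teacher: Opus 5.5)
Your proposal is correct and takes the same route as the paper, which simply states that each property follows directly from the definition and justifies positivity by noting that $\Pi_{\mathsf P}(A)$ is a direct sum of positive blocks when $A\succeq 0$. You supply the routine verifications the paper leaves implicit, including the compression $P_l(\cdot)P_l$ that establishes the kernel characterization.
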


\begin{proof}
Each property follows directly from the definition. Positivity follows because $\Pi_{\mathsf P}(A)$ is a direct sum of positive operators when $A\succeq0$.
\end{proof}

\begin{theorem}[Block-decomposition reduction]
\label{thm:block_reduction}
Let $\mathcal E$ be a finite-dimensional quantum channel such that
\[
\ker(\mathcal E^\dagger)=\ker(\Pi_{\mathsf P})
\]
for some pinching map $\Pi_{\mathsf P}$ associated with a decomposition $\mathcal H_{\rm out}=\bigoplus_{k=1}^m\mathcal H_k$. Then a finite-outcome POVM pair $(M,N)$ is concealed by $\mathcal E$ if and only if, for every block $k$, the compressed pair
\[
\bigl(
\{P_k M_a P_k\}_a,
\{P_k N_b P_k\}_b
\bigr)
\]
is jointly measurable on $\mathcal H_k$.
\end{theorem}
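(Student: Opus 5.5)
The plan is to apply Theorem~\ref{thm:positive_retraction} with $\Pi=\Pi_{\mathsf P}$ and then split the resulting joint-measurability condition block by block. By the preceding lemma, $\Pi_{\mathsf P}$ is real-linear, positive, unital and idempotent, and by hypothesis $\ker(\Pi_{\mathsf P})=\ker(\mathcal E^\dagger)$. Theorem~\ref{thm:positive_retraction} therefore gives at once that $(M,N)$ is concealed by $\mathcal E$ if and only if the block-diagonal pair $(\Pi_{\mathsf P}(M),\Pi_{\mathsf P}(N))$, with effects $\sum_k P_kM_aP_k$ and $\sum_k P_kN_bP_k$, is jointly measurable on $\mathcal H_{\rm out}$. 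What remains is the lemma that a pair of block-diagonal POVMs $F_a=\bigoplus_k F_a^{(k)}$, $G_b=\bigoplus_k G_b^{(k)}$ is jointly measurable on $\mathcal H_{\rm out}$ exactly when each pair $(\{F^{(k)}_a\},\{G^{(k)}_b\})$ is jointly measurable on $\mathcal H_k$. Here $F^{(k)}_a=P_kM_aP_k$ restricted to $\mathcal H_k$ is a POVM on $\mathcal H_k$, since $\sum_a P_kM_aP_k=P_k$.

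For the direction from blocks to the whole space, take joint POVMs $J^{(k)}_{ab}$ on each $\mathcal H_k$ and set $J_{ab}=\bigoplus_k J^{(k)}_{ab}$. These operators are positive and sum to $\bigoplus_k I_{\mathcal H_k}=I$, and their marginals are $\bigoplus_k F^{(k)}_a=F_a$ and likewise $G_b$.

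For the converse, let $J_{ab}$ be a joint POVM for $(F,G)$ and compress it, $J^{(k)}_{ab}:=P_kJ_{ab}P_k|_{\mathcal H_k}$. Compression preserves positivity and is linear, so $\sum_b J^{(k)}_{ab}=P_kF_aP_k=F^{(k)}_a$ and $\sum_a J^{(k)}_{ab}=G^{(k)}_b$. The total sum is $P_kIP_k=I_{\mathcal H_k}$. Equivalently, one can observe that applying the positive unital map $\Pi_{\mathsf P}$ to $J$ leaves the block-diagonal marginals fixed, and then read off the blocks.

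I do not expect a serious obstacle, because the positive-retraction principle carries all the weight. The one point that needs care is that the compressed effects really define POVMs on $\mathcal H_k$, which means checking normalization to $P_k$ rather than to $I$. It is also worth confirming that Theorem~\ref{thm:positive_retraction} only requires $\ker(\Pi)=\ker(\mathcal E^\dagger)$ together with positivity, unitality and idempotence. That is the case, so complete positivity of the pinching, although true, is not needed.
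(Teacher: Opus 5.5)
Your proposal is correct and matches the paper's proof: you invoke the positive-retraction principle with $\Pi=\Pi_{\mathsf P}$, and then split joint measurability of the pinched pair block by block. In one direction you take direct sums of the block joint POVMs, and in the other you compress a global joint POVM to $P_kJ_{ab}P_k$, which sums to $P_k$, the identity on $\mathcal H_k$. These are the same two steps the paper uses.
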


\begin{proof}
By the preceding lemma, $\Pi_{\mathsf P}$ is positive, unital, and idempotent. Together with $\ker(\mathcal E^\dagger)=\ker(\Pi_{\mathsf P})$, Theorem~\ref{thm:positive_retraction} applies. Hence concealment by $\mathcal E$ is equivalent to joint measurability of $(\Pi_{\mathsf P}(M),\Pi_{\mathsf P}(N))$.

Joint measurability of the projected pair is equivalent to joint measurability in every block. To see this, suppose each block has a joint POVM $\{J^{(k)}_{ab}\}$ with marginals $\{P_k M_a P_k\}$ and $\{P_k N_b P_k\}$. Then
\[
J_{ab}=\sum_{k=1}^m J^{(k)}_{ab}
\]
is a joint POVM on the full space. Conversely, if $\{J_{ab}\}$ is a global joint POVM, then $\{P_k J_{ab}P_k\}_{a,b}$ is a joint POVM on $\mathcal H_k$, since its total sum is $P_k$, the identity operator on that block.
\end{proof}

\subsection{Non-block real-symmetric Jordan retraction}

The pinching construction produces retractions whose ranges are block-diagonal operator algebras. We now construct a positive retraction whose range is a Jordan algebra but not a complex $C^*$-subalgebra. This construction is genuinely non-block and applies in every dimension $d\ge2$.

Fix a basis and define the real-symmetrization map
\[
\Pi_{\mathbb R}(A)=\frac{A+A^{\mathsf T}}{2}.
\]
The map $\Pi_{\mathbb R}$ is the Hilbert-Schmidt orthogonal projection onto the real-symmetric subspace.

\begin{lemma}[Properties of the real-symmetrization map]
The map $\Pi_{\mathbb R}$ is real-linear, positive, unital, trace preserving, and idempotent. Its kernel is
\[
\ker\Pi_{\mathbb R}
=
\{A=A^\dagger : A^{\mathsf T}=-A\},
\]
the purely imaginary antisymmetric Hermitian subspace. Its range is the real-symmetric operator space.
\end{lemma}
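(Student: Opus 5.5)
The plan is to verify each listed property directly from the definition $\Pi_{\mathbb R}(A)=\tfrac12(A+A^{\mathsf T})$, where the transpose is taken in the fixed basis. Two facts about the transpose will be used throughout. First, it is linear, and for Hermitian $A$ one has $A^{\mathsf T}=\overline{A}$ (entrywise complex conjugate), which is again Hermitian. Second, it preserves the spectrum, so $A^{\mathsf T}\succeq0$ whenever $A\succeq0$. Equivalently, $\langle\psi|A^{\mathsf T}|\psi\rangle=\langle\bar\psi|A|\bar\psi\rangle\ge0$.

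With these facts, positivity is immediate because $\Pi_{\mathbb R}(A)$ is the average of two positive operators. Unitality follows from $I^{\mathsf T}=I$. Trace preservation follows from $\Tr A^{\mathsf T}=\Tr A$. Real-linearity is clear, and the map sends Hermitian operators to Hermitian operators. Idempotence follows from $(A^{\mathsf T})^{\mathsf T}=A$: the output $\tfrac12(A+A^{\mathsf T})$ is symmetric, so applying $\Pi_{\mathbb R}$ again leaves it unchanged.

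For the kernel, $\Pi_{\mathbb R}(A)=0$ holds if and only if $A^{\mathsf T}=-A$. Combining this with Hermiticity, $\overline{A}=A^{\mathsf T}=-A$, so every entry of $A$ is purely imaginary and the matrix is antisymmetric. Conversely, every such Hermitian $A$ lies in the kernel. For the range, the output is always symmetric and Hermitian, hence $\overline{B}=B^{\mathsf T}=B$, so it is a real symmetric matrix. Every real symmetric $B$ is fixed by $\Pi_{\mathbb R}$, so the range is exactly the real-symmetric space.

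Finally, the orthogonal-projection claim made before the lemma follows from the decomposition $\Herm=\mathrm{Sym}_{\mathbb R}\oplus i\,\mathrm{Antisym}_{\mathbb R}$, whose two summands are Hilbert--Schmidt orthogonal because $\Tr(S\,iK)=i\Tr(SK)=0$ for symmetric $S$ and antisymmetric $K$. The only step needing care is the transpose-positivity fact, which the complex-conjugate-vector identity handles. The rest is routine.
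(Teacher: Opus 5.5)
Your proposal is correct and follows essentially the same route as the paper: positivity via the identity $\langle\psi|A^{\mathsf T}|\psi\rangle=\langle\bar\psi|A|\bar\psi\rangle$, with linearity, unitality, trace preservation, idempotence and the kernel read off from $A^{\mathsf T}=\overline{A}$ for Hermitian $A$. You also verify the range and the Hilbert--Schmidt orthogonal-projection claim explicitly, which the paper only asserts.
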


\begin{proof}
Linearity, unitality, trace preservation, and idempotence are immediate. If $A\succeq0$, then for any vector $|\psi\rangle$,
\begin{align*}
\langle\psi|\Pi_{\mathbb R}(A)|\psi\rangle
&=
\frac12\langle\psi|A|\psi\rangle
+
\frac12\langle\psi|A^{\mathsf T}|\psi\rangle \\
&=
\frac12\langle\psi|A|\psi\rangle
+
\frac12\langle\psi^*|A|\psi^*\rangle
\ge0,
\end{align*}
so $\Pi_{\mathbb R}$ is positive. The kernel characterization follows directly from $A^{\mathsf T}=-A$ for Hermitian $A$.
\end{proof}

\begin{theorem}[Non-block higher-dimensional reduction]
\label{thm:real_symmetric_reduction}
For every finite dimension $d\ge2$, there exists an entanglement-breaking channel $\mathcal E$ such that
\[
\ker(\mathcal E^\dagger)=\ker\Pi_{\mathbb R}.
\]
For this channel,
\[
\begin{aligned}
(M,N)\text{ is concealed}
\quad\Longleftrightarrow\quad&
\bigl(\Pi_{\mathbb R}(M),
      \Pi_{\mathbb R}(N)\bigr)\\
&\text{is jointly measurable},
\end{aligned}
\]
and
\[
R_c(M,N\mid\mathcal E)
=
R_{\rm inc}(\Pi_{\mathbb R}M,\Pi_{\mathbb R}N).
\]
\end{theorem}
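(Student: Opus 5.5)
The plan is to build $\mathcal E$ as an explicit measure-and-prepare channel whose output states span exactly the real-symmetric operators. The kernel identity then follows from a short annihilator computation, and the concealment and robustness claims follow from Theorem~\ref{thm:positive_retraction} together with the preceding lemma on $\Pi_{\mathbb R}$.

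\emph{Step 1: the target output span.} The kernel of $\mathcal E^\dagger$ is the Hilbert--Schmidt annihilator, inside $\Herm(\mathcal H_{\rm out})$, of the real span of the output states $\mathcal E(\rho)$. Write a Hermitian $A$ as $A=R+iS$, where $R$ is real symmetric and $S$ is real antisymmetric. For every real-symmetric $Y$ we have $\Tr(YA)=\Tr(YR)$. So $A$ annihilates all real-symmetric operators if and only if $R=0$, that is, if and only if $A^{\mathsf T}=-A$, which is exactly $\ker\Pi_{\mathbb R}$. It therefore suffices to produce an entanglement-breaking channel whose output span is the real-symmetric space $\mathrm{Sym}_d(\mathbb R)$, of dimension $n=d(d+1)/2$, and whose adjoint has no additional kernel.

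\emph{Step 2: the channel.} Choose $n$ real pure states $\sigma_k$ that span $\mathrm{Sym}_d(\mathbb R)$. Concretely, take $|i\rangle\langle i|$ for each $i$ and $\tfrac12(|i\rangle+|j\rangle)(\langle i|+\langle j|)$ for each $i<j$; linear independence is checked entrywise. On the input side, take $n$ linearly independent positive operators $Q_k$ on $\mathbb C^d$, for instance the same $\sigma_k$, and set $S=\sum_kQ_k>0$. Define $E_k=S^{-1/2}Q_kS^{-1/2}$. These form a POVM, and they remain linearly independent because the congruence by $S^{-1/2}$ is invertible. Now set
\[
\mathcal E(\rho)=\sum_k \Tr(E_k\rho)\,\sigma_k .
\]
This map is entanglement-breaking and trace preserving, and its adjoint is
\[
\mathcal E^\dagger(A)=\sum_k\Tr(\sigma_kA)\,E_k .
\]
By linear independence of the $E_k$, $\mathcal E^\dagger(A)=0$ holds if and only if $\Tr(\sigma_kA)=0$ for all $k$. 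Since the $\sigma_k$ span $\mathrm{Sym}_d(\mathbb R)$, this is the annihilator from Step~1, so $\ker(\mathcal E^\dagger)=\ker\Pi_{\mathbb R}$.

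\emph{Step 3: reduction.} By the preceding lemma, $\Pi_{\mathbb R}$ is positive, unital and idempotent. Together with the kernel equality from Step~2, this verifies the hypotheses of Theorem~\ref{thm:positive_retraction}, which gives the stated concealment equivalence directly. The robustness identity $R_c(M,N\mid\mathcal E)=R_{\rm inc}(\Pi_{\mathbb R}M,\Pi_{\mathbb R}N)$ follows in the same way from Corollary~\ref{cor:positive_retraction_robustness}, the robustness consequence of the positive-retraction principle.

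The main obstacle is the robustness claim, since it rests on that later corollary. If its hypotheses turn out to require more than positivity, unitality, idempotence and the kernel equality, I will verify them directly. The relevant facts are that $\Pi_{\mathbb R}$ fixes $I$ and trivial POVMs, is trace preserving, and maps the noise set into itself.
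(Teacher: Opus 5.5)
Your proposal is correct and follows the paper's proof: a measure-and-prepare channel with linearly independent POVM effects and real pure output states spanning the real-symmetric space, so that $\ker(\mathcal E^\dagger)$ is the annihilator of that space, followed by the positive-retraction theorem and its robustness corollary. That corollary needs no hypotheses beyond those of the theorem. The only differences are cosmetic: you build a minimal $d(d+1)/2$-element linearly independent POVM via the $S^{-1/2}$ normalization rather than a $d^2$-element linearly independent informationally complete POVM with repeated outputs, and you check the annihilator identity directly rather than reading it off from $\Pi_{\mathbb R}$ being the Hilbert--Schmidt orthogonal projection.
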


\begin{proof}
Choose a linearly independent informationally complete POVM $\{G_j\}_{j=1}^{d^2}$. Choose $d^2$ real density matrices $\{\tau_j\}_{j=1}^{d^2}$ whose real linear span equals the real-symmetric Hermitian operator space. The output states need only span the real-symmetric subspace; they are not required to be linearly independent. Such a family may be obtained as follows. Take the real symmetric pure states
\[
|i\rangle\langle i|\qquad(1\le i\le d),
\]
and
\[
\frac{(|i\rangle+|j\rangle)(\langle i|+\langle j|)}{2}
\qquad(1\le i<j\le d).
\]
These span the real-symmetric space. Since there are $d(d+1)/2$ such states, which is less than or equal to $d^2$ for $d\ge2$, repeat arbitrary members of the spanning family as needed to obtain $d^2$ outputs.

Define
\[\mathcal E(\rho)=\sum_{j=1}^{d^2}\Tr(G_j\rho)\tau_j.
\]
Then
\[
\mathcal E^\dagger(A)=\sum_j\Tr(\tau_j A)G_j.
\]
Because the $\{G_j\}$ are linearly independent,
\[
\mathcal E^\dagger(A)=0
\iff
\Tr(\tau_j A)=0\quad\forall j.
\]
By construction, the $\{\tau_j\}$ span the real-symmetric space, so
\[
\ker(\mathcal E^\dagger)
=
\left(\operatorname{span}_{\mathbb R}\{\tau_j\}\right)^\perp
=
(\operatorname{ran}\Pi_{\mathbb R})^\perp
=
\ker\Pi_{\mathbb R}.
\]
The concealment equivalence and robustness identity follow from Theorem~\ref{thm:positive_retraction} and Corollary~\ref{cor:positive_retraction_robustness}.
\end{proof}

This construction shows that the positive-retraction framework applies to genuinely noncommutative order structures in every dimension, beyond block-diagonal algebras. For $d=1$, the construction is trivial as the real-symmetric space is the full Hermitian space.

\section{Robustness Geometry, Duality, and Shift-Invariant Positive-Payoff Games}
\label{sec:quantitative}

\subsection{Concealment robustness}

We now define a measure of concealment relative to the equivalence classes induced by the adjoint kernel. The concealment robustness quantifies the minimal noise required to move a given equivalence class into the set of compatible equivalence classes. When the adjoint kernel is trivial, equivalently when the operational equivalence classes are singletons, this measure reduces to the generalized incompatibility robustness. For noninjective channels, it can depart from its unconstrained counterpart.

\begin{definition}[Concealment robustness]
\label{def:robustness}
Let $\mathcal C_{\mathcal E}$ denote the set of POVM pairs operationally concealed by the channel $\mathcal E$. The concealment robustness of a measurement pair $(\{M_a\},\{N_b\})$ under $\mathcal E$ is defined as
\begin{multline}
R_c(M,N\mid\mathcal E)
:=\inf\bigl\{t\ge0:\ \exists\ \text{POVMs }P,Q\\
\text{such that }(M^{(t)},N^{(t)})\in\mathcal C_{\mathcal E}\bigr\}.
\end{multline}
where
\[
M_a^{(t)}=\frac{M_a+tP_a}{1+t},
\qquad
N_b^{(t)}=\frac{N_b+tQ_b}{1+t}.
\]
\end{definition}

The feasible set is always nonempty for finite-outcome POVMs. Indeed, choose any $t_0>0$ satisfying $t_0\ge\max\{m-1,n-1\}$. For an $m$-outcome POVM $\{M_a\}$, define
\[
P_a=\frac{(1+t_0)I/m-M_a}{t_0}.
\]
Since $(1+t_0)/m\ge1$ and $M_a\preceq I$, we have $P_a\succeq0$. Moreover,
\[
\sum_aP_a
=
\frac{(1+t_0)I-\sum_aM_a}{t_0}
=
I,
\]
and
\[
\frac{M_a+t_0P_a}{1+t_0}=\frac{I}{m}.
\]
Applying the same construction to the $n$-outcome POVM $\{N_b\}$ makes both noisy POVMs uniform and hence jointly measurable. Therefore $R_c(M,N\mid\mathcal E)<\infty$.

To prove attainment, restrict $t$ to the compact interval $[0,t_0]$. The POVM sets with fixed finite outcome spaces are compact, and the map
\[
(t,P,Q)\longmapsto (M^{(t)},N^{(t)})
\]
is continuous. Since $\mathcal C_{\mathcal E}^{\Omega,\Lambda}$ is closed, the set
\[
\left\{
(t,P,Q):
0\le t\le t_0,\,
(M^{(t)},N^{(t)})
\in\mathcal C_{\mathcal E}^{\Omega,\Lambda}
\right\}
\]
is compact. Hence the continuous objective $t$ attains its minimum. Consequently,
\[
R_c(M,N\mid\mathcal E)=0
\iff
(M,N)\in\mathcal C_{\mathcal E}^{\Omega,\Lambda}.
\]

Throughout this section, $R_{\mathrm{inc}}$ denotes the \emph{generalized} incompatibility robustness, where the auxiliary POVMs $\{P_a\}, \{Q_b\}$ in the noise model are arbitrary~\cite{Uola2015,Designolle2019}. This should be distinguished from white-noise robustness, which corresponds to a more restrictive noise model and may yield different numerical values.

The generalized incompatibility robustness $R_{\mathrm{inc}}$ is defined analogously, requiring the noisy pair to be compatible rather than concealed, with the noise POVMs again taken to be arbitrary. For two binary qubit POVMs, the generalized incompatibility robustness can be computed via the semidefinite-programming framework of Ref.~\cite{Designolle2019}. For the orthogonal Pauli measurements, the resulting generalized incompatibility robustness is
\[
R_{\mathrm{inc}}(X,Z)=3-2\sqrt2\approx0.171573,
\]
consistent with that framework. This construction fits within the general convex-geometric framework for robustness-based quantum-resource quantification \cite{Regula2018}, with the channel-dependent concealed set $\mathcal C_{\mathcal E}$ serving as the free set. Its application to measurements, channels, and operational discrimination tasks is developed in Ref.~\cite{Takagi2019}. It therefore differs from ordinary incompatibility robustness, whose free set consists of jointly measurable POVM families themselves.

\subsection{Basic structural properties}

\begin{proposition}[Hierarchy, injective reduction, and kernel monotonicity]
\label{prop:robustness_basic}
For every finite-outcome POVM pair $(M,N)$ and channel $\mathcal E$,
\[
0\le R_c(M,N\mid\mathcal E)\le R_{\rm inc}(M,N).
\]
If $\mathcal E^\dagger$ is injective, then
\[
R_c(M,N\mid\mathcal E)=R_{\rm inc}(M,N).
\]
Moreover, if two channels with the same output space satisfy
\[
\ker(\mathcal E_1^\dagger)\subseteq\ker(\mathcal E_2^\dagger),
\]
then
\[
R_c(M,N\mid\mathcal E_2)
\le
R_c(M,N\mid\mathcal E_1).
\]
\end{proposition}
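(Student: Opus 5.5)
All three claims reduce to inclusions between feasible sets in the definition of $R_c$, together with the attainment result established just before the proposition. The underlying principle is simple. If the free set grows, the infimum over the same family of noisy pairs $(M^{(t)},N^{(t)})$, parametrized by $t$ and auxiliary POVMs $P,Q$, can only decrease. So I will identify the free sets in each case and compare them.

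\emph{Hierarchy.} Nonnegativity is immediate, since $t\ge0$ in the definition. For the upper bound, observe that every jointly measurable pair $(F,G)$ lies in $\mathcal C_{\mathcal E}$: take $F,G$ themselves as representatives in Theorem~\ref{thm:necessary_sufficient}, because $F_a-F_a=0\in\ker(\mathcal E^\dagger)$. Hence $\mathrm{JM}\subseteq\mathcal C_{\mathcal E}$. Now let $(t,P,Q)$ be feasible for $R_{\rm inc}(M,N)$, meaning $(M^{(t)},N^{(t)})$ is jointly measurable. Then the same triple is feasible for $R_c(M,N\mid\mathcal E)$. Taking infima gives $R_c\le R_{\rm inc}$. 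This uses the fact that $R_{\rm inc}$ is defined with the same generalized noise model of arbitrary POVMs $P,Q$, as stipulated in the text.

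\emph{Injective reduction.} If $\mathcal E^\dagger$ is injective, Corollary~\ref{cor:invertible} shows that any concealed pair is itself compatible, so $\mathcal C_{\mathcal E}\subseteq\mathrm{JM}$. Combined with the inclusion above, $\mathcal C_{\mathcal E}=\mathrm{JM}$. The two optimization problems then have identical feasible sets, so the robustnesses coincide. Note that Corollary~\ref{cor:invertible} applies to arbitrary finite-outcome pairs, including the noisy pairs $(M^{(t)},N^{(t)})$, so no extra care is needed there.

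\emph{Kernel monotonicity.} Theorem~\ref{thm:kernel_monotonicity} gives $\mathcal C_{\mathcal E_1}\subseteq\mathcal C_{\mathcal E_2}$ for fixed outcome sets. Any $(t,P,Q)$ feasible for $\mathcal E_1$ is therefore feasible for $\mathcal E_2$, and so $R_c(M,N\mid\mathcal E_2)\le R_c(M,N\mid\mathcal E_1)$.

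The only point requiring attention is bookkeeping rather than a genuine obstacle. The free sets must be compared with the same outcome sets $\Omega,\Lambda$ throughout. Also, the infimum over an empty set never arises, since finiteness was shown above via the uniform-noise construction. I do not need the attainment argument for these inequalities, although it confirms that the infima are minima.
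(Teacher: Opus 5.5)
Your proposal is correct and matches the paper's proof essentially verbatim. In both, a compatible pair serves as its own representative, giving $\mathrm{JM}\subseteq\mathcal C_{\mathcal E}$; injectivity makes every fibre a singleton, giving $\mathcal C_{\mathcal E}=\mathrm{JM}$; and kernel inclusion gives $\mathcal C_{\mathcal E_1}\subseteq\mathcal C_{\mathcal E_2}$, with each inclusion then transferred to the robustness by comparing feasible noise parameters.
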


\begin{proof}
Any compatible noisy pair is concealed by choosing itself as its compatible representative, proving $R_c\le R_{\rm inc}$. If the adjoint is injective, every operational fibre is a singleton, so the concealed set coincides with the compatible set and equality follows. Finally, kernel inclusion implies $\mathcal C_{\mathcal E_1}\subseteq\mathcal C_{\mathcal E_2}$ by Theorem~\ref{thm:kernel_monotonicity}; every noise parameter feasible for $\mathcal E_1$ is therefore feasible for $\mathcal E_2$.
\end{proof}

\subsection{Retraction and block reductions of robustness}

\begin{corollary}[Robustness reduction under positive retractions]
\label{cor:positive_retraction_robustness}
Under the hypotheses of Theorem~\ref{thm:positive_retraction}, every finite-outcome POVM pair satisfies
\begin{equation}
R_c(M,N\mid\mathcal E)
=
R_{\rm inc}\!\left(\Pi(M),\Pi(N)\right).
\label{eq:positive_retraction_robustness}
\end{equation}
\end{corollary}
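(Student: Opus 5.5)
We prove the identity by showing that the two feasible sets of noise parameters coincide. Fix $t\ge0$. By Theorem~\ref{thm:positive_retraction}, $(M^{(t)},N^{(t)})\in\mathcal C_{\mathcal E}$ holds if and only if $(\Pi(M^{(t)}),\Pi(N^{(t)}))$ is jointly measurable. Since $\Pi$ is linear, $\Pi(M^{(t)}_a)=\bigl(\Pi(M_a)+t\,\Pi(P_a)\bigr)/(1+t)$, and similarly for $N$. Since $\Pi$ is positive and unital, $\Pi(P)$ and $\Pi(Q)$ are again POVMs. Hence any $t$ feasible for $R_c(M,N\mid\mathcal E)$, witnessed by noise $(P,Q)$, is feasible for $R_{\rm inc}(\Pi(M),\Pi(N))$, witnessed by $(\Pi(P),\Pi(Q))$. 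This gives $R_{\rm inc}(\Pi M,\Pi N)\le R_c(M,N\mid\mathcal E)$.

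For the reverse inequality, suppose $t$ is feasible for $R_{\rm inc}(\Pi M,\Pi N)$ with noise POVMs $(P',Q')$. Then the pair $\bigl((\Pi(M_a)+tP'_a)/(1+t),(\Pi(N_b)+tQ'_b)/(1+t)\bigr)$ is jointly measurable. We use the same $(P',Q')$ as noise for the original pair $(M,N)$. The difference between $M^{(t)}_a=(M_a+tP'_a)/(1+t)$ and the compatible effect above is $(M_a-\Pi(M_a))/(1+t)$. By idempotence, as in the proof of Theorem~\ref{thm:positive_retraction}, this lies in $\ker(\Pi)=\ker(\mathcal E^\dagger)$, and the same holds for the $N$ components. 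The compatible pair is therefore a compatible representative of $(M^{(t)},N^{(t)})$ in its adjoint fibres. By Theorem~\ref{thm:necessary_sufficient}, $(M^{(t)},N^{(t)})\in\mathcal C_{\mathcal E}$. So every $t$ feasible for the right-hand side is feasible for the left-hand side.

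Since the feasible sets of $t$ coincide, the infima coincide, which is Eq.~\eqref{eq:positive_retraction_robustness}. Attainment was already established and is not needed for equality of the infima.

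The only delicate point is the reverse direction. There one should not try to pull the noise $P'$ back through $\Pi$ to some preimage, which might fail to be positive. The trick is to reuse $P'$ itself as the noise and note that the discrepancy involves only $M-\Pi(M)$, which lies in the kernel. Checking that $\Pi(P)$ is a valid POVM in the forward direction uses only positivity and unitality of $\Pi$, so it requires no further work.
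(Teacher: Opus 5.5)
Your proof is correct and follows the paper's argument: the forward direction is identical, and both versions show that the feasible sets of noise parameters coincide. The only difference is in the converse, where you reuse $P'$ itself as noise and observe that the compatible projected pair already lies in the adjoint fibres because $M_a-\Pi(M_a)\in\ker(\mathcal E^\dagger)$, whereas the paper takes $\Pi(P')$ as noise and invokes Theorem~\ref{thm:positive_retraction} once more; either route works.
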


\begin{proof}
Let $t$ be feasible for $R_c(M,N\mid\mathcal E)$, with noise POVMs $(P,Q)$. Theorem~\ref{thm:positive_retraction} implies that
\[
\left(
\frac{\Pi(M)+t\Pi(P)}{1+t},
\frac{\Pi(N)+t\Pi(Q)}{1+t}
\right)
\]
is compatible. Hence the same $t$ is feasible for $R_{\rm inc}(\Pi(M),\Pi(N))$.

Conversely, suppose $t$ is feasible for $R_{\rm inc}(\Pi(M),\Pi(N))$, with noise POVMs $(P,Q)$. Applying $\Pi$ componentwise to a joint POVM for the noisy projected pair shows that
\[
\left(
\frac{\Pi(M)+t\Pi(P)}{1+t},
\frac{\Pi(N)+t\Pi(Q)}{1+t}
\right)
\]
is compatible. Since $\Pi(P)$ and $\Pi(Q)$ are POVMs, Theorem~\ref{thm:positive_retraction} implies that
\[
\left(
\frac{M+t\Pi(P)}{1+t},
\frac{N+t\Pi(Q)}{1+t}
\right)
\]
is concealed. The two feasible sets therefore have the same infimum.
\end{proof}

For unital qubit channels, Lemma~\ref{lem:canonical_projection} and Corollary~\ref{cor:positive_retraction_robustness} give
\begin{equation}
R_c(M,N\mid\mathcal E)
=
R_{\rm inc}\!\left(
\Pi_{\mathcal E}(M),
\Pi_{\mathcal E}(N)
\right).
\label{eq:projection_robustness_reduction}
\end{equation}
The remaining task is therefore to solve the ordinary generalized incompatibility robustness of the projected measurements. The next theorem does so exactly for orthogonal projected Bloch directions.

\begin{theorem}[Block-decomposition of robustness]
\label{thm:block_robustness}
Under the hypotheses of Theorem~\ref{thm:block_reduction},
\[
R_c(M,N\mid\mathcal E)
=
\max_{1\le k\le m}
R_{\rm inc}
\left(
\{P_k M_a P_k\}_a,
\{P_k N_b P_k\}_b
\right).
\]
\end{theorem}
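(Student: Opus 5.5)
First reduce to a single projected pair. Under the hypotheses of Theorem~\ref{thm:block_reduction}, the pinching $\Pi_{\mathsf P}$ is positive, unital and idempotent, and $\ker\Pi_{\mathsf P}=\ker(\mathcal E^\dagger)$. Corollary~\ref{cor:positive_retraction_robustness} therefore gives
\[
R_c(M,N\mid\mathcal E)=R_{\rm inc}\bigl(\Pi_{\mathsf P}(M),\Pi_{\mathsf P}(N)\bigr).
\]
It then suffices to prove that, for block-diagonal POVMs $\tilde M_a=\bigoplus_k M^{(k)}_a$ and $\tilde N_b=\bigoplus_k N^{(k)}_b$ with $M^{(k)}_a=P_kM_aP_k$ and $N^{(k)}_b=P_kN_bP_k$, the generalized incompatibility robustness satisfies
\[
R_{\rm inc}(\tilde M,\tilde N)=\max_k R_{\rm inc}\bigl(M^{(k)},N^{(k)}\bigr).
\]
Each compressed family is a POVM on $\mathcal H_k$, since it is positive and sums to $P_k$, the identity on that block.

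\emph{Lower bound.} Take $t$ feasible for the full pair, with noise POVMs $(P,Q)$ and a joint POVM $J$ for the noisy pair. Compress everything by $P_k$. The operators $P_kP_aP_k$ and $P_kQ_bP_k$ are POVMs on $\mathcal H_k$, and $\{P_kJ_{ab}P_k\}$ is a joint POVM on $\mathcal H_k$ for
\[
\frac{M^{(k)}_a+tP_kP_aP_k}{1+t},\qquad \frac{N^{(k)}_b+tP_kQ_bP_k}{1+t}.
\]
This is the same compression argument used in the proof of Theorem~\ref{thm:block_reduction}, and it uses $P_k\tilde M_aP_k=M^{(k)}_a$. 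Hence $t$ is feasible for every block, and $t\ge\max_kR_{\rm inc}(M^{(k)},N^{(k)})$.

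\emph{Upper bound.} Let $t^*=\max_kR_{\rm inc}(M^{(k)},N^{(k)})$. This step needs \emph{monotonicity in $t$}: if $t_k$ is feasible for block $k$, every $t\ge t_k$ must also be feasible with some noise on that block. I will show this by rewriting the noisy pair at $t$ as a mixture of the noisy pair at $t_k$, which is compatible, and the noise POVMs themselves, which are compatible with each other via the product joint POVM. Explicitly,
\[
\frac{M+tP}{1+t}=\frac{1+t_k}{1+t}\cdot\frac{M+t_kP}{1+t_k}+\frac{t-t_k}{1+t}\,P,
\]
and the analogous identity holds for $N$ with $Q$. Compatible pairs form a convex set, so the mixture at $t$ is compatible, and $t$ is feasible with the same noise $(P,Q)$. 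By attainment, which holds as argued in the text after Definition~\ref{def:robustness}, each block has $t_k\le t^*$ feasible with block noise $(P^{(k)},Q^{(k)})$ and block joint POVM $J^{(k)}$, all taken at the common value $t^*$. Set
\[
P_a=\sum_kP^{(k)}_a,\qquad Q_b=\sum_kQ^{(k)}_b,\qquad J_{ab}=\sum_kJ^{(k)}_{ab}.
\]
These are POVMs on $\mathcal H_{\rm out}$, and $J$ is a joint POVM for the noisy block-diagonal pair at $t^*$. So $t^*$ is feasible, which gives the upper bound.

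The main obstacle is exactly this need for a common noise parameter across blocks, which the monotonicity lemma resolves. The remaining steps are routine checks that block sums and compressions preserve the POVM and marginal conditions. The maximum is attained because the set of blocks is finite.
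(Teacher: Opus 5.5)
Your reduction via Corollary~\ref{cor:positive_retraction_robustness}, the compression lower bound, and the final direct-sum assembly are all sound. The architecture matches the paper's proof, which works with concealment and Theorem~\ref{thm:block_reduction} directly rather than first passing to $R_{\rm inc}(\Pi_{\mathsf P}M,\Pi_{\mathsf P}N)$; that difference is cosmetic.

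The gap is in your monotonicity lemma, which is exactly the step you flag as the main obstacle. You write the noisy pair at $t$ as a mixture of the compatible noisy pair at $t_k$ and the noise pair $(P,Q)$, and you assert that $(P,Q)$ is compatible ``via the product joint POVM''. In the generalized robustness the noise POVMs are arbitrary, so $(P,Q)$ need not be compatible, and $P_aQ_b$ is in general not even Hermitian.

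Your claim that the \emph{same} noise stays feasible for every $t\ge t_k$ is in fact false. On a qubit block take $M^{(k)}$ to be sharp $X$ and $N^{(k)}=\{I/2,I/2\}$. These are compatible, so $t_k=0$ is attained with any noise, for instance $P^{(k)}$ sharp $X$ and $Q^{(k)}$ sharp $Z$. For every $t>0$ this noise produces the pair (sharp $X$, Bloch vector $\frac{t}{1+t}\mathbf e_z$). That pair violates Eq.~\eqref{eq:jm_condition}, since $2\sqrt{1+t^2/(1+t)^2}>2$. So if another block has positive robustness, your direct sum at $t^*>0$ built from this block noise is not feasible.

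The repair is to change the noise so that a compatible pair, not $(P,Q)$, is the second component of your mixture. Let $C^{(k)}=(M^{(k)}+t_kP^{(k)})/(1+t_k)$ be the compatible noisy pair at $t_k$, and assume $t^*>0$. Set
\[
\widetilde P^{(k)}=\frac{t_kP^{(k)}+(t^*-t_k)\,C^{(k)}_M}{t^*},
\]
and similarly for $Q$. This is a convex combination of POVMs on $\mathcal H_k$, and it satisfies $M^{(k)}+t^*\widetilde P^{(k)}=(1+t^*)\,C^{(k)}_M$. The noisy block pair at $t^*$ is therefore exactly the compatible pair $C^{(k)}$; this is the paper's construction. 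Mixing in any other compatible pair, such as trivial POVMs, works equally well, and the case $t^*=0$ needs no adjustment. With this replacement the rest of your argument goes through unchanged.
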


\begin{proof}
Let
\[
r_k=R_{\rm inc}(M^{(k)},N^{(k)}),\qquad r=\max_k r_k.
\]
If $r=0$, every compressed pair is compatible. By Theorem~\ref{thm:block_reduction}, $(M,N)$ is concealed, and hence $R_c(M,N\mid\mathcal E)=0$. We may therefore assume $r>0$ below.

The lower bound is immediate: every globally feasible $t$ is feasible in each block, so $t\ge r_k$ for all $k$, hence $t\ge r$.

For the upper bound, note that each $r_k$ is attained by compactness of the fixed-outcome POVM sets. If block $k$ is feasible at $r_k$, write
\[
M^{(k)}+r_kP^{(k)}=(1+r_k)C_M^{(k)},
\]
with $C^{(k)}$ compatible. For the common parameter $r$, define
\[
\widetilde P^{(k)}
=
\frac{
r_kP^{(k)}+(r-r_k)C_M^{(k)}
}{r},
\]
and similarly for $Q$. Then
\[
M^{(k)}+r\widetilde P^{(k)}=(1+r)C_M^{(k)}.
\]
The direct sums
\[
\widetilde P_a=\bigoplus_k\widetilde P_a^{(k)},
\qquad
\widetilde Q_b=\bigoplus_k\widetilde Q_b^{(k)}
\]
are global noise POVMs, and the direct sum of the block joint POVMs proves global feasibility at $r$.
\end{proof}

\subsection{Exact qubit robustness geometry}

The projection theorem yields a powerful exact formula for binary qubit POVMs whose projected Bloch vectors are orthogonal.

Define the function
\[
t_*(a,b):=
a+b+1-\sqrt{2(1+a)(1+b)}.
\]
For $a,b\in[0,1]$ with $a^2+b^2>1$, this is positive and satisfies $t_*(a,b)\le\min(a,b)$. Indeed,
\begin{align*}
t_*(a,b)>0
&\iff (a+b+1)^2>2(1+a)(1+b)\\
&\iff a^2+b^2>1,
\end{align*}
while $t_*(a,b)\le a$ and $t_*(a,b)\le b$ follow from $b+1\le2(1+a)$ and $a+1\le2(1+b)$, respectively.

\begin{theorem}[Exact robustness for orthogonal projected directions]
\label{thm:orthogonal_projected_robustness}
Let $\mathcal E$ be a unital qubit channel. Consider two unbiased binary POVMs whose canonically projected Bloch vectors are
\[
P_{K^\perp}\mathbf m=a\,\mathbf u,
\qquad
P_{K^\perp}\mathbf n=b\,\mathbf v,
\]
where
\[
0\le a,b\le1,
\qquad
\mathbf u\cdot\mathbf v=0.
\]
Then
\begin{equation}
R_c(M,N\mid\mathcal E)
=
\begin{cases}
0,
& a^2+b^2\le1,\\[1ex]
t_*(a,b),
& a^2+b^2>1.
\end{cases}
\label{eq:orthogonal_projected_robustness}
\end{equation}
\end{theorem}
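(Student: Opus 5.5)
By Eq.~\eqref{eq:projection_robustness_reduction}, which rests on Lemma~\ref{lem:canonical_projection} and Corollary~\ref{cor:positive_retraction_robustness}, it suffices to compute the generalized incompatibility robustness $R_{\rm inc}$ of the pair of unbiased binary qubit POVMs with Bloch vectors $a\mathbf u$ and $b\mathbf v$, where $\mathbf u\perp\mathbf v$. Complete $\mathbf u,\mathbf v$ to an orthonormal frame $(\mathbf u,\mathbf v,\mathbf w)$.

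\emph{The case $a^2+b^2\le1$.} Here the criterion~\eqref{eq:jm_condition} gives $|a\mathbf u+b\mathbf v|+|a\mathbf u-b\mathbf v|=2\sqrt{a^2+b^2}\le2$. The projected pair is therefore compatible and $R_c=0$.

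\emph{Upper bound when $a^2+b^2>1$.} I would use the anti-aligned sharp noise
\[
P_\pm=\tfrac12(I\mp\mathbf u\cdot\boldsymbol\sigma),\qquad Q_\pm=\tfrac12(I\mp\mathbf v\cdot\boldsymbol\sigma).
\]
Since $t_*\le\min(a,b)$, the mixtures are unbiased with orthogonal Bloch vectors $\frac{a-t}{1+t}\mathbf u$ and $\frac{b-t}{1+t}\mathbf v$. By~\eqref{eq:jm_condition} they are compatible iff
\[
(a-t)^2+(b-t)^2\le(1+t)^2,
\]
that is, iff
\[
t^2-2(a+b+1)t+a^2+b^2-1\le0 .
\]
The smaller root of this quadratic is exactly $t_*(a,b)=a+b+1-\sqrt{2(1+a)(1+b)}$. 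This root lies in $[0,\min(a,b)]$ by the inequalities stated before the theorem. Hence $R_{\rm inc}\le t_*$.

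\emph{Lower bound: symmetrization.} Let $t$ be feasible with arbitrary noise POVMs $P,Q$, possibly biased. I will use the group generated by conjugation with $\mathbf u\cdot\boldsymbol\sigma$, $\mathbf v\cdot\boldsymbol\sigma$ and $\mathbf w\cdot\boldsymbol\sigma$, i.e.\ $\pi$-rotations about the three axes. Each element is paired with the outcome relabeling of $M$ and/or $N$ that restores the original pair: if a rotation flips $\mathbf u$, relabel $M$; if it flips $\mathbf v$, relabel $N$.

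\begin{itemize}
\item Each transformed noisy pair is again compatible, since unitary conjugation and relabeling preserve joint measurability.
\item The original pair $(M,N)$ is fixed by every combined operation.
\item The feasible set for fixed $t$ is convex, because compatibility of the mixtures is a convex condition in $(P,Q)$ jointly with the joint POVM.
\end{itemize}

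Averaging over the group therefore yields feasible noise with the same $t$. In the averaged noise, the $\mathbf w$-components vanish, the $\mathbf v$-component of $P$ and the $\mathbf u$-component of $Q$ vanish, and the identity offsets become unbiased. The offsets become unbiased because the relabelings appear in the average, e.g.\ through the rotation about $\mathbf v$ for $M$.

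After averaging, $P_\pm=\frac12(I\pm p\,\mathbf u\cdot\boldsymbol\sigma)$ and $Q_\pm=\frac12(I\pm q\,\mathbf v\cdot\boldsymbol\sigma)$ with $p,q\in[-1,1]$. Applying~\eqref{eq:jm_condition} gives
\[
(a+tp)^2+(b+tq)^2\le(1+t)^2 .
\]
For $t\le\min(a,b)$, the left-hand side is minimized at $p=q=-1$. So feasibility forces $(a-t)^2+(b-t)^2\le(1+t)^2$. For $t>\min(a,b)$ we have $t>t_*$ trivially. In either case $t\ge t_*$.

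The main obstacle is making the symmetrization rigorous, specifically that bias removal really occurs for both POVMs. I would check this by listing the four group elements explicitly and tracking which of $M$ and $N$ each one relabels.
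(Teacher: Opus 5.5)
Your proposal is correct. The reduction to $R_{\rm inc}$ of the projected pair, the case $a^2+b^2\le1$, and the upper bound via anti-aligned sharp noise coincide with the paper's argument.

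The lower bound is where you genuinely differ. The paper does not symmetrize. It proves one linear compatibility witness valid for every compatible binary pair, biased or not. Writing $\widehat F=\sum_{s,u}sJ_{su}$ and $\widehat G=\sum_{s,u}uJ_{su}$, one has $\alpha f_x+\beta g_y\le\sqrt{\alpha^2+\beta^2}$, since $\|\alpha s\sigma_x+\beta u\sigma_y\|_\infty=\sqrt{\alpha^2+\beta^2}$. The paper then chooses $\alpha=a-t_*$ and $\beta=b-t_*$, and for arbitrary noise uses only $p_x,q_y\ge-1$. The identities $\sqrt{\alpha^2+\beta^2}=1+t_*$ and $\alpha a+\beta b=(1+t_*)^2+t_*(\alpha+\beta)$ then give $(t_*-t)(1+t_*+\alpha+\beta)\le0$.

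Your twirl over the four $\pi$-rotations with compensating relabelings also works, and the bookkeeping you flagged checks out. $M$ is relabeled exactly for the rotations that flip $\mathbf u$, so the $\mathbf u$-component of $P$ keeps its sign in all four terms. In the two relabeled terms, the identity offset of $P_+$ is replaced by that of $P_-$, which removes the bias. The same holds for $Q$ with $\mathbf v$.

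The paper's route needs no symmetry and no necessity direction of the Busch criterion. It also produces an explicit optimal dual witness, which is the kind of object Theorem~\ref{thm:quotient_dual} and the positive-payoff game construction use. Your route explains structurally why anti-aligned sharp noise is optimal, without guessing a witness. It should transfer to other symmetric configurations where no witness is obvious. The cost is relying on the necessity direction of the joint-measurability criterion for unbiased pairs.
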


\begin{proof}
By Eq.~\eqref{eq:projection_robustness_reduction}, it is sufficient to compute the generalized incompatibility robustness of the projected pair. After a unitary rotation, the projected difference observables may be written as
\[
\widehat M=a\sigma_x,
\qquad
\widehat N=b\sigma_y.
\]

If $a^2+b^2\le1$, the projected pair is compatible by the unbiased binary-qubit joint-measurability criterion, and the robustness vanishes.

Assume now that $a^2+b^2>1$. The incompatibility assumption implies $t_*(a,b)>0$. Moreover, $t_*(a,b)\le a$ and $t_*(a,b)\le b$, so
\[
\alpha:=a-t_*(a,b)\ge0,
\qquad
\beta:=b-t_*(a,b)\ge0.
\]

For the upper bound, choose sharp noise POVMs with difference observables $-\sigma_x$ and $-\sigma_y$. The noisy Bloch vectors are
\[
\frac{a-t}{1+t}\mathbf e_x,
\qquad
\frac{b-t}{1+t}\mathbf e_y.
\]
They are compatible precisely when
\[
(a-t)^2+(b-t)^2\le(1+t)^2.
\]
The smallest nonnegative solution is exactly $t=t_*(a,b)$, proving
\[
R_c(M,N\mid\mathcal E)\le t_*(a,b).
\]

For the lower bound, we first note the following compatibility inequality. Let $(F,G)$ be arbitrary compatible binary qubit POVMs, and write the Bloch-vector parts of their difference observables as $\mathbf f$ and $\mathbf g$. For all $\alpha,\beta\ge0$,
\begin{equation}
\alpha f_x+\beta g_y
\le
\sqrt{\alpha^2+\beta^2}.
\label{eq:orthogonal_compatibility_witness}
\end{equation}
Indeed, if $\{J_{su}\}_{s,u=\pm1}$ is a joint POVM, then
\[
\widehat F=\sum_{s,u}sJ_{su},
\qquad
\widehat G=\sum_{s,u}uJ_{su},
\]
and therefore
\begin{align*}
\alpha f_x+\beta g_y
&=
\frac12
\sum_{s,u}
\Tr\!\left[
J_{su}
(\alpha s\sigma_x+\beta u\sigma_y)
\right]
\\
&\le
\frac12
\sum_{s,u}
\Tr(J_{su})
\left\|
\alpha s\sigma_x+\beta u\sigma_y
\right\|_\infty
\\
&=
\sqrt{\alpha^2+\beta^2},
\end{align*}
because
\[
\left\|
\alpha s\sigma_x+\beta u\sigma_y
\right\|_\infty
=
\sqrt{\alpha^2+\beta^2}
\]
and $\sum_{s,u}J_{su}=I$.

Suppose that a noise parameter $t$ is feasible. Let $p_x$ and $q_y$ be the corresponding components of the two noise difference observables. Since they arise from binary POVMs,
\[
p_x\ge-1,
\qquad
q_y\ge-1.
\]
Applying Eq.~\eqref{eq:orthogonal_compatibility_witness} to the noisy compatible pair gives
\[
\alpha
\frac{a+tp_x}{1+t}
+
\beta
\frac{b+tq_y}{1+t}
\le
\sqrt{\alpha^2+\beta^2}.
\]
Hence
\begin{equation}
\alpha a+\beta b-t(\alpha+\beta)
\le
(1+t)\sqrt{\alpha^2+\beta^2}.
\label{eq:orthogonal_lower_intermediate}
\end{equation}

By the defining equation for $t_*(a,b)$,
\[
(a-t_*)^2+(b-t_*)^2=(1+t_*)^2,
\]
and therefore
\[
\sqrt{\alpha^2+\beta^2}=1+t_*.
\]
Furthermore,
\begin{align*}
\alpha a+\beta b
&=
\alpha(\alpha+t_*)+\beta(\beta+t_*)
\\
&=
(1+t_*)^2+t_*(\alpha+\beta).
\end{align*}
Substituting these identities into Eq.~\eqref{eq:orthogonal_lower_intermediate} yields
\[
(t_*-t)
\left(
1+t_*+\alpha+\beta
\right)
\le0.
\]
The factor in parentheses is strictly positive, so every feasible $t$ satisfies $t\ge t_*$. Thus
\[
R_c(M,N\mid\mathcal E)\ge t_*(a,b).
\]
Together with the upper bound, this proves Eq.~\eqref{eq:orthogonal_projected_robustness}.
\end{proof}

\begin{corollary}[Bloch-rank-two family]
\label{cor:exact_intermediate_robustness}
Let
\[
T_{\mathcal E}
=
\diag\!\left(\frac12,\frac12,0\right).
\]
This Bloch matrix corresponds to the CPTP Pauli channel
\[
\mathcal E(\rho)
=
\frac12\rho
+\frac14\sigma_x\rho\sigma_x
+\frac14\sigma_y\rho\sigma_y.
\]
Consider the unbiased binary qubit POVMs with Bloch vectors
\[
\mathbf m_r=(r,0,\sqrt{1-r^2}),
\qquad
\mathbf n=(0,1,0).
\]
Then
\[
R_c(M,N\mid\mathcal E)
=
(\sqrt{1+r}-1)^2,
\qquad 0\le r\le1.
\]
The gap is maximal at $r=0$, where
\[
R_{\rm inc}-R_c=3-2\sqrt2,
\]
and decreases continuously to zero at $r=1$.
\end{corollary}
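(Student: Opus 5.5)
The proof reduces everything to Theorem~\ref{thm:orthogonal_projected_robustness}, after checking the channel data. First we verify that the stated Pauli channel has Bloch matrix $T_{\mathcal E}=\diag(\frac12,\frac12,0)$. Conjugation by $\sigma_x$ flips the signs of $r_y$ and $r_z$, and conjugation by $\sigma_y$ flips $r_x$ and $r_z$. Hence the $x$-coefficient is $\frac12+\frac14-\frac14=\frac12$, the $y$-coefficient is likewise $\frac12$, and the $z$-coefficient is $\frac12-\frac14-\frac14=0$. The Kraus weights are nonnegative and sum to one, so the map is CPTP and unital.

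Next we identify the projected data. We have $K=\ker(T^{\mathsf T})=\Span\{\mathbf e_z\}$, so $P_{K^\perp}$ projects onto the $xy$-plane. This gives
\[
P_{K^\perp}\mathbf m_r=r\,\mathbf e_x,\qquad P_{K^\perp}\mathbf n=\mathbf e_y .
\]
These are orthogonal directions with $a=r$ and $b=1$. Since $a^2+b^2=1+r^2>1$ for every $r\in[0,1]$, the second case of Theorem~\ref{thm:orthogonal_projected_robustness} applies, and $R_c=t_*(r,1)$.

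We then simplify:
\[
t_*(r,1)=r+2-\sqrt{4(1+r)}=(1+r)-2\sqrt{1+r}+1=(\sqrt{1+r}-1)^2 .
\]

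For the gap, we need $R_{\rm inc}(M,N)$ for the original pair. At $r=0$ the pair is $(Z,Y)$, an orthogonal sharp pair that is unitarily equivalent to $(X,Z)$. Its robustness is therefore $3-2\sqrt2$, as quoted earlier; equivalently, it is $t_*(1,1)=3-2\sqrt2$ from the same theorem applied with the identity channel, where $K=\{0\}$. Meanwhile $R_c=(1-1)^2=0$, so the gap is $3-2\sqrt2$.

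For general $r$, the original Bloch vectors $\mathbf m_r$ and $\mathbf n$ are unit vectors and orthogonal, since $\mathbf m_r\cdot\mathbf n=0$. Applying Theorem~\ref{thm:orthogonal_projected_robustness} with the identity channel and $a=b=1$ gives $R_{\rm inc}=3-2\sqrt2$ for every $r$. The gap is therefore
\[
3-2\sqrt2-(\sqrt{1+r}-1)^2 ,
\]
which is continuous and strictly decreasing in $r$. It vanishes at $r=1$, since $(\sqrt2-1)^2=3-2\sqrt2$, and attains its maximum at $r=0$.

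The only subtle point is invoking Theorem~\ref{thm:orthogonal_projected_robustness} for the identity channel to obtain $R_{\rm inc}$. This is legitimate because the identity channel is unital with $K=\{0\}$, and Proposition~\ref{prop:robustness_basic} gives $R_c=R_{\rm inc}$ for injective adjoints.
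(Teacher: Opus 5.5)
Your proposal is correct and takes the same route as the paper. The paper's proof only observes that the canonical projections are $r\mathbf e_x$ and $\mathbf e_y$, so Theorem~\ref{thm:orthogonal_projected_robustness} applies with $a=r$, $b=1$; your Bloch-matrix check, the simplification $t_*(r,1)=(\sqrt{1+r}-1)^2$, and $R_{\rm inc}=t_*(1,1)=3-2\sqrt2$ via the identity channel fill in details the paper leaves implicit.

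One small slip: $a^2+b^2=1+r^2$ equals $1$ at $r=0$, so there the first case of the theorem applies and gives $R_c=0$. This coincides with $(\sqrt{1+0}-1)^2=t_*(0,1)=0$, so the formula still holds on all of $[0,1]$.
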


\begin{proof}
The canonical projections have orthogonal Bloch vectors of lengths $a=r$ and $b=1$, so the result follows from Theorem~\ref{thm:orthogonal_projected_robustness}.
\end{proof}

\begin{figure}[!t]
\centering
\begin{tikzpicture}[x=4.2cm,y=20cm]

\draw[->] (-0.03,0) -- (1.07,0) node[right] {$r$};
\draw[->] (0,-0.008) -- (0,0.195) node[above] {$R$};

\draw (0,0) node[below left] {$0$};
\draw (1,0) node[below] {$1$};

\draw[red,dashed]
(0,0.17157287525381)
--
(1,0.17157287525381)
node[right] {$R_{\rm inc}$};

\draw[blue,thick]
plot[
domain=0:1,
samples=120,
smooth
]
(\x,{(\x+2-2*sqrt(\x+1))})
node[below right] {$R_c$};

\node[align=center] at (0.48,0.115)
{strict separation for\\$0\le r<1$};

\node[align=center] at (0.42,0.070)
{$\displaystyle
\sup_{0\le r\le1}(R_{\rm inc}-R_c)
=3-2\sqrt2$};

\end{tikzpicture}
\caption{
(Color online) Exact concealment robustness
$R_c=(\sqrt{1+r}-1)^2$ (blue curve) compared with generalized
incompatibility robustness
$R_{\rm inc}=3-2\sqrt2$ (red dashed line).
The gap is strictly positive for $0\le r<1$, vanishes at $r=1$,
and is maximal at $r=0$ with value $3-2\sqrt2$.
}
\label{fig:robustness_comparison}
\end{figure}

Fig.~\ref{fig:robustness_comparison} shows the continuous separation that is strict for \(0\le r<1\) and vanishes at \(r=1\).

\subsection{Quotient duality}

To make concealment robustness experimentally testable, we derive a quotient-space dual whose witnesses are evaluable from channel-output statistics and yield an additive advantage in positive-payoff games.

\begin{proposition}[Operational invariance of concealment robustness]
\label{prop:robustness_quotient_invariance}
Let $\mathbf M=\{M_{a|x}\}_{a,x}$ and $\mathbf M'=\{M'_{a|x}\}_{a,x}$ be two finite POVM pairs with the same outcome sets. If
\[
M_{a|x}-M'_{a|x}\in\ker(\mathcal E^\dagger)
\qquad
\forall a,x,
\]
then
\[
R_c(\mathbf M\mid\mathcal E)
=
R_c(\mathbf M'\mid\mathcal E).
\]
Consequently, $R_c$ depends only on the operational equivalence classes of the measurement effects.
\end{proposition}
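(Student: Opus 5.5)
The plan is to show that the feasible sets of noise parameters defining $R_c(\mathbf M\mid\mathcal E)$ and $R_c(\mathbf M'\mid\mathcal E)$ coincide. It suffices to prove one inclusion, since the hypothesis is symmetric in $\mathbf M$ and $\mathbf M'$. The only ingredient is Theorem~\ref{thm:necessary_sufficient}. By that theorem, membership of a pair in $\mathcal C_{\mathcal E}$ depends only on the quotient classes $\pi(M_{a|x})$: a pair is concealed if and only if its fibres $M_{a|x}+\ker(\mathcal E^\dagger)$ contain compatible representatives.

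First I will record the invariance of $\mathcal C_{\mathcal E}$ itself. Suppose $(A,B)\in\mathcal C_{\mathcal E}$, and let $(A',B')$ be POVMs with $A_a-A'_a,\;B_b-B'_b\in\ker(\mathcal E^\dagger)$. Then $(A',B')\in\mathcal C_{\mathcal E}$. To see this, take the compatible $(F,G)$ witnessing concealment of $(A,B)$, so that $F_a-A_a\in\ker(\mathcal E^\dagger)$. Then
\[
F_a-A'_a=(F_a-A_a)+(A_a-A'_a)\in\ker(\mathcal E^\dagger),
\]
because the kernel is a linear subspace. The same argument applies to $G$. Hence $(F,G)$ also witnesses concealment of $(A',B')$.

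Next I will handle the noisy pairs. Fix a feasible $t\ge0$ for $\mathbf M$, with noise POVMs $P,Q$ such that $(M^{(t)},N^{(t)})\in\mathcal C_{\mathcal E}$. I will use the same $t$ and the same noise POVMs $P,Q$ for $\mathbf M'$. The resulting noisy effects
\[
M'^{(t)}_a=\frac{M'_a+tP_a}{1+t}
\]
are genuine POVMs, since $\mathbf M'$ is assumed to consist of POVMs. They satisfy
\[
M^{(t)}_a-M'^{(t)}_a=\frac{M_a-M'_a}{1+t}\in\ker(\mathcal E^\dagger),
\]
and similarly for $N$. By the previous step, $(M'^{(t)},N'^{(t)})\in\mathcal C_{\mathcal E}$. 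So every feasible $t$ for $\mathbf M$ is feasible for $\mathbf M'$, and swapping the roles gives the reverse inclusion. The feasible sets are therefore equal, and so are their infima. The concluding sentence of the proposition follows directly: $R_c$ is a function of the tuple of classes $\pi(M_{a|x})$.

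I expect no real obstacle here. The only point requiring care is that the noise model uses arbitrary POVMs $P,Q$, which are not required to lie in any particular fibre. This is what allows the same $P,Q$ to be reused, and keeping them fixed ensures that the noisy objects stay within the POVM set. Linearity of the mixing then preserves the kernel difference, rescaled by $1/(1+t)$.
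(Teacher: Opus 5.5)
Your proposal is correct and follows the paper's proof: reuse the same $t$ and noise POVMs, observe that the noisy pairs differ by $(M_{a|x}-M'_{a|x})/(1+t)\in\ker(\mathcal E^\dagger)$, and conclude that the feasible sets, and hence the infima, coincide. Your explicit check that $\mathcal C_{\mathcal E}$ is invariant under kernel shifts fills in the step the paper only asserts ("one noisy pair is concealed if and only if the other is").
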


\begin{proof}
Let $t\ge0$ and let $\mathbf P=\{P_{a|x}\}_{a,x}$ be any noise pair. The corresponding noisy effects satisfy
\begin{align*}
&\frac{M_{a|x}+tP_{a|x}}{1+t}
-
\frac{M'_{a|x}+tP_{a|x}}{1+t}\\
&\qquad=
\frac{M_{a|x}-M'_{a|x}}{1+t}
\in\ker(\mathcal E^\dagger).
\end{align*}
Hence the two noisy pairs belong componentwise to the same operational equivalence classes. One noisy pair is concealed if and only if the other is concealed. Therefore every value of $t$ feasible for $\mathbf M$ is feasible for $\mathbf M'$, and conversely. Taking the infimum proves the claim.
\end{proof}

Let
\[
d_{\rm in}:=\dim\mathcal H_{\rm in}.
\]
Let
\[
\mathsf V
=
\bigoplus_{x=0}^{1}
\bigoplus_{a\in\Omega_x}
\Herm(\mathcal H_{\rm out}),
\qquad
\mathsf K_{\mathcal E}
=
\bigoplus_{x,a}\ker(\mathcal E^\dagger),
\]
and let $\pi:\mathsf V\to\mathsf V/\mathsf K_{\mathcal E}$ be the componentwise quotient map.

Let $\mathsf P$ denote the convex set of all POVM pairs with the specified outcomes, and let $\mathsf C_{\mathcal E}\subseteq\mathsf P$ denote the concealed pairs.

This construction is related to linear incompatibility witnesses \cite{Carmeli2019} and to dual semidefinite-programming formulations of incompatibility robustness \cite{Designolle2019}, but the witnesses here act on operational equivalence classes and must annihilate the adjoint kernel.

\begin{theorem}[Quotient dual of concealment robustness]
\label{thm:quotient_dual}
Let
\[
\overline{\mathsf P}:=\pi(\mathsf P),
\qquad
\overline{\mathsf C}_{\mathcal E}
:=
\pi(\mathsf C_{\mathcal E}).
\]
Then
\begin{equation}
\label{eq:quotient_dual}
1+R_c(\mathbf M\mid\mathcal E)
=
\max_{\omega}\;
\omega\!\left(\pi(\mathbf M)\right),
\end{equation}
where the maximum is over real linear functionals
$\omega\in(\mathsf V/\mathsf K_{\mathcal E})^*$ satisfying
\begin{align}
\omega(\overline{\mathbf P})&\ge0
&&
\forall\overline{\mathbf P}\in\overline{\mathsf P},
\label{eq:dual_noise_positive}\\
\omega(\overline{\mathbf F})&\le1
&&
\forall\overline{\mathbf F}
\in\overline{\mathsf C}_{\mathcal E}.
\label{eq:dual_free_bound}
\end{align}
The maximum is attained and there is no duality gap.
\end{theorem}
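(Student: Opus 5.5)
The plan is to treat this as a conic/robustness duality in the finite-dimensional quotient space $\mathsf W:=\mathsf V/\mathsf K_{\mathcal E}$. The first step is to rewrite $R_c$ entirely in the quotient. By Proposition~\ref{prop:robustness_quotient_invariance} and Theorem~\ref{thm:necessary_sufficient}, membership in $\mathsf C_{\mathcal E}$ depends only on quotient classes. Hence
\[
1+R_c(\mathbf M\mid\mathcal E)=\min\{s\ge1:\ \pi(\mathbf M)+(s-1)\overline{\mathbf P}=s\,\overline{\mathbf F},\ \overline{\mathbf P}\in\overline{\mathsf P},\ \overline{\mathbf F}\in\overline{\mathsf C}_{\mathcal E}\},
\]
with the minimum attained, as shown after Definition~\ref{def:robustness}. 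Next I introduce the convex cones $\mathcal K_P:=\mathbb R_{\ge0}\overline{\mathsf P}$ and $\mathcal K_C:=\mathbb R_{\ge0}\overline{\mathsf C}_{\mathcal E}$ in $\mathsf W$. Both $\overline{\mathsf P}$ and $\overline{\mathsf C}_{\mathcal E}$ are compact and convex: they are linear images of the compact convex sets $\mathsf P$ and $\mathsf C_{\mathcal E}$, using Proposition~\ref{prop:convexity_compactness}. They also avoid the origin, because the unit $\sum_a[M_{a|x}]=[I]\neq0$, since $\mathcal E^\dagger(I)=I$. Therefore both cones are closed.

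Weak duality comes next. Take $\omega$ satisfying \eqref{eq:dual_noise_positive}–\eqref{eq:dual_free_bound} and a feasible decomposition with parameter $s$. Then
\[
\omega(\pi(\mathbf M))=s\,\omega(\overline{\mathbf F})-(s-1)\omega(\overline{\mathbf P})\le s .
\]
Hence $\sup\omega(\pi(\mathbf M))\le 1+R_c$.

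For strong duality I set $s_*=1+R_c$. If $R_c=0$, then $\omega$ defined by $\omega(\overline{\mathbf X})=\tfrac{1}{2}\sum_{x}\Tr[\rho_0\mathcal E^\dagger(\sum_a X_{a|x})]$, for a fixed state $\rho_0$, is well defined on the quotient. It equals $1$ on every normalized pair, so it satisfies both constraints and attains $1$. If $R_c>0$, consider the compact convex set
\[
\mathcal D_s:=\{\lambda\overline{\mathbf F}-\mu\overline{\mathbf P}:\ \lambda-\mu=1,\ \lambda\in[0,s]\}.
\]
For every $s<s_*$ we have $\pi(\mathbf M)\notin\mathcal D_s$; normalization forces $\lambda-\mu=1$. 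Separation in the finite-dimensional space $\mathsf W$ then gives $\omega_s$ with $\omega_s(\pi(\mathbf M))>\sup_{\mathcal D_s}\omega_s$. The main obstacle is turning this separating functional into one satisfying both the sign condition \eqref{eq:dual_noise_positive} and the normalization \eqref{eq:dual_free_bound}. A raw separating functional need not be nonnegative on $\overline{\mathsf P}$, and $\omega_s(\overline{\mathbf F})$ need not be positive.

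To handle this I will shift by multiples of the unit functional $u$, which is identically $1$ on $\overline{\mathsf P}$ and hence on $\overline{\mathsf C}_{\mathcal E}\subseteq\overline{\mathsf P}$. Replacing $\omega$ by $\omega+cu$ leaves the separation inequality unchanged up to a constant, because the elements of $\mathcal D_s$ have unit weight $\lambda-\mu=1$. With a suitable $c$ and rescaling I can arrange $\min_{\overline{\mathsf P}}\omega=0$ and $\max_{\overline{\mathsf C}}\omega=1$. The separation inequality, with the choice of $\lambda=s$ among the competitors, then yields $\omega(\pi(\mathbf M))\ge s$. Letting $s\uparrow s_*$ gives $\sup\ge s_*$.

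Attainment of the maximum uses compactness of the dual feasible set. The functionals are bounded on $\overline{\mathsf P}$, which spans $\mathsf W$ affinely up to the unit direction, so the feasible $\omega$ are uniformly bounded in norm. A limit point of the $\omega_s$ is then a maximizer, so the supremum is a maximum and equals $1+R_c$. This establishes \eqref{eq:quotient_dual} with no duality gap.
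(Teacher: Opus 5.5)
Your route differs from the paper's. The paper writes $1+R_c$ as a conic program, $\min u(Y)$ subject to $Y-Z=\pi(\mathbf M)$ with $Y\in\mathsf K_C$ and $Z\in\mathsf K_P$, and then invokes finite-dimensional conic duality under a relative Slater condition, using $\pi(\mathbf U)\in\operatorname{ri}(\overline{\mathsf C}_{\mathcal E})\cap\operatorname{ri}(\overline{\mathsf P})$. You instead strictly separate $\pi(\mathbf M)$ from $\mathcal D_s$ for each $s<1+R_c$ and normalize using $u$.

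The zero-gap part of your argument works after two repairs:
\begin{itemize}
\item $\mathcal D_s$ must use $\lambda\in[1,s]$. With $\lambda\in[0,s]$ you allow $\mu<0$, and $\lambda=0$ gives $\mathcal D_s\supseteq\overline{\mathsf P}\ni\pi(\mathbf M)$, so the non-membership claim fails.
\item Rescaling needs $m:=\max_{\overline{\mathsf C}_{\mathcal E}}\omega'>0$ after the shift $\omega'=\omega_s-(\min_{\overline{\mathsf P}}\omega_s)u$. Your weak-duality step supplies this. If $m=0$, then every $k\omega'$ with $k>0$ is feasible, while $\omega'(\pi(\mathbf M))>\max_{\mathcal D_s}\omega'\ge m=0$. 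Hence $k\omega'(\pi(\mathbf M))\to\infty$, contradicting $\omega(\pi(\mathbf M))\le 1+R_c$.
\end{itemize}
With these repairs you get equality of the supremum, and without any constraint qualification.

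The genuine gap is dual attainment. Your claim that feasible $\omega$ are uniformly bounded in norm is false in $\mathsf W^*$. The linear span of $\overline{\mathsf P}$ is the proper subspace $\mathsf S=\pi\{\mathbf X:\sum_aX_{a|0}=\sum_aX_{a|1}\in\mathbb R I\}$. Any functional vanishing on $\mathsf S$ can be added with an arbitrary coefficient without changing feasibility or the objective. An example is $\pi(\mathbf X)\mapsto\Tr[\mathcal E^\dagger(\sum_aX_{a|0}-\sum_aX_{a|1})]$, which is nonzero on $\mathsf W$. So the $\omega_s$ need not have a limit point.

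Restricting to $\mathsf S^*$ removes this problem, but boundedness there still needs an argument you do not give. The constraints only give $\omega\ge0$ on $\overline{\mathsf P}$ and $\omega\le1$ on $\overline{\mathsf C}_{\mathcal E}$. An upper bound on all of $\overline{\mathsf P}$ requires $\overline{\mathsf C}_{\mathcal E}$ to contain a relative-interior point of $\overline{\mathsf P}$. This is exactly the Slater ingredient in the paper's proof. The strictly positive joint POVM $J_{ab}=I/(|\Omega_0||\Omega_1|)$, together with the perturbation $\Delta J_{ab}=\Delta F_a/|\Omega_1|+\Delta G_b/|\Omega_0|$, yields a uniform $\delta>0$ such that $\pi(\mathbf U)+\delta(\pi(\mathbf U)-\overline{\mathbf P})\in\overline{\mathsf C}_{\mathcal E}$ for every $\overline{\mathbf P}\in\overline{\mathsf P}$.

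Write $\pi(\mathbf U)$ as a convex combination of that point and $\overline{\mathbf P}$. This gives $\omega(\overline{\mathbf P})\le(1+\delta)/\delta$ for every feasible $\omega$. Hence the feasible set is compact in $\mathsf S^*$, and your limit argument goes through. Finally, extend the maximizer from $\mathsf S$ to $\mathsf W$ arbitrarily; this is harmless because the objective and all constraints only see $\mathsf S$. Without this interior point, the claimed attainment of the maximum is not justified.
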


\begin{proof}
Let
\[
\mathsf Q:=\mathsf V/\mathsf K_{\mathcal E}.
\]
Define the cones
\[
\mathsf K_P
:=
\{\lambda\overline{\mathbf P}:
\lambda\ge0,\,
\overline{\mathbf P}\in\overline{\mathsf P}\},
\]
and
\[
\mathsf K_C
:=
\{\lambda\overline{\mathbf F}:
\lambda\ge0,\,
\overline{\mathbf F}\in
\overline{\mathsf C}_{\mathcal E}\}.
\]

There exists a linear normalization functional $u\in\mathsf Q^*$ satisfying
\[
u(\overline{\mathbf P})=
u(\overline{\mathbf F})=1
\]
for all normalized POVM pairs. Hence both compact bases lie in the affine hyperplane
\[
\{X\in\mathsf Q:u(X)=1\}.
\]
Since both bases are compact and contained in this affine
hyperplane, neither contains the origin. Consequently, their
conic hulls $\mathsf K_P$ and $\mathsf K_C$ are closed convex cones.

For two settings, one convenient choice for $u$ is
\[
u(\pi(\mathbf X))
=
\frac{1}{2d_{\rm in}}
\sum_{x=0}^{1}
\Tr\!\left[
\mathcal E^\dagger
\left(
\sum_{a\in\Omega_x}X_{a|x}
\right)
\right].
\]
This is well defined on the quotient because $\mathcal E^\dagger$ annihilates $\mathsf K_{\mathcal E}$.

Writing $\overline{\mathbf M}:=\pi(\mathbf M)$, the robustness takes the conic form
\[
1+R_c(\mathbf M\mid\mathcal E)
=
\min_{Y,Z}\;u(Y)
\]
subject to
\[
Y-Z=\overline{\mathbf M},
\qquad
Y\in\mathsf K_C,
\qquad
Z\in\mathsf K_P.
\]
Indeed, $Y=s\overline{\mathbf F}$ and $Z=(s-1)\overline{\mathbf P}$ reproduce exactly the defining robustness relation.

Introducing a dual functional $\omega\in\mathsf Q^*$, the Lagrangian is
\[
u(Y)+\omega(\overline{\mathbf M}-Y+Z).
\]
The infimum over $Y\in\mathsf K_C$ and $Z\in\mathsf K_P$ is finite precisely when
\[
u-\omega\in\mathsf K_C^*,
\qquad
\omega\in\mathsf K_P^*.
\]
On the normalized bases of the two cones, these conditions become
\[
\omega(\overline{\mathbf F})\le1
\quad
\forall\overline{\mathbf F}
\in\overline{\mathsf C}_{\mathcal E},
\]
and
\[
\omega(\overline{\mathbf P})\ge0
\quad
\forall\overline{\mathbf P}
\in\overline{\mathsf P}.
\]
This yields Eq.~\eqref{eq:quotient_dual}.

For strong duality, let $\mathbf U$ be the uniform POVM pair and choose the strictly positive joint POVM
\[
J_{ab}=\frac{I}{|\Omega_0||\Omega_1|}.
\]
Let $\mathsf A=\operatorname{aff}(\mathsf P)$ be the affine space of normalized Hermitian POVM pairs. The marginal map from Hermitian joint arrays to $\mathsf A$ is surjective on the corresponding tangent spaces: for arbitrary perturbations satisfying
\[
\sum_a\Delta F_a=0,
\qquad
\sum_b\Delta G_b=0,
\]
one may take
\[
\Delta J_{ab}
=
\frac{\Delta F_a}{|\Omega_1|}
+
\frac{\Delta G_b}{|\Omega_0|}.
\]
Because every uniform joint effect is positive definite, all sufficiently small perturbations of this form preserve positivity. Hence the compatible set contains a relative neighbourhood of $\mathbf U$ in the full affine space $\mathsf A$. In particular,
\[
\mathbf U\in\operatorname{ri}(\mathsf C_{\mathcal E}),
\qquad
\operatorname{aff}(\mathsf C_{\mathcal E})=\mathsf A.
\]
A linear map sends relative interiors onto the relative interiors of its images, so
\[
\pi(\mathbf U)
\in
\operatorname{ri}(\overline{\mathsf C}_{\mathcal E})
\cap
\operatorname{ri}(\overline{\mathsf P}).
\]
Now choose $s>\max_x|\Omega_x|$ sufficiently large and define the normalized noise pair
\[
\mathbf P_s
=
\frac{s\mathbf U-\mathbf M}{s-1}.
\]
Every effect of $\mathbf P_s$ is positive definite, so $\pi(\mathbf P_s)\in\operatorname{ri}(\overline{\mathsf P})$. Setting
\[
Y=s\pi(\mathbf U),
\qquad
Z=(s-1)\pi(\mathbf P_s),
\]
gives $Y-Z=\overline{\mathbf M}$ with
\[
Y\in\operatorname{ri}(\mathsf K_C),
\qquad
Z\in\operatorname{ri}(\mathsf K_P)
\]
in the natural linear spans of the cones. Thus the equality-constrained conic program satisfies the relative Slater condition. Finite-dimensional conic duality \cite{BoydVandenberghe2004} gives zero duality gap and dual attainment. Primal attainment follows from the compactness argument given after Definition~\ref{def:robustness}.
\end{proof}

\begin{corollary}[Concealment witnesses]
\label{cor:concealment_witness}
A POVM pair $\mathbf M$ is not concealed by $\mathcal E$ if and only if there exists a quotient functional $\omega$ satisfying Eqs.~\eqref{eq:dual_noise_positive} and \eqref{eq:dual_free_bound} such that
\[
\omega(\pi(\mathbf M))>1.
\]
Moreover, an optimal witness satisfies
\[
\omega(\pi(\mathbf M))
=
1+R_c(\mathbf M\mid\mathcal E).
\]
\end{corollary}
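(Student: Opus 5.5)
The plan is to derive the corollary directly from Theorem~\ref{thm:quotient_dual}, together with the characterization $R_c(\mathbf M\mid\mathcal E)=0\iff\mathbf M\in\mathsf C_{\mathcal E}$ established after Definition~\ref{def:robustness}. The strong duality identity \eqref{eq:quotient_dual} with dual attainment is the main input. Once it is in hand, both directions of the equivalence and the optimality statement become bookkeeping.

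\emph{Forward direction.} Suppose $\mathbf M$ is not concealed. Since the primal minimum is attained, $\mathbf M\notin\mathsf C_{\mathcal E}$ gives $R_c(\mathbf M\mid\mathcal E)>0$. Theorem~\ref{thm:quotient_dual} supplies a maximizer $\omega$ satisfying \eqref{eq:dual_noise_positive} and \eqref{eq:dual_free_bound} with
\[
\omega(\pi(\mathbf M))=1+R_c(\mathbf M\mid\mathcal E)>1 .
\]
This also proves the final clause: any optimal witness attains exactly $1+R_c$.

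\emph{Converse.} Suppose such an $\omega$ exists with $\omega(\pi(\mathbf M))>1$. If $\mathbf M$ were concealed, then $\pi(\mathbf M)\in\overline{\mathsf C}_{\mathcal E}$, and \eqref{eq:dual_free_bound} would force $\omega(\pi(\mathbf M))\le1$, a contradiction. Alternatively, weak duality gives $1+R_c\ge\omega(\pi(\mathbf M))>1$, so $R_c>0$ and $\mathbf M$ is not concealed. I will use the direct argument because it does not rely on the conic machinery.

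The only point needing care is the equivalence between $R_c=0$ and membership in $\mathsf C_{\mathcal E}$. This uses primal attainment, which in turn relies on closedness of $\mathcal C_{\mathcal E}$ from Proposition~\ref{prop:convexity_compactness}. Without it, one could only conclude that $\mathbf M$ lies in the closure of the concealed set. I will cite that compactness argument explicitly. Beyond this, I do not anticipate any obstacle.
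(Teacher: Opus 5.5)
Your proposal is correct and matches how the paper intends this corollary to follow. The paper gives no separate proof and treats it as immediate from the strong duality and dual attainment of Theorem~\ref{thm:quotient_dual}, together with the characterization $R_c(\mathbf M\mid\mathcal E)=0\iff\mathbf M\in\mathsf C_{\mathcal E}$ from the primal-attainment argument after Definition~\ref{def:robustness}; your direct converse via the free-set bound \eqref{eq:dual_free_bound} is a harmless simplification of the weak-duality route.
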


\begin{theorem}[Restricted-access witness representation]
\label{thm:accessible_witness}
Every quotient witness in Theorem~\ref{thm:quotient_dual} can be evaluated from channel-output statistics.

More precisely, there exist Hermitian operators $W_{a|x}\in(\ker\mathcal E^\dagger)^\perp=\im\mathcal E$ such that
\[
\omega(\pi(\mathbf M))
=
\sum_{x,a}\Tr(W_{a|x}M_{a|x}).
\]
For each $a,x$, one may write $W_{a|x}=\mathcal E(H_{a|x})$ for some Hermitian input operator $H_{a|x}$. If $\{\rho_j\}_j$ is a tomographically complete family, then
\[
H_{a|x}
=
\sum_j c_{a|x,j}\rho_j
\]
with real coefficients $c_{a|x,j}$, and consequently
\begin{equation}
\label{eq:accessible_witness_statistics}
\omega(\pi(\mathbf M))
=
\sum_{x,a,j}
c_{a|x,j}
\Tr\!\left[M_{a|x}\mathcal E(\rho_j)\right].
\end{equation}
Thus the optimal concealment witness is a real linear combination of experimentally accessible channel-output probabilities.
\end{theorem}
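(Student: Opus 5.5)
The argument is pure finite-dimensional linear algebra: we pull the quotient functional back to $\mathsf V$, represent it with the Hilbert--Schmidt pairing, and use the adjoint relation to express the representing operators through channel outputs. First, given $\omega\in(\mathsf V/\mathsf K_{\mathcal E})^*$, define $\widetilde\omega:=\omega\circ\pi$ on $\mathsf V$. This is a real linear functional on $\mathsf V$ that vanishes on $\mathsf K_{\mathcal E}$. Since $\mathsf V$ is a finite direct sum of copies of $\Herm(\mathcal H_{\rm out})$, and the Hilbert--Schmidt inner product $\langle A,B\rangle=\Tr(AB)$ is a real inner product on each copy, the Riesz representation theorem gives unique Hermitian operators $W_{a|x}$ with
\[
\widetilde\omega(\mathbf X)=\sum_{x,a}\Tr(W_{a|x}X_{a|x})
\qquad\text{for all }\mathbf X\in\mathsf V.
\]
Evaluating $\widetilde\omega$ on elements supported in a single component $(a,x)$ with entry in $\ker(\mathcal E^\dagger)$ shows that $\Tr(W_{a|x}K)=0$ for all $K\in\ker(\mathcal E^\dagger)$. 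Hence $W_{a|x}\in(\ker\mathcal E^\dagger)^\perp$.

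Next we identify $(\ker\mathcal E^\dagger)^\perp$ with $\im\mathcal E$, working inside $\Herm$. Restricted to Hermitian operators, $\mathcal E$ and $\mathcal E^\dagger$ are real-linear maps that are mutually adjoint for the Hilbert--Schmidt inner product, so $(\im\mathcal E)^\perp=\ker(\mathcal E^\dagger)$. In finite dimensions this gives $(\ker\mathcal E^\dagger)^\perp=\im\mathcal E|_{\Herm}$. For this to hold we must check that the image of Hermitian inputs under $\mathcal E$ consists exactly of the Hermitian part of the image. This follows because $\mathcal E$ preserves Hermiticity: if $\mathcal E(Y)=W$ with $W$ Hermitian, then $\mathcal E((Y+Y^\dagger)/2)=W$. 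Consequently each $W_{a|x}=\mathcal E(H_{a|x})$ for some Hermitian input operator $H_{a|x}$.

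Finally, tomographic completeness gives $\operatorname{span}_{\mathbb R}\{\rho_j\}=\Herm(\mathcal H_{\rm in})$, so we can expand $H_{a|x}=\sum_j c_{a|x,j}\rho_j$ with real coefficients. Linearity of $\mathcal E$ and of the trace then yields Eq.~\eqref{eq:accessible_witness_statistics}. In particular this applies to the optimal witness, whose existence is guaranteed by the attainment statement of Theorem~\ref{thm:quotient_dual}. None of these steps is deep. The only point that needs care is the orthogonal-complement identity in the second step: it has to be taken over the real space of Hermitian operators rather than the complex operator space, so that $\Tr(W_{a|x}M_{a|x})$ is real and $H_{a|x}$ is Hermitian. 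As a consistency check, the representation is independent of the choice of representative $\mathbf M$ of $\pi(\mathbf M)$, precisely because each $W_{a|x}$ annihilates $\ker\mathcal E^\dagger$.
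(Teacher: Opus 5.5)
Your proposal is correct and follows essentially the paper's route: identify the quotient functional with an element of the annihilator of $\mathsf K_{\mathcal E}$, use $(\ker\mathcal E^\dagger)^\perp=\im\mathcal E$, then expand $H_{a|x}$ via tomographic completeness. The only difference is that you make explicit two steps the paper leaves implicit: the Riesz/Hilbert--Schmidt construction of the $W_{a|x}$, and the check that the orthogonal-complement identity holds over the real space of Hermitian operators, so that $H_{a|x}$ can be taken Hermitian.
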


\begin{proof}
The dual of a quotient space is naturally identified with the annihilator of the quotient subspace:
\[
(\mathsf V/\mathsf K_{\mathcal E})^*
\cong
\mathsf K_{\mathcal E}^{\perp}.
\]
Hence each component of the lifted witness belongs to $(\ker\mathcal E^\dagger)^\perp$. Finite-dimensional duality gives
\[
(\ker\mathcal E^\dagger)^\perp=\im\mathcal E.
\]
Therefore $W_{a|x}=\mathcal E(H_{a|x})$ for some Hermitian $H_{a|x}$. Since a tomographically complete family spans the Hermitian input space, each $H_{a|x}$ is a real linear combination of the states $\rho_j$. Substitution gives Eq.~\eqref{eq:accessible_witness_statistics}.
\end{proof}

\subsection{Shift-invariant normalised positive-payoff operational games}

We now give an exact operational interpretation of the concealment robustness using shift-invariant normalised positive-payoff games. This interpretation follows the operational resource-theoretic framework of \cite{Uola2019,Skrzypczyk2019,Takagi2019}.

Fix a tomographically complete input family $\{\rho_j\}_{j=1}^{d_{\rm in}^2}$. A \emph{positive-payoff channel-output game} is specified by nonnegative coefficients $c_{a|x,j}\ge0$, with score
\begin{equation}
S_{\mathcal G}(\mathbf A)
:=
\sum_{a,x,j}
c_{a|x,j}
\Tr\!\left[
A_{a|x}\mathcal E(\rho_j)
\right].
\label{eq:positive_channel_output_game_score}
\end{equation}
Define the minimum score over all POVM pairs and the maximum over concealed pairs:
\[
\alpha_{\mathcal G}
:=
\min_{\mathbf P\in\mathsf P}
S_{\mathcal G}(\mathbf P),
\qquad
\beta_{\mathcal G}^{\rm con}
:=
\max_{\mathbf F\in\mathsf C_{\mathcal E}}
S_{\mathcal G}(\mathbf F).
\]
The game is \emph{normalised} when
\[
\beta_{\mathcal G}^{\rm con}-\alpha_{\mathcal G}\le1.
\]

\begin{theorem}[Shift-invariant positive-payoff operational advantage]
\label{thm:shift_invariant_positive_operational_advantage}
For every finite-outcome POVM pair $\mathbf M$,
\begin{equation}
R_c(\mathbf M\mid\mathcal E)
=
\max_{\mathcal G}
\left[
S_{\mathcal G}(\mathbf M)
-
\beta_{\mathcal G}^{\rm con}
\right],
\label{eq:shift_invariant_positive_operational_advantage}
\end{equation}
where the maximum is over all normalised positive-payoff channel-output games. The maximum is attained.

Thus $R_c(\mathbf M\mid\mathcal E)$ is exactly the largest additive score advantage of $\mathbf M$ over all measurement pairs concealed by $\mathcal E$, using only channel-output statistics with nonnegative payoffs and the shift-invariant normalisation.
\end{theorem}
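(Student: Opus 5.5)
The plan is to prove two inequalities. First, every normalised positive-payoff game satisfies $S_{\mathcal G}(\mathbf M)-\beta^{\rm con}_{\mathcal G}\le R_c(\mathbf M\mid\mathcal E)$. Second, some normalised game achieves equality, and I will build it from an optimal quotient witness of Theorem~\ref{thm:quotient_dual}. Throughout I use that $S_{\mathcal G}$ is real-linear in the effects. I also use that $\alpha_{\mathcal G}$ and $\beta^{\rm con}_{\mathcal G}$ are attained, because $\mathsf P$ and $\mathsf C_{\mathcal E}$ are compact (Proposition~\ref{prop:convexity_compactness}).

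\emph{Upper bound.} Let $t=R_c(\mathbf M\mid\mathcal E)$. By the attainment argument after Definition~\ref{def:robustness}, there are a noise pair $\mathbf P\in\mathsf P$ and a concealed pair $\mathbf F\in\mathsf C_{\mathcal E}$ with $\mathbf M+t\mathbf P=(1+t)\mathbf F$. Linearity gives $S_{\mathcal G}(\mathbf M)=(1+t)S_{\mathcal G}(\mathbf F)-tS_{\mathcal G}(\mathbf P)$. Bounding the two terms by $\beta^{\rm con}_{\mathcal G}$ and $\alpha_{\mathcal G}$ yields
\[
S_{\mathcal G}(\mathbf M)\le\beta^{\rm con}_{\mathcal G}+t\bigl(\beta^{\rm con}_{\mathcal G}-\alpha_{\mathcal G}\bigr)\le\beta^{\rm con}_{\mathcal G}+t,
\]
where the last step is the normalisation $\beta^{\rm con}_{\mathcal G}-\alpha_{\mathcal G}\le1$. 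Positivity of the payoffs is not even needed for this direction.

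\emph{Lower bound and attainment.} Take an optimal witness $\omega$ from Theorem~\ref{thm:quotient_dual}. It satisfies $\omega(\pi(\mathbf M))=1+R_c$, $\omega\ge0$ on $\overline{\mathsf P}$, and $\omega\le1$ on $\overline{\mathsf C}_{\mathcal E}$. By Theorem~\ref{thm:accessible_witness} (Eq.~\eqref{eq:accessible_witness_statistics}) with the fixed family $\{\rho_j\}$, $\omega$ is a channel-output score $\sum c_{a|x,j}\Tr[A_{a|x}\mathcal E(\rho_j)]$ with real coefficients, which may have either sign.

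The key step is a shift that makes all coefficients nonnegative without altering any score differences among normalised pairs. Add a constant $\gamma\ge\max_{a,x,j}(-c_{a|x,j},0)$ to every coefficient. For any POVM pair $\mathbf A$, the added term is
\[
\gamma\sum_{x,j}\Tr\Bigl[\Bigl(\sum_a A_{a|x}\Bigr)\mathcal E(\rho_j)\Bigr]=\gamma\sum_{x,j}\Tr\mathcal E(\rho_j)=2\gamma d_{\rm in}^2,
\]
which is a constant $\kappa$ independent of $\mathbf A$. The resulting game $\mathcal G$ has nonnegative payoffs and $S_{\mathcal G}=\omega\circ\pi+\kappa$ on all normalised pairs. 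Hence $\beta^{\rm con}_{\mathcal G}-\alpha_{\mathcal G}=\max_{\overline{\mathsf C}_{\mathcal E}}\omega-\min_{\overline{\mathsf P}}\omega\le1-0$, so $\mathcal G$ is normalised. Its advantage is
\[
S_{\mathcal G}(\mathbf M)-\beta^{\rm con}_{\mathcal G}=\omega(\pi(\mathbf M))-\max_{\overline{\mathsf C}_{\mathcal E}}\omega\ge(1+R_c)-1=R_c.
\]
Combined with the upper bound, this proves Eq.~\eqref{eq:shift_invariant_positive_operational_advantage}, and the maximum is attained at this $\mathcal G$.

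The main obstacle I expect is justifying this shift carefully. The constant shift must preserve both the advantage and the normalisation, which requires that it is the same for every normalised pair; it is, because $\sum_a A_{a|x}=I$ for each setting $x$. It must also use exactly the fixed tomographically complete family of $d_{\rm in}^2$ states allowed in the game definition. Dual attainment from Theorem~\ref{thm:quotient_dual} is what makes the maximum over games attained rather than merely a supremum.
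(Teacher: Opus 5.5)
Your proof is correct and follows the paper's argument. The upper direction is the same convex-decomposition bound $S_{\mathcal G}(\mathbf M)-\beta^{\rm con}_{\mathcal G}\le t(\beta^{\rm con}_{\mathcal G}-\alpha_{\mathcal G})\le t$, and attainment comes from the same shift of an optimal quotient witness (Theorems~\ref{thm:quotient_dual} and~\ref{thm:accessible_witness}) into nonnegative payoffs. Two minor differences: you use a single uniform shift $\gamma$ rather than per-$(x,j)$ shifts $\lambda_{x,j}$, and you need only $\max_{\overline{\mathsf C}_{\mathcal E}}\omega\le1$, closing with the upper bound, instead of the paper's rescaling argument that this maximum equals $1$, which is a slight simplification.
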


\begin{proof}
Let $\omega_*$ be an optimal quotient functional in Theorem~\ref{thm:quotient_dual}. By Theorem~\ref{thm:accessible_witness}, there exist real coefficients $d_{a|x,j}$ (possibly signed) such that
\[
\omega_*(\pi(\mathbf A))
=
\sum_{a,x,j} d_{a|x,j}\Tr[A_{a|x}\mathcal E(\rho_j)].
\]

For each $(x,j)$, choose $\lambda_{x,j}\ge -\min_a d_{a|x,j}$ and define
\[
c_{a|x,j}=d_{a|x,j}+\lambda_{x,j}\ge0.
\]
Since $\sum_a\Tr[A_{a|x}\mathcal E(\rho_j)]=1$ for every POVM $\mathbf A$,
\[
S_{\mathcal G}(\mathbf A)
=
\omega_*(\pi(\mathbf A))
+
C,
\qquad
C:=\sum_{x,j}\lambda_{x,j}.
\]

Dual feasibility and positivity on $\mathsf P$ give
\[
0\le c_*\le1,
\]
where\[
c_*:=
\max_{\mathbf F\in\mathsf C_{\mathcal E}}
\omega_*(\pi(\mathbf F)).
\]
Optimality forces $c_*=1$. Indeed, if $0<c_*<1$,
then $\omega_*/c_*$ would remain dual feasible and would have a
strictly larger objective value. If $c_*=0$, arbitrary positive
rescaling would remain feasible while increasing the positive
objective $\omega_*(\pi(\mathbf M))=1+R_c$, contradicting finiteness
of the optimum.

Consequently,
\[
\beta_{\mathcal G}^{\rm con}=1+C,
\]
and therefore
\[
S_{\mathcal G}(\mathbf M)
-\beta_{\mathcal G}^{\rm con}
=
\omega_*(\pi(\mathbf M))-1
=
R_c(\mathbf M\mid\mathcal E).
\]
Moreover,
\[
\beta_{\mathcal G}^{\rm con}-\alpha_{\mathcal G}
=
1-\min_{\mathbf P\in\mathsf P}
\omega_*(\pi(\mathbf P))
\le1,
\]
so the constructed game is normalised.

Conversely, let $\mathcal G$ be any normalised positive-payoff game
and let $t$ be feasible in the definition of
$R_c(\mathbf M\mid\mathcal E)$. Then there exist a POVM pair
$\mathbf P$ and a concealed pair $\mathbf F$ such that
\[
\pi(\mathbf M)+t\pi(\mathbf P)
=
(1+t)\pi(\mathbf F).
\]
Because $S_{\mathcal G}$ depends only on channel-output
probabilities, it factors through the quotient, and hence
\[
S_{\mathcal G}(\mathbf M)
+tS_{\mathcal G}(\mathbf P)
=
(1+t)S_{\mathcal G}(\mathbf F).
\]
It follows that
\begin{align*}
S_{\mathcal G}(\mathbf M)
-\beta_{\mathcal G}^{\rm con}
&\le
t\left(
\beta_{\mathcal G}^{\rm con}
-\alpha_{\mathcal G}
\right)\\
&\le t.
\end{align*}
Taking the infimum over feasible $t$ gives
\[
S_{\mathcal G}(\mathbf M)
-\beta_{\mathcal G}^{\rm con}
\le
R_c(\mathbf M\mid\mathcal E).
\]
Together with the game constructed from $\omega_*$, this proves
the equality and attainment.
\end{proof}

This removes the need for signed coefficients while preserving the exact operational interpretation through the shift-invariant normalisation. This is in the spirit of operational resource-theoretic quantification \cite{Regula2018,Takagi2019,Skrzypczyk2019}.

\paragraph*{Operational protocol.}
The certificate requires only three steps: prepare a tomographically complete collection of input states, pass each state through the restricted channel, and estimate the outcome probabilities entering the witness score. A score exceeding the compatible bound, by more than the finite-sample confidence radius, certifies that no compatible representatives can reproduce the observed channel-output statistics. Neither reconstruction of the original POVM operators nor access to states outside the channel image is required.

\section{Ancilla stability}
\label{sec:ancilla}

Operational concealment of the lifted measurement pair is invariant under adjoining an arbitrary passive reference system, even when the compatible representatives on the enlarged output space are allowed to be arbitrary joint POVMs.

Throughout this subsection, concealment for $\mathcal E_R=\mathcal E\otimes\id_R$ is understood relative to a tomographically complete input family on $\mathcal H_{\rm in}\otimes\mathcal H_R$, equivalently through the adjoint characterization of Theorem~\ref{thm:necessary_sufficient}.

\begin{theorem}[Passive ancilla invariance for concealment]
\label{thm:ancilla_stability}
Let $\mathcal E$ be a finite-dimensional channel, and let $R$ be a finite-dimensional reference system. Define $\mathcal E_R=\mathcal E\otimes\id_R$ and $M^R_a=M_a\otimes I_R$, $N^R_b=N_b\otimes I_R$.

Then:
\begin{enumerate}
\item $(M,N)$ is concealed by $\mathcal E$ if and only if $(M^R,N^R)$ is concealed by $\mathcal E_R$.
\item $R_c(M^R,N^R\mid\mathcal E_R)=R_c(M,N\mid\mathcal E)$.
\item $\epsilon_c(M^R,N^R\mid\mathcal E_R)=\epsilon_c(M,N\mid\mathcal E)$.
\end{enumerate}
\end{theorem}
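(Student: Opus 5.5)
The plan is to compare the two settings through a pair of positive unital maps that intertwine the adjoints: the lifting $L(X)=X\otimes I_R$ and the normalized partial trace
\[
\Phi(Y)=\frac{1}{d_R}\Tr_R(Y),\qquad d_R=\dim\mathcal H_R,
\]
defined on both output and input spaces. Both maps are positive and unital, so they send POVMs to POVMs and joint POVMs, taken componentwise, to joint POVMs with the corresponding marginals. The key identities are
\[
\mathcal E_R^\dagger\circ L=L\circ\mathcal E^\dagger,\qquad
\Phi_{\rm in}\circ\mathcal E_R^\dagger=\mathcal E^\dagger\circ\Phi_{\rm out},
\]
and both follow from $\mathcal E_R^\dagger=\mathcal E^\dagger\otimes\id_R$ by checking on product operators $X\otimes Z$, since $\Tr_R$ commutes with $\mathcal E^\dagger\otimes\id_R$. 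In addition, $\Phi\circ L=\id$.

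For item 1, I use the adjoint characterization of Theorem~\ref{thm:necessary_sufficient} on both sides.
\begin{itemize}
\item If $(F,G)$ is a compatible representative for $(M,N)$ under $\mathcal E$, then $(L(F),L(G))$ is compatible, with joint POVM $L(J)$. By the first intertwining identity it lies in the fibres of $(M^R,N^R)$.
\item Conversely, suppose $(F',G')$ is an arbitrary compatible pair on $\mathcal H_{\rm out}\otimes\mathcal H_R$ with $\mathcal E_R^\dagger(F'_a)=\mathcal E^\dagger(M_a)\otimes I_R$. Apply $\Phi_{\rm in}$ and use the second identity:
\[
\mathcal E^\dagger(\Phi(F'_a))=\Phi_{\rm in}\bigl(\mathcal E^\dagger(M_a)\otimes I_R\bigr)=\mathcal E^\dagger(M_a).
\]
The same holds for $G'$. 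Since $\Phi(J')$ is a joint POVM, $(\Phi(F'),\Phi(G'))$ is a compatible representative for $(M,N)$.
\end{itemize}
This argument also shows, more generally, that $\Phi$ maps pairs concealed by $\mathcal E_R$ to pairs concealed by $\mathcal E$, and that $L$ maps pairs concealed by $\mathcal E$ to pairs concealed by $\mathcal E_R$. This holds for arbitrary pairs, not only lifted ones.

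For item 2, I argue in both directions on the noise parameter.
\begin{itemize}
\item If $t$ is feasible for $(M,N)$ with noise $(P,Q)$, then $t$ is feasible for the lifted pair with noise $(L(P),L(Q))$. The noisy lifted pair is $L$ of the noisy pair, so it is concealed by the remark above. This gives $R_c(M^R,N^R\mid\mathcal E_R)\le R_c(M,N\mid\mathcal E)$.
\item Conversely, let $t$ be feasible for the lifted pair with arbitrary noise $(P',Q')$ on the enlarged space. Apply $\Phi$ to the noisy pair. Since $\Phi\circ L=\id$ and $\Phi$ is linear, this gives
\[
\frac{M+t\,\Phi(P')}{1+t},\qquad \frac{N+t\,\Phi(Q')}{1+t}.
\]
Here $(\Phi(P'),\Phi(Q'))$ are POVMs, and the resulting pair is concealed by $\mathcal E$. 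So $t$ is feasible for $(M,N)$, which gives the reverse inequality.
\end{itemize}

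For item 3, I expect the main obstacle to be matching the exact definition of $\epsilon_c$ in Appendix~\ref{app:approx}. I will assume it is the infimum, over compatible $(F,G)$, of $\max_{a,b}\{\|\mathcal E^\dagger(M_a-F_a)\|_\infty,\|\mathcal E^\dagger(N_b-G_b)\|_\infty\}$.
\begin{itemize}
\item For the inequality $\le$, lift an optimal representative by $L$. This is exact because $\|X\otimes I_R\|_\infty=\|X\|_\infty$ and $\mathcal E_R^\dagger L=L\mathcal E^\dagger$.
\item For the inequality $\ge$, take any compatible $(F',G')$ on the enlarged space and project it with $\Phi$. The errors become $\Phi_{\rm in}\bigl(\mathcal E_R^\dagger(M^R_a-F'_a)\bigr)=\mathcal E^\dagger(M_a-\Phi(F'_a))$. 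Since $\Phi_{\rm in}$ is positive and unital, it is an operator-norm contraction on Hermitian operators, so these errors do not increase.
\end{itemize}
If the appendix uses a different norm or states the error differently, the same two ingredients should adapt, provided the norm is contractive under positive unital maps and invariant under $X\mapsto X\otimes I_R$: the isometric lift and the contractive intertwining projection.
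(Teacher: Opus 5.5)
Your proposal is correct and follows essentially the same route as the paper. The paper's reduction map $\Gamma_\tau^{S}(X)=\Tr_R[(I_S\otimes\tau^{1/2})X(I_S\otimes\tau^{1/2})]$, the Heisenberg adjoint of appending a fixed reference state $\tau$, is exactly your normalized partial trace $\Phi$ when $\tau=I_R/d_R$. The three items are then obtained by the same argument you give: lift with $\otimes I_R$, project back down with the intertwining positive unital map, and use operator-norm contractivity for $\epsilon_c$, whose definition in the appendix is the one you assumed.
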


\begin{proof}
We prove the equivalence by constructing reduction maps.

For a fixed reference state $\tau\in\mathcal S(\mathcal H_R)$, define the completely positive, unital map
\[
\Gamma_\tau^S(X)=\operatorname{Tr}_R\!\left[
(I_S\otimes\tau_R^{1/2})X(I_S\otimes\tau_R^{1/2})
\right]
\]
for each system $S\in\{\text{in},\text{out}\}$. Equivalently, $\Gamma_\tau^S$ is the Heisenberg adjoint of the channel $\rho\mapsto\rho\otimes\tau$.

For every system operator $A$,
\[
\Gamma_\tau^S(A\otimes I_R)=A.
\]

Moreover, writing $\mathcal E(\rho)=\sum_jK_j\rho K_j^\dagger$, we have
\[
\Gamma_\tau^{\rm in}
\!\left[(\mathcal E^\dagger\otimes\id_R)(X)\right]
=
\sum_jK_j^\dagger\Gamma_\tau^{\rm out}(X)K_j
=
\mathcal E^\dagger\!\left(\Gamma_\tau^{\rm out}(X)\right).
\]
Thus
\[
\mathcal E^\dagger\circ\Gamma_\tau^{\rm out}
=
\Gamma_\tau^{\rm in}\circ
(\mathcal E^\dagger\otimes\id_R).
\]

Now suppose $(M^R,N^R)$ is concealed by $\mathcal E_R$. Then there exist compatible representatives $(F^R,G^R)$ such that
\[
(\mathcal E^\dagger\otimes\id_R)(F^R_a-M^R_a)=0,
\qquad
(\mathcal E^\dagger\otimes\id_R)(G^R_b-N^R_b)=0.
\]
Applying $\Gamma_\tau^{\rm in}$ to both sides and using the intertwining identity gives
\[
\mathcal E^\dagger\!\left(\Gamma_\tau^{\rm out}(F^R_a)-M_a\right)=0,
\]
and similarly for $G$. Since $\Gamma_\tau^S$ is completely positive and unital, $\Gamma_\tau^{\rm out}(F^R_a)$ and $\Gamma_\tau^{\rm out}(G^R_b)$ are POVMs on the system. Moreover, $\Gamma_\tau$ is compatibility preserving: if $\{J_{ab}^{R}\}$ is a joint POVM for $(F^R,G^R)$, then $\{\Gamma_\tau^{\rm out}(J_{ab}^{R})\}$ is a joint POVM for $(\Gamma_\tau^{\rm out}(F^R_a),\Gamma_\tau^{\rm out}(G^R_b))$. Hence $(M,N)$ is concealed by $\mathcal E$.

The converse direction is immediate: if $(F,G)$ conceals $(M,N)$ on the system, then $(F\otimes I_R,G\otimes I_R)$ conceals $(M^R,N^R)$ on the enlarged system.

For the robustness equality, first let $t$ be feasible for $R_c(M,N\mid\mathcal E)$, with noise POVMs $(P,Q)$. Then $(P\otimes I_R,Q\otimes I_R)$ is a valid pair of noise POVMs on the enlarged system, and the corresponding noisy lifted pair is concealed by $\mathcal E_R$. Hence
\[
R_c(M^R,N^R\mid\mathcal E_R)
\le
R_c(M,N\mid\mathcal E).
\]

Conversely, suppose that $t$ is feasible for $R_c(M^R,N^R\mid\mathcal E_R)$ with arbitrary noise POVMs on the enlarged system, $(P^R,Q^R)$. Define
\[
P_a:=\Gamma_\tau^{\rm out}(P_a^R),
\qquad
Q_b:=\Gamma_\tau^{\rm out}(Q_b^R).
\]
Since $\Gamma_\tau^{\rm out}$ is positive and unital, $(P,Q)$ is a pair of POVMs. Moreover,
\[
\Gamma_\tau^{\rm out}
\left(
\frac{M_a^R+tP_a^R}{1+t}
\right)
=
\frac{M_a+tP_a}{1+t},
\]
and similarly for $N$. The reduction argument above shows that this system noisy pair is concealed by $\mathcal E$. Therefore
\[
R_c(M,N\mid\mathcal E)
\le
R_c(M^R,N^R\mid\mathcal E_R).
\]
Combining the two inequalities proves the claimed equality.

For the approximate-concealment error, let $(F^R,G^R)$ be any compatible pair on the enlarged output space and define
\[
F_a:=\Gamma_\tau^{\rm out}(F_a^R),
\qquad
G_b:=\Gamma_\tau^{\rm out}(G_b^R).
\]
This is a compatible system pair. By the intertwining identity and operator-norm contractivity of the positive unital map $\Gamma_\tau^{\rm in}$,
\begin{align*}
\left\|
\mathcal E^\dagger(F_a-M_a)
\right\|_\infty
&=
\left\|
\Gamma_\tau^{\rm in}
\left[
(\mathcal E^\dagger\otimes\id_R)
(F_a^R-M_a^R)
\right]
\right\|_\infty\\
&\le
\left\|
(\mathcal E^\dagger\otimes\id_R)
(F_a^R-M_a^R)
\right\|_\infty .
\end{align*}
The same inequality holds for $G_b-N_b$. Taking the infimum over enlarged compatible pairs gives
\[
\epsilon_c(M,N\mid\mathcal E)
\le
\epsilon_c(M^R,N^R\mid\mathcal E_R).
\]

Conversely, tensoring any compatible system pair $(F,G)$ with $I_R$ gives a compatible enlarged pair, and
\[
\left\|
(\mathcal E^\dagger\otimes\id_R)
[(F_a-M_a)\otimes I_R]
\right\|_\infty
=
\left\|
\mathcal E^\dagger(F_a-M_a)
\right\|_\infty .
\]
Hence the reverse inequality holds, proving
\[
\epsilon_c(M^R,N^R\mid\mathcal E_R)
=
\epsilon_c(M,N\mid\mathcal E).
\]
\end{proof}

\begin{corollary}[Passive ancillas cannot activate concealment]
If a POVM pair is compatible on the original system, it remains compatible after tensoring with the identity on the reference. Conversely, if $(M\otimes I_R,N\otimes I_R)$ is concealed by $\mathcal E\otimes\id_R$, then $(M,N)$ is concealed by $\mathcal E$.
\end{corollary}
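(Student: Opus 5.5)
The corollary has two halves, and I will treat them separately, relying almost entirely on Theorem~\ref{thm:ancilla_stability} and the reduction maps built in its proof.

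\emph{First claim.} Let $\{J_{ab}\}$ be a joint POVM for $(M,N)$ on $\mathcal H_{\rm out}$. I will take $\{J_{ab}\otimes I_R\}$ as the candidate joint POVM for the lifted pair. Each element is positive, since it is a tensor product of positive operators. The elements sum to $I\otimes I_R$. Summing over $b$ gives $M_a\otimes I_R$, and summing over $a$ gives $N_b\otimes I_R$. So the lifted pair is compatible, which also means it is trivially concealed by $\mathcal E\otimes\id_R$, using itself as the representative.

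\emph{Second claim.} This is the ``only if'' direction of item~1 of Theorem~\ref{thm:ancilla_stability}, and I will re-derive it from the reduction map $\Gamma_\tau$. Fix any state $\tau$ on $\mathcal H_R$, for instance $\tau=I_R/\dim\mathcal H_R$. Let $(F^R,G^R)$ be compatible representatives with joint POVM $\{J^R_{ab}\}$, satisfying
\[
(\mathcal E^\dagger\otimes\id_R)(F^R_a-M_a\otimes I_R)=0,
\]
and the analogous identity for $G^R$ and $N$. The steps, in order, are as follows.
\begin{enumerate}
\item Apply $\Gamma_\tau^{\rm in}$ to this identity.
\item Use the intertwining relation $\Gamma_\tau^{\rm in}\circ(\mathcal E^\dagger\otimes\id_R)=\mathcal E^\dagger\circ\Gamma_\tau^{\rm out}$. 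Its verification is the one genuine computation: with Kraus operators $K_j$, the conjugation by $I\otimes\tau^{1/2}$ commutes with $K_j^\dagger\otimes I_R$, and the partial trace over $R$ then factors out $K_j^\dagger(\cdot)K_j$.
\item Use $\Gamma_\tau^{\rm out}(M_a\otimes I_R)=M_a\,\Tr\tau=M_a$.
\end{enumerate}
Together these give $\mathcal E^\dagger(\Gamma_\tau^{\rm out}(F^R_a)-M_a)=0$, and likewise $\mathcal E^\dagger(\Gamma_\tau^{\rm out}(G^R_b)-N_b)=0$.

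It remains to check that the reduced operators are valid compatible representatives. Because $\Gamma_\tau^{\rm out}$ is positive and unital, $\{\Gamma_\tau^{\rm out}(J^R_{ab})\}$ is a POVM. By linearity, its marginals are $\Gamma_\tau^{\rm out}(F^R_a)$ and $\Gamma_\tau^{\rm out}(G^R_b)$. These are therefore compatible POVMs lying in the fibres of $M$ and $N$, so Theorem~\ref{thm:necessary_sufficient} gives concealment of $(M,N)$ by $\mathcal E$.

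The only delicate point is that $F^R$ and $G^R$ need not be of product form. This is exactly why we compress with $\Gamma_\tau$, a unital, compatibility-preserving map, rather than trying to restrict to product representatives. I do not anticipate any further obstacle.
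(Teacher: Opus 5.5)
Your proposal is correct and follows essentially the paper's own argument, which is the proof of item~1 of Theorem~\ref{thm:ancilla_stability}. The first half is the tensor-with-$I_R$ construction used in that theorem's converse direction. The second half reproduces its reduction via the positive unital map $\Gamma_\tau^{\rm out}$, the intertwining identity $\mathcal E^\dagger\circ\Gamma_\tau^{\rm out}=\Gamma_\tau^{\rm in}\circ(\mathcal E^\dagger\otimes\id_R)$, and compatibility preservation under $\Gamma_\tau^{\rm out}$.
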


\begin{remark}
The theorem concerns the lifted target measurements $M_a\otimes I_R$ and $N_b\otimes I_R$ under a single use of $\mathcal E\otimes\id_R$. The compatible representatives and noise POVMs on the enlarged output space may be arbitrary joint system--reference POVMs. The result does not cover arbitrary joint target measurements, adaptive channel use, or genuinely multi-use strategies.
\end{remark}

\section{Discussion}
\label{sec:discussion}

Under tomographically complete input access, exact concealment is equivalent to compatibility relative to the channel-accessible state set $\mathcal S_{\mathcal E}$, and therefore belongs to the broader restricted-state framework \cite{Heinosaari2021FewStates}. The channel-generated setting supplies additional structure: the annihilator of the accessible state span is exactly $\ker(\mathcal E^\dagger)$, which turns a restricted-testing problem into an affine-fibre problem and yields the universal existence criterion and kernel preorder. This structural simplification is the main advantage over general restricted-state compatibility. Related work on incompatibility under restricted subspace access \cite{Uola2021Subspaces} addresses a different but complementary scenario.

The central physical distinction is between rank loss and contraction. Rank loss---not contraction---is the mechanism underlying exact concealment of otherwise incompatible measurements. Contraction can make the effective POVMs jointly measurable while the adjoint remains injective; depolarizing and phase-damping dynamics exhibit this at finite time. Exact concealment of an incompatible pair requires a new hidden observable direction and therefore a loss of linear rank. The finite-time no-go theorem shows that this qualitative transition cannot occur under regular finite-dimensional time-local evolution. The quantitative obstruction theorem strengthens this by showing that before rank loss, even approximate concealment is bounded below by the minimum gain of the channel, with finite-time lower bounds for regular dynamics that remain positive for all finite times. This complements the recent dynamical resource theory of incompatibility preservability \cite{Hsieh2025} and the classification under classical processing \cite{Das2026}.

Thus, joint measurability of the effective measurements does not imply exact concealment. Adjoint noninjectivity is necessary for even one incompatible pair to be concealable, while a prescribed pair must additionally satisfy the fibre-intersection condition.

The quantitative theory makes this separation operational. Concealment robustness is a generalized robustness on the operational quotient, monotone under kernel inclusion, and reduces to ordinary generalized incompatibility robustness for injective adjoints. Its dual witnesses annihilate inaccessible directions and can be evaluated as shift-invariant normalised positive-payoff channel-output games. This operational interpretation follows the established resource-theoretic paradigm for incompatibility robustness \cite{Regula2018,Takagi2019,Skrzypczyk2019}.

We have completely characterized the qubit channels for which such a retraction exists, extending the unital result to arbitrary qubit channels. The classification reveals that the existence of a positive retraction is strictly stronger than mere kernel noninjectivity: the condition $\mathbf t\in\im(T)$ is necessary for $\rank(T)>0$. This identifies precisely when the affine-fibre problem can be reduced to ordinary joint measurability in the qubit case.

The block-decomposition theorem extends the method to a higher-dimensional class of channels where the adjoint kernel consists of operators with vanishing diagonal blocks. More significantly, the real-symmetric Jordan retraction construction shows that the framework applies to genuinely noncommutative order structures in every dimension, beyond block-diagonal algebras.

The analytical Bloch-rank-two family demonstrates that the quotient changes the resource geometry continuously rather than merely producing the two extreme cases of zero robustness and ordinary incompatibility robustness. The strict separation for $0\le r<1$ therefore reflects genuinely different operational content. Moreover, since concealment implies effective joint measurability, the assemblage generated by the corresponding effective measurement sets admits a local-hidden-state model for every bipartite state \cite{Uola2015}. Thus these measurements cannot certify steering under the channel-restricted access considered here.

The scope is deliberately restricted. We assume finite dimensions, finite-outcome POVMs, exact knowledge of the channel, and tomographically complete input access. Passive finite-dimensional references do not change exact concealment, concealment robustness, or approximate concealment error, even when arbitrary joint compatible representatives and noise POVMs are allowed on the enlarged output space (Section~\ref{sec:ancilla}). Adaptive channel use, genuinely multi-use strategies, arbitrary joint target measurements, and infinite-dimensional systems are not covered. The approximate and finite-sample results provide stability and certification guarantees within the stated single-use model.

Several directions for future work are particularly promising. First, extending the classification of channels admitting positive retractions beyond qubits, block-diagonal channels, and the real-symmetric construction would determine how far the exact reduction extends. Second, obtaining analogous pair-dependent criteria and exact robustness formulas for more general measurement families would strengthen the quantitative theory. Third, extending the framework to active ancilla-assisted target measurements, general channel testers, adaptive protocols, and collective multi-use strategies would clarify whether concealment can be overcome by more general access models. Fourth, a full resource theory of operational concealment with monotones and ordering relations beyond the kernel preorder would provide a more complete quantitative framework, while experimental implementation of the witness construction on near-term quantum platforms would provide a direct test of the operational predictions.

\section{Conclusion}
\label{sec:conclusion}

Our results establish adjoint noninjectivity as the universal structural condition enabling exact concealment of some incompatible binary measurements under channel restrictions. In dynamical settings, the onset of this condition corresponds to rank loss. This mechanism is distinct from contraction, which can render the effective measurements compatible without creating nontrivial operational fibres.

We proved that adjoint noninjectivity is necessary and sufficient for a finite-dimensional channel to conceal some incompatible binary pair, derived quantitative approximate-concealment bounds controlled by the minimum gain of the channel, and established the resulting kernel ordering and finite-time rank-loss obstruction.

The positive-retraction principle provides a dimension-independent reduction to ordinary joint measurability whenever the required map exists. We completely characterized the qubit channels for which such a retraction exists, revealing that unitality is sufficient but not necessary, and that some nonunital rank-deficient channels do not admit such a reduction. We also derived a block-decomposition theorem for higher-dimensional channels and constructed a genuinely non-block real-symmetric Jordan retraction in every finite dimension.

The analytical robustness family and the quotient dual with a shift-invariant normalised positive-payoff operational interpretation reveal a nontrivial channel-dependent resource geometry. These results identify operational concealment as the rank-sensitive layer of measurement incompatibility under channel-restricted state access. We further proved that, for lifted target measurements under a single passive channel use, exact concealment, concealment robustness, and approximate concealment error are invariant under adjoining an arbitrary finite-dimensional reference system, even when arbitrary joint compatible simulators are allowed on the enlarged output space.

\section*{Author Contributions}
M.A.S. and Z.W. conceived the project. M.A.S. developed the analytical and numerical framework, performed the calculations, and prepared the initial draft. Z.W. supervised the project and contributed to the analysis, interpretation of the results, and revision of the manuscript. Both authors reviewed and approved the final manuscript. AI-assisted tools were used for language editing and formatting. All mathematical statements, proofs, and results were verified by the authors, who take full responsibility for the content.

\appendix

\section{Single-state indistinguishability}
\label{app:single_state}

For any finite-dimensional channel $\mathcal E$, any fixed input state $\rho$, and any finite-outcome POVM pair, let $\sigma=\mathcal E(\rho)$. There exist compatible POVMs reproducing the output statistics on $\sigma$. The proof constructs $F_a=p_aI$, $G_b=q_bI$ with $p_a=\Tr[M_a\sigma]$, $q_b=\Tr[N_b\sigma]$, and joint POVM $J_{ab}=p_aq_bI$.

\section{Convexity and compactness of the concealed set}
\label{app:convexity_compactness}

For fixed finite outcome sets $\Omega$ and $\Lambda$, the set $\mathcal C_{\mathcal E}^{\Omega,\Lambda}$ of POVM pairs concealed by $\mathcal E$ is convex and compact.

\emph{Proof of convexity.} If $(M^{(1)},N^{(1)}), (M^{(2)},N^{(2)}) \in \mathcal C_{\mathcal E}$, there exist compatible representatives $(F^{(i)},G^{(i)})$ with $\mathcal E^\dagger(F_a^{(i)}-M_a^{(i)})=0$ and $\mathcal E^\dagger(G_b^{(i)}-N_b^{(i)})=0$. Since compatible POVMs form a convex set, the mixtures $F_a^\lambda = \lambda F_a^{(1)}+(1-\lambda)F_a^{(2)}$, $G_b^\lambda = \lambda G_b^{(1)}+(1-\lambda)G_b^{(2)}$ are compatible and satisfy the required kernel constraints.

\emph{Proof of compactness.} The sets of POVMs with fixed finite outcome sets and of joint POVMs are compact in finite dimension. The feasibility conditions defining concealment are all closed: positivity, normalization, the joint-measurability marginal constraints, and the kernel constraints. Hence the set of all $(M,N,F,G,J)$ satisfying these constraints is closed in a compact set and is therefore compact. The coordinate projection onto $(M,N)$ is continuous, so its image $\mathcal C_{\mathcal E}$ is compact.

\section{Steering consequence}
\label{app:steering}

Quantum steering \cite{Wiseman2007} captures the ability of one party to remotely prepare another party's state through local measurements. Under channel access, Alice's measurement $\{M_a\}$ on a bipartite state $\rho_{AB}$ produces Bob's assemblage. In the Heisenberg picture, the effective measurement on Alice is $\mathcal E^\dagger(M_a)$.

\begin{corollary}[Channel-induced unsteerability under restricted access]
If $\{M_a\}$ and $\{N_b\}$ are concealed by $\mathcal E$, then, for every bipartite state $\rho_{AB}$, the assemblage generated by the effective measurement sets
$\{\mathcal E^\dagger(M_a)\}$ and
$\{\mathcal E^\dagger(N_b)\}$
admits a local-hidden-state model within the single-use channel-restricted scenario.
\end{corollary}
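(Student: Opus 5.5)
The plan is to chain three facts already established: concealment implies effective compatibility (Proposition~\ref{prop:distinction}), a joint POVM for the effective measurements lets Alice's outcomes be simulated by a classical hidden variable, and the resulting assemblage then admits a local-hidden-state model. The argument is entirely in the Heisenberg picture. Alice's measurement after the channel is described by the effective POVMs $\{\mathcal E^\dagger(M_a)\}$ and $\{\mathcal E^\dagger(N_b)\}$ acting on her input share of $\rho_{AB}$. Bob's conditional states are therefore
\[
\sigma_{a|0}=\Tr_A\!\left[(\mathcal E^\dagger(M_a)\otimes I)\rho_{AB}\right],
\qquad
\sigma_{b|1}=\Tr_A\!\left[(\mathcal E^\dagger(N_b)\otimes I)\rho_{AB}\right].
\]
The first step is to check that this is exactly the assemblage obtained by applying $\mathcal E$ to Alice's system and then measuring $M$ or $N$. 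This follows from the defining duality $\Tr[\mathcal E(\rho)A]=\Tr[\rho\,\mathcal E^\dagger(A)]$, applied to $\mathcal E\otimes\id$.

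The second step invokes Proposition~\ref{prop:distinction}. Since $(M,N)$ is concealed, the effective pair is compatible. Proposition~\ref{prop:distinction} in fact constructs an explicit joint POVM $K_{ab}=\mathcal E^\dagger(J_{ab})$, where $\{J_{ab}\}$ is a joint POVM for the compatible representatives $(F,G)$ supplied by Theorem~\ref{thm:necessary_sufficient}.

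The third step is the standard construction of a local-hidden-state model from a joint POVM. Take the hidden variable $\lambda=(a,b)$ with distribution and states given by the unnormalized operators
\[
\sigma_\lambda=\Tr_A\!\left[(K_{ab}\otimes I)\rho_{AB}\right]\succeq0,
\]
together with deterministic response functions $p(a'|0,\lambda)=\delta_{a',a}$ and $p(b'|1,\lambda)=\delta_{b',b}$. The marginal conditions $\sum_bK_{ab}=\mathcal E^\dagger(M_a)$ and $\sum_aK_{ab}=\mathcal E^\dagger(N_b)$ then reproduce $\sigma_{a|0}=\sum_b\sigma_{(a,b)}$ and $\sigma_{b|1}=\sum_a\sigma_{(a,b)}$. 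Moreover, $\sum_\lambda\sigma_\lambda=\rho_B$ is a valid ensemble, and positivity of each $\sigma_\lambda$ follows from $K_{ab}\succeq0$. This holds for every $\rho_{AB}$, since the construction never uses the state.

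No step is a genuine obstacle, because the chain relies only on Proposition~\ref{prop:distinction} and the standard joint-measurability-to-unsteerability argument cited in the Discussion~\cite{Uola2015}. The only point needing care is interpretive. The phrase ``within the single-use channel-restricted scenario'' should be read as saying that Alice's measurements act after one use of $\mathcal E$ on her system, with no ancilla-assisted or adaptive access. Under that reading the effective-measurement description is exact.
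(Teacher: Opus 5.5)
Your proposal is correct and follows essentially the same route as the paper: Proposition~\ref{prop:distinction} gives joint measurability of the effective pair, and joint measurability yields a local-hidden-state model for every $\rho_{AB}$. The only difference is that you write out explicitly what the paper cites from Ref.~\cite{Uola2015}: the construction with $\lambda=(a,b)$, $\sigma_\lambda=\Tr_A[(K_{ab}\otimes I)\rho_{AB}]$ and deterministic responses, together with the Heisenberg-picture identification of the assemblage.
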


\begin{proof}
By Proposition~\ref{prop:distinction}, concealment implies that the effective measurements are jointly measurable. Jointly measurable measurements generate an unsteerable assemblage for every bipartite state. Hence, by the joint-measurability--steering relation established in Ref.~\cite{Uola2015}, the assemblage produced by the effective measurement sets admits a local-hidden-state model.
\end{proof}

\section{Approximate concealment}
\label{app:approx}

Throughout this appendix, $\|\cdot\|_\infty$ denotes the operator norm. For a linear map $\Phi$, we use the induced norm
\[
\|\Phi\|_{\infty\to\infty}
:=
\sup_{X\ne0}
\frac{\|\Phi(X)\|_\infty}{\|X\|_\infty}.
\]

Exact concealment requires perfect reproduction of all channel-accessible statistics. In practice, finite statistics and imperfect tomography introduce unavoidable errors. We therefore introduce an approximate version.

Given $\varepsilon \ge 0$, we say that $\{M_a\},\{N_b\}$ are \emph{$\varepsilon$-concealed} by $\mathcal{E}$ if there exist compatible POVMs $\{F_a\},\{G_b\}$ such that, for all $a,b$,
\[
\|\mathcal{E}^\dagger(F_a-M_a)\|_\infty \le \varepsilon,\qquad
\|\mathcal{E}^\dagger(G_b-N_b)\|_\infty \le \varepsilon.
\]
The minimal such error defines the concealment error:
\[
\epsilon_c(M,N\mid\mathcal E)
:=\inf_{F,G\ \mathrm{compatible}}
\max\{d_{\mathcal E}(F,M),d_{\mathcal E}(G,N)\},
\]
where $d_{\mathcal E}(F,M):=\max_a\|\mathcal E^\dagger(F_a-M_a)\|_\infty$. Compactness of the compatible-pair set implies that the infimum is attained; consequently, $\epsilon_c=0$ if and only if exact concealment holds.

This approximate formulation admits an exact semidefinite programming representation. The constraints $\|\mathcal{E}^\dagger(F_a - M_a)\|_\infty \le \varepsilon$ are equivalent to
\[
-\varepsilon I \le \mathcal{E}^\dagger(F_a - M_a) \le \varepsilon I,
\]
which are linear matrix inequalities in the joint POVM variables.

\subsection{Approximate ancilla stability}

If $\|\mathcal E^\dagger(M_{a|x}-F_{a|x})\|_\infty \le \varepsilon$, then for every reference system, every state $\rho_{AR}$, and every effect $0\preceq L\preceq I_R$,
\[
\left|\Tr[((M_{a|x}-F_{a|x})\otimes L)(\mathcal E\otimes\id_R)(\rho_{AR})]\right| \le \varepsilon.
\]

To see this, let $\Delta = M_{a|x}-F_{a|x}$. Then
\[
\begin{aligned}
&\left|\Tr[(\Delta\otimes L)(\mathcal E\otimes\id_R)(\rho_{AR})]\right|\\
&\quad= \left|\Tr[(\mathcal E^\dagger(\Delta)\otimes L)\rho_{AR}]\right|\\
&\quad\le \|\mathcal E^\dagger(\Delta)\otimes L\|_\infty\\
&\quad= \|\mathcal E^\dagger(\Delta)\|_\infty\,\|L\|_\infty \le \varepsilon,
\end{aligned}
\]
using tensor-norm multiplicativity and $\|L\|_\infty\le1$. This holds for arbitrary entangled states $\rho_{AR}$.

The stronger $\Gamma_\tau$-based proof in Section~\ref{sec:ancilla} establishes equality of the approximate concealment errors under passive ancillas.

\subsection{Lipschitz stability}

The approximate concealment error is Lipschitz continuous under measurement and channel perturbations.

\begin{theorem}[Lipschitz stability under measurement perturbations]
For two POVM pairs $\mathbf M$ and $\mathbf M'$ with the same outcome sets,
\[
\left|\epsilon_c(\mathbf M\mid\mathcal E)-\epsilon_c(\mathbf M'\mid\mathcal E)\right|
\le
\max_{a,x}\|\mathcal E^\dagger(M_{a|x}-M'_{a|x})\|_\infty.
\]
\end{theorem}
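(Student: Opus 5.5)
The plan is a direct perturbation argument: transfer compatible representatives from one pair to the other and apply the triangle inequality in operator norm. Set
\[
\delta:=\max_{a,x}\|\mathcal E^\dagger(M_{a|x}-M'_{a|x})\|_\infty .
\]
By the definition of $\epsilon_c$ in this appendix, and since the compatible-pair set is compact, there is an optimal compatible pair $\mathbf F=\{F_{a|x}\}$ for $\mathbf M$ with
\[
\max_{a,x}\|\mathcal E^\dagger(F_{a|x}-M_{a|x})\|_\infty=\epsilon_c(\mathbf M\mid\mathcal E).
\]
Attainment is convenient but not essential; a near-optimal pair within any $\eta>0$ would also do, followed by letting $\eta\to0$.

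The key step uses the same $\mathbf F$ as a candidate compatible pair for $\mathbf M'$. The set of compatible pairs does not depend on the target measurements, so $\mathbf F$ is feasible in the infimum defining $\epsilon_c(\mathbf M'\mid\mathcal E)$. Linearity of $\mathcal E^\dagger$ and the triangle inequality for $\|\cdot\|_\infty$ then give, for each $a,x$,
\[
\|\mathcal E^\dagger(F_{a|x}-M'_{a|x})\|_\infty\le\|\mathcal E^\dagger(F_{a|x}-M_{a|x})\|_\infty+\|\mathcal E^\dagger(M_{a|x}-M'_{a|x})\|_\infty .
\]
Take the maximum over $a,x$, noting that a maximum of sums is at most the sum of the maxima. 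This yields
\[
\epsilon_c(\mathbf M'\mid\mathcal E)\le\epsilon_c(\mathbf M\mid\mathcal E)+\delta .
\]

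The quantity $\delta$ is symmetric in $\mathbf M$ and $\mathbf M'$, so exchanging their roles gives the reverse inequality, and the two together prove the claim. There is no genuine obstacle here; the only point needing care is bookkeeping. The paper writes $\epsilon_c$ as a maximum over the two settings of $d_{\mathcal E}$, and each $d_{\mathcal E}$ is a maximum over outcomes. I will treat the whole expression as a single maximum over the pairs $(a,x)$ so that the triangle inequality passes through cleanly.
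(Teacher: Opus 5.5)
Your proposal is correct and essentially matches the paper's proof: both take a compatible pair that is $\varepsilon$-feasible for $\mathbf M$, use the triangle inequality to show it is $(\varepsilon+\delta)$-feasible for $\mathbf M'$, pass to the infimum, and then exchange $\mathbf M$ and $\mathbf M'$. The only cosmetic difference is that you start from an optimal pair, with an $\eta$-approximation as fallback. The paper instead works with arbitrary $\varepsilon$-feasible pairs throughout, so it never needs attainment.
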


\begin{proof}
Set
\[
\delta_{\mathcal E}
:=
\max_{a,x}
\left\|
\mathcal E^\dagger(M_{a|x}-M'_{a|x})
\right\|_\infty.
\]
Let $\mathbf F=\{F_{a|x}\}$ be any compatible pair that is $\varepsilon$-feasible for $\mathbf M$. Then
\begin{align*}
&
\left\|
\mathcal E^\dagger(F_{a|x}-M'_{a|x})
\right\|_\infty
\\
&\quad\le
\left\|
\mathcal E^\dagger(F_{a|x}-M_{a|x})
\right\|_\infty
+
\left\|
\mathcal E^\dagger(M_{a|x}-M'_{a|x})
\right\|_\infty
\\
&\quad\le
\varepsilon+\delta_{\mathcal E}.
\end{align*}
Taking the infimum over all compatible pairs and all feasible values of $\varepsilon$ gives
\[
\epsilon_c(\mathbf M'\mid\mathcal E)
\le
\epsilon_c(\mathbf M\mid\mathcal E)
+
\delta_{\mathcal E}.
\]
Interchanging $\mathbf M$ and $\mathbf M'$ gives the reverse inequality,
which proves
\[
\left|
\epsilon_c(\mathbf M\mid\mathcal E)
-
\epsilon_c(\mathbf M'\mid\mathcal E)
\right|
\le
\delta_{\mathcal E}.
\]
\end{proof}

\begin{theorem}[Lipschitz stability under channel perturbations]
For channels $\mathcal E$ and $\mathcal F$,
\[
\left|\epsilon_c(\mathbf M\mid\mathcal E)-\epsilon_c(\mathbf M\mid\mathcal F)\right|
\le
\|\mathcal E^\dagger-\mathcal F^\dagger\|_{\infty\to\infty}.
\]
\end{theorem}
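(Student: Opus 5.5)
The plan mirrors the proof of the measurement-perturbation Lipschitz bound. Set
\[
\delta:=\|\mathcal E^\dagger-\mathcal F^\dagger\|_{\infty\to\infty}.
\]
By symmetry it suffices to show $\epsilon_c(\mathbf M\mid\mathcal F)\le\epsilon_c(\mathbf M\mid\mathcal E)+\delta$ and then interchange the roles of $\mathcal E$ and $\mathcal F$.

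The key point is that the feasible set in the definition of $\epsilon_c$ (compatible pairs $\mathbf F=\{F_{a|x}\}$) does not depend on the channel. Only the objective does. So I will fix an arbitrary compatible pair $\mathbf F$. For each $a,x$, the triangle inequality gives
\[
\|\mathcal F^\dagger(F_{a|x}-M_{a|x})\|_\infty
\le
\|\mathcal E^\dagger(F_{a|x}-M_{a|x})\|_\infty
+\|(\mathcal F^\dagger-\mathcal E^\dagger)(F_{a|x}-M_{a|x})\|_\infty .
\]
I then bound the second term by $\delta\,\|F_{a|x}-M_{a|x}\|_\infty$, using the definition of the induced norm. I also need $\|F_{a|x}-M_{a|x}\|_\infty\le1$. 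This holds because both are effects, $0\preceq F_{a|x},M_{a|x}\preceq I$, so their difference satisfies $-I\preceq F_{a|x}-M_{a|x}\preceq I$.

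Taking the maximum over $a,x$ gives
\[
\max\{d_{\mathcal F}(F,M),d_{\mathcal F}(G,N)\}\le\max\{d_{\mathcal E}(F,M),d_{\mathcal E}(G,N)\}+\delta .
\]
Taking the infimum over compatible pairs then yields the one-sided inequality. Interchanging $\mathcal E$ and $\mathcal F$ gives the reverse inequality, and together they give the absolute-value bound.

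The only subtle step is the norm convention. $\|\cdot\|_{\infty\to\infty}$ was defined in the appendix as a supremum over all $X\neq0$, while the difference $F_{a|x}-M_{a|x}$ is Hermitian. Restricting to Hermitian $X$ can only decrease the supremum, so the bound still holds with the stated norm. Beyond this, the argument is routine. There is no need for attainment of the infimum, since the inequality holds pointwise for every compatible pair.
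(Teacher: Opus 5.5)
Your proposal is correct and follows essentially the same argument as the paper. Both fix a compatible pair, apply the triangle inequality, bound the perturbation term by $\|\mathcal E^\dagger-\mathcal F^\dagger\|_{\infty\to\infty}$ using $\|F_{a|x}-M_{a|x}\|_\infty\le1$ for effects, take the infimum, and then interchange the two channels; working pointwise over compatible pairs rather than over $\varepsilon$-feasible ones, and your remark on Hermitian arguments of the induced norm, are minor refinements of the same proof.
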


\begin{proof}
Let $\mathbf G=\{G_{a|x}\}$ be a compatible pair that is $\varepsilon$-feasible for $\mathbf M$ under $\mathcal E$. Since $G_{a|x}$ and $M_{a|x}$ are effects,
\[
\|G_{a|x}-M_{a|x}\|_\infty\le1.
\]
Therefore,
\begin{align*}
&
\left\|
\mathcal F^\dagger(G_{a|x}-M_{a|x})
\right\|_\infty
\\
&\quad\le
\left\|
\mathcal E^\dagger(G_{a|x}-M_{a|x})
\right\|_\infty
\\
&\qquad+
\left\|
(\mathcal F^\dagger-\mathcal E^\dagger)
(G_{a|x}-M_{a|x})
\right\|_\infty
\\
&\quad\le
\varepsilon+
\left\|
\mathcal F^\dagger-\mathcal E^\dagger
\right\|_{\infty\to\infty}.
\end{align*}
Taking the infimum gives
\[
\epsilon_c(\mathbf M\mid\mathcal F)
\le
\epsilon_c(\mathbf M\mid\mathcal E)
+
\left\|
\mathcal F^\dagger-\mathcal E^\dagger
\right\|_{\infty\to\infty}.
\]
Interchanging $\mathcal E$ and $\mathcal F$ proves the result.
\end{proof}

\section{Finite-sample witness certification}
\label{app:finite_sample}

Once a witness is decomposed as $\mathcal W(\mathbf M)=c_0+\sum_i c_ip_i$, where $c_0$ is known exactly and the $p_i$ are accessible probabilities estimated from independent data blocks, the following finite-sample statement holds.

Let $\widehat p_i$ be estimated from $N_i$ independent Bernoulli trials. Define
\[
\widehat{\mathcal W}=c_0+\sum_i c_i\widehat p_i,\qquad
\Delta_\delta=\sqrt{\frac{\ln(1/\delta)}{2}\sum_i\frac{c_i^2}{N_i}}.
\]
Then, with probability at least $1-\delta$, $\mathcal W(\mathbf M)\ge\widehat{\mathcal W}-\Delta_\delta$. If every $\varepsilon$-concealed tested pair $\mathbf A$ satisfies $\mathcal W(\mathbf A)\le B_\varepsilon$ and $\widehat{\mathcal W}-\Delta_\delta>B_\varepsilon$, then the data certify that $\mathbf M$ is not $\varepsilon$-concealed with confidence at least $1-\delta$.

\begin{proof}
Apply the one-sided Hoeffding inequality~\cite{Hoeffding1963}
to the weighted independent sample averages. Each weighted Bernoulli
contribution has interval length $|c_i|/N_i$, so Hoeffding's inequality
applies with variance proxy $\sum_i c_i^2/N_i$. Since
\[
\widehat{\mathcal W}-\mathcal W(\mathbf M)
=
\sum_i c_i(\widehat p_i-p_i),
\]
Hoeffding's inequality gives
\[
\Pr\!\left[
\widehat{\mathcal W}-\mathcal W(\mathbf M)\ge\Delta_\delta
\right]
\le\delta.
\]
Hence, with probability at least $1-\delta$,
\[
\mathcal W(\mathbf M)\ge\widehat{\mathcal W}-\Delta_\delta.
\]
The certification statement follows immediately from the assumed
bound $\mathcal W(\mathbf A)\le B_\varepsilon$ for every
$\varepsilon$-concealed tested pair $\mathbf A$.
\end{proof}

\section{Exact SDP for concealment robustness}
\label{app:robustness_sdp}

For a pair of finite-outcome POVMs $M=\{M_a\}$ and $N=\{N_b\}$, the concealment robustness is given by
\begin{widetext}
\begin{equation}
\label{eq:robustness_sdp}
\begin{aligned}
R_c(M,N\mid\mathcal E)
=\min_{s,\widetilde J,X,Y,U,V}\quad & s-1 \\[2pt]
\text{s.t.}\quad
& \widetilde J_{ab}\succeq0,
\qquad
  X_a=\sum_b\widetilde J_{ab},
\qquad
  U_b=\sum_a\widetilde J_{ab},\\
& \sum_aX_a=\sum_bU_b=sI,
\qquad
  Y_a\succeq0,
\qquad
  V_b\succeq0,\\
& \sum_aY_a=\sum_bV_b=(s-1)I,\\
& X_a-Y_a-M_a\in\ker(\mathcal E^\dagger),
\qquad \forall a,\\
& U_b-V_b-N_b\in\ker(\mathcal E^\dagger),
\qquad \forall b,\\
& s\ge1.
\end{aligned}
\end{equation}
\end{widetext}

\begin{proposition}
The optimal value of Eq.~\eqref{eq:robustness_sdp} equals $R_c(M,N\mid\mathcal E)$, and the optimum is attained.
\end{proposition}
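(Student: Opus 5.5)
The plan is to prove that the feasible values of $s$ in Eq.~\eqref{eq:robustness_sdp} are exactly the numbers $1+t$ with $t$ feasible in Definition~\ref{def:robustness}. Once the feasible sets coincide, equality of optima is immediate, and attainment follows from the compactness argument already given after that definition.

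\emph{Step 1: from robustness feasibility to SDP feasibility.} Suppose $t$ is feasible, with noise POVMs $(P,Q)$. Then $(M^{(t)},N^{(t)})$ is concealed, so by Theorem~\ref{thm:necessary_sufficient} there are compatible $(F,G)$ with joint POVM $\{J_{ab}\}$ such that $F_a-M^{(t)}_a,\;G_b-N^{(t)}_b\in\ker(\mathcal E^\dagger)$. I will set
\[
s=1+t,\quad \widetilde J_{ab}=sJ_{ab},\quad X_a=sF_a,\quad U_b=sG_b,\quad Y_a=tP_a,\quad V_b=tQ_b .
\]
Positivity and the normalizations $sI$ and $(s-1)I$ are immediate. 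The kernel constraint holds because
\[
X_a-Y_a-M_a=s\bigl(F_a-M^{(t)}_a\bigr),
\]
and $\ker(\mathcal E^\dagger)$ is a linear subspace. The $U,V$ constraint is checked the same way.

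\emph{Step 2: from SDP feasibility back to robustness feasibility.} Given a feasible point with $s>1$, I will set $t=s-1$, $J_{ab}=\widetilde J_{ab}/s$, $F_a=X_a/s$, $G_b=U_b/s$, $P_a=Y_a/t$, and $Q_b=V_b/t$. These are a joint POVM, compatible marginals, and noise POVMs. The kernel constraint then says that $F_a-M^{(t)}_a=(X_a-Y_a-M_a)/s$ lies in the kernel. Hence $(M^{(t)},N^{(t)})$ has compatible representatives and is concealed by Theorem~\ref{thm:necessary_sufficient}.

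\emph{The edge case $s=1$ is the main obstacle.} Here $Y_a\succeq0$ together with $\sum_aY_a=0$ forces $Y=0$, and likewise $V=0$. So $(X,U)$ is itself a compatible representative of $(M,N)$, meaning $t=0$ is feasible, and any noise POVMs may be chosen since they enter with weight $0$.

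\emph{Step 3: conclusion.} The two maps of Steps 1 and 2 preserve the objective, since $s-1=t$, so the optimal values coincide. For attainment, I will take the minimizer $t$ that exists by the compactness argument after Definition~\ref{def:robustness}. Mapping it through Step 1 gives an SDP point achieving $s-1=R_c$. I will also remark that the kernel constraints are linear equalities, since one can impose $\mathcal E^\dagger(X_a-Y_a-M_a)=0$. Together with the semidefinite constraints, this makes the program a genuine SDP.
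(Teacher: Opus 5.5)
Your proposal is correct and follows the paper's proof essentially step for step. It uses the same rescalings $(\widetilde J,X,U,Y,V)=(sJ,sF,sG,tP,tQ)$ and their inverses, the same observation that the kernel constraints transform by the factor $s$, and the same separate treatment of $s=1$ via $Y=V=0$. The only minor difference is attainment: the paper bounds all SDP variables on $1\le s\le s_0$ and uses compactness of the SDP feasible set directly, whereas you carry the minimizer of $R_c$ through your Step~1; that minimizer exists because $\mathcal C_{\mathcal E}$ is closed (Appendix~\ref{app:convexity_compactness}), so both routes are valid.
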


\begin{proof}
Suppose first that $t\ge0$ is feasible in Definition~\ref{def:robustness}. Thus there are noise POVMs $P=\{P_a\}$, $Q=\{Q_b\}$, compatible representatives $F=\{F_a\}$, $G=\{G_b\}$, and a joint POVM $J=\{J_{ab}\}$ such that, with $s=1+t$,
\[
\begin{aligned}
\mathcal E^\dagger\!\left(
F_a-\frac{M_a+tP_a}{s}
\right)&=0,\\
\mathcal E^\dagger\!\left(
G_b-\frac{N_b+tQ_b}{s}
\right)&=0.
\end{aligned}
\]
Define
\begin{align*}
\widetilde J_{ab}&=sJ_{ab},&
X_a&=sF_a,&
U_b&=sG_b,\\
Y_a&=tP_a,&
V_b&=tQ_b.
\end{align*}
All positivity and normalization constraints follow immediately, and the operational-equivalence relations become
\[
X_a-Y_a-M_a\in\ker(\mathcal E^\dagger),\qquad
U_b-V_b-N_b\in\ker(\mathcal E^\dagger).
\]
Hence every feasible robustness construction gives an SDP point with objective $s-1=t$.

Conversely, let $(s,\widetilde J,X,Y,U,V)$ be feasible for Eq.~\eqref{eq:robustness_sdp}. If $s>1$, define
\begin{align*}
J_{ab}&=\frac{\widetilde J_{ab}}{s},&
F_a&=\frac{X_a}{s},&
G_b&=\frac{U_b}{s},\\
P_a&=\frac{Y_a}{s-1},&
Q_b&=\frac{V_b}{s-1}.
\end{align*}
Then $J$ is a joint POVM with marginals $F,G$, while $P,Q$ are POVMs. Dividing the kernel constraints by $s$ gives
\[
F_a-\frac{M_a+(s-1)P_a}{s}\in\ker(\mathcal E^\dagger),
\]
and analogously for $G_b$. Therefore the noisy pair with parameter $t=s-1$ is concealed.

If $s=1$, positivity together with $\sum_aY_a=\sum_bV_b=0$ implies $Y_a=0$ and $V_b=0$ for every outcome. The operators $J_{ab}=\widetilde J_{ab}$ form a normalized joint POVM with marginals $F_a=X_a$ and $G_b=U_b$, and
\[
F_a-M_a\in\ker(\mathcal E^\dagger),\qquad
G_b-N_b\in\ker(\mathcal E^\dagger).
\]
Thus the original pair is already concealed and $t=0$ is feasible. This treats the endpoint without dividing by $s-1$.

The two constructions preserve the objective and prove equality of the optimal values.

For attainment, the feasible set is nonempty by the finite uniform-noise construction given after Definition~\ref{def:robustness}. Let $s_0<\infty$ be the value obtained from that construction. A minimizing sequence may then be restricted to $1\le s\le s_0$. Since
\[
\sum_aX_a=\sum_bU_b=sI,
\qquad
\sum_aY_a=\sum_bV_b=(s-1)I,
\]
positivity implies
\[
0\preceq X_a,U_b,\widetilde J_{ab}\preceq s_0I,
\qquad
0\preceq Y_a,V_b\preceq(s_0-1)I.
\]
Thus all variables lie in a bounded set. The positivity, normalization, marginal, and kernel constraints are closed, so the restricted feasible set is compact. The continuous objective $s-1$ therefore attains its minimum.
\end{proof}

\section{Technical proofs}
\label{app:technical}

\subsection{Phase-damping derivation}

For the qubit phase-damping channel with rate $\gamma>0$, satisfying $\mathcal E_t^\dagger(\sigma_x)=e^{-\gamma t}\sigma_x$, $\mathcal E_t^\dagger(\sigma_y)=e^{-\gamma t}\sigma_y$, $\mathcal E_t^\dagger(\sigma_z)=\sigma_z$, effective compatibility of planar unbiased binary POVMs with Bloch vectors $\mathbf m,\mathbf n$ in the $(x,y)$ plane occurs iff
\[
e^{-\gamma t}(|\mathbf m+\mathbf n|+|\mathbf m-\mathbf n|)\le2.
\]
For sharp orthogonal measurements in the $(x,y)$ plane, $|\mathbf m+\mathbf n|=|\mathbf m-\mathbf n|=\sqrt2$, yielding $2\sqrt2 e^{-\gamma t}\le2$, hence $t_c=(\log 2)/(2\gamma)$.

\subsection{Singular-value theorem proof}

Let $\mathcal E$ be a unital qubit channel with Bloch matrix $T$ and singular values $\sigma_1\ge\sigma_2\ge\sigma_3$. For a fixed pair of unbiased binary qubit POVMs with Bloch vectors $\mathbf m,\mathbf n$, the effective Bloch vectors are $T^{\mathsf T}\mathbf m$ and $T^{\mathsf T}\mathbf n$. The joint-measurability criterion gives
\[
|T^{\mathsf T}\mathbf m+T^{\mathsf T}\mathbf n|+|T^{\mathsf T}\mathbf m-T^{\mathsf T}\mathbf n|\le2.
\]

For sharp orthogonal pairs, define
\[
\mathbf u=\frac{\mathbf m+\mathbf n}{\sqrt2},\qquad
\mathbf v=\frac{\mathbf m-\mathbf n}{\sqrt2}.
\]
Then $\mathbf u,\mathbf v$ are orthonormal, and
\begin{align*}
&|T^{\mathsf T}(\mathbf m+\mathbf n)|
 +|T^{\mathsf T}(\mathbf m-\mathbf n)|\\
&\quad=
\sqrt2\left(
|T^{\mathsf T}\mathbf u|
+
|T^{\mathsf T}\mathbf v|
\right)\\
&\quad\le
2\sqrt{
|T^{\mathsf T}\mathbf u|^2
+
|T^{\mathsf T}\mathbf v|^2
}\\
&\quad\le
2\sqrt{\sigma_1^2+\sigma_2^2}.
\end{align*}
The upper bound is attained by choosing
$\mathbf m$ and $\mathbf n$ as the two leading left singular
vectors of $T$. For this choice, the transformed sum and
difference have equal norm, and the resulting value is
$2\sqrt{\sigma_1^2+\sigma_2^2}$.
Hence every sharp orthogonal pair is effectively compatible iff $\sigma_1^2+\sigma_2^2\le1$.

\bibliography{references}

\end{document}